\newcommand{\R}{{\mathbb{R}}}
\newcommand{\C}{{\mathbb{C}}}
\newcommand{\N}{\mathbb{ N}}
\newcommand{\Z}{\mathbb{ Z}}
\newcommand{\T}{\mathbb{ T}}
\newcommand{\leb}{\lambda\raisebox{0,06cm}{\!\!\!\text{\scriptsize$\setminus$}}\,}
\newcommand{\intpar}[4]{\int\limits_{#1}^{#2}#3~\mathrm{d}#4}
\newtheorem{theorem}{Theorem}[subsection]
\newtheorem{lemma}[theorem]{Lemma}
\newtheorem{proposition}[theorem]{Proposition}
\newtheorem{definition}[theorem]{Definition}
\newtheorem{corollary}[theorem]{Corollary}
\newtheorem{example}[theorem]{Example}
\newtheorem{remark}[theorem]{Remark}
\begin{document}
	
	\title{Sparse Recovery from Group Orbits}
	\author{Timm Gilles, Hartmut Führ}
	\email{gilles@mathga.rwth-aachen.de, fuehr@mathga.rwth-aachen.de}
	\address{Lehrstuhl f\"ur Geometrie und Analysis, RWTH Aachen University, D-52056 Aachen, Germany}
	
	\maketitle
	\begin{abstract}
        While most existing sparse recovery results allow only minimal structure within the measurement scheme, many  practical problems possess significant structure. To address this gap, we present a framework for structured measurements that are generated by random orbits of a group representation associated with a finite group. We differentiate between two scenarios: one in which the sampling set is fixed and another in which the sampling set is randomized. For each case, we derive an estimate for the number of measurements required to ensure that the restricted isometry property holds with high probability. These estimates are contingent upon the specific representation employed. For this reason, we analyze and characterize various representations that yield favorable recovery outcomes, including the left regular representation. Our work not only establishes a comprehensive framework for sparse recovery of group-structured measurements but also generalizes established measurement schemes, such as those derived from partial random circulant matrices.
	\end{abstract}

	\vspace{0.4cm}
	
	\noindent {\small {\bf Keywords:} compressive sensing, restricted isometry property, structured random matrices, group representations}
	
	\noindent{\small {\bf MSC-class:} {\em Primary:} 15B52, 20C35, 94A12. {\em Secondary:} 60G50, 60G70.}
	
\section{Introduction}

\subsection{Sparse Recovery}
	
	Sparse Recovery \cite{candes2006robust,donoho2006compressed}, also known as Compressed Sensing, is a technique in signal processing to recover a sparse signal via an efficient algorithm from fewer measurements than traditional methods would use. To be mathematically precise: Let $\Phi=(\phi_1,\ldots,\phi_m)^{\ast}\in\C^{m\times n}$ be some matrix, also called measurement matrix, and let $y_1,\ldots,y_m\in\C$ be given measurements that arise from a known linear measurement process \[y_j=\langle x,\phi_j\rangle\]
	of some signal $x\in\C^n$. This can also be written as $y=\Phi x$. The goal of sparse recovery is to reconstruct the signal $x$ from $m\ll n$ measurements. Clearly, this is not possible in general and consequently one needs another condition to make the recovery problem well-posed. The condition required is the a priori knowledge that $x$ is $s$-sparse, i.e. $x$ has at most $s$ non-zero entries.\par
	One could try solving the sparse recovery problem via the minimization problem
	\begin{align}\label{eq:l1_min_problem}
	\min \Vert z\Vert_0\quad\text{subject to} \,\,\Phi z=y 
	\end{align}
	where $\Vert x\Vert_0:=\vert\{ j\in\{1,\ldots, n\}\mid x_j\neq 0\}\vert$. Unfortunately, this is NP-hard in general \cite{natarajan1995sparse}. Nevertheless, it has been shown that if one imposes some additional conditions on the measurement matrix $\Phi$, then $\ell^1$-minimization
	\[
	\min \Vert z\Vert_1\quad\text{subject to} \,\,\Phi z=y 
	\]
	reconstructs every $s$-sparse vector $x$ uniquely from $\Phi x=y$ \cite{foucart2013mathematical}. If this reconstruction holds, we say that $\Phi$ does $s$-sparse recovery. There are also many other algorithms that enable recovery under these conditions (e.g. OMP, CoSaMP or IHT \cite{tropp2007signal, needell2009cosamp,blumensath2009iterative}).
	Various such conditions on the measurement matrix, such as the null space property, a small coherence of $\Phi$ or the restriced isometry property (RIP), have been studied \cite{tropp2004greed, candes2005decoding}. All of these can guarantee recovery via $\ell^1$-minimization \cite{foucart2013mathematical}. In this work, we will primarily use the RIP. For a matrix $\Phi\in\C^{m \times n}$ and $1\leq s\leq n$, the restricted isometry constant $\delta_s$ is the smallest $\delta\geq 0$ such that
	\[
	(1-\delta)\Vert x\Vert_2^2\leq \Vert \Phi x\Vert_2^2 \leq(1+\delta) \Vert x\Vert_2^2 
	\]
for all $s$-sparse vectors $x\in\C^n$. If $\delta_{2s}<0.4931$, then $\ell^1$-minimization reconstructs every $s$-sparse vector $x$ uniquely from $\Phi x=y$ \cite{mo2011new}.\par
	One of the main questions arising in the field of compressed sensing is determining which types of measurement matrices are suited best for recovery, e.g. by fulfilling the restricted isometry property. Most known results use fully random measurement matrices $\Phi$, since these are the only ones for which optimal measurement bounds have been proven. By ``optimal bound'' we refer to the minimum number of measurements necessary to enable stable $s$-sparse reconstruction through the minimization problem (\ref{eq:l1_min_problem}). This optimal bound \cite{foucart2010gelfand,foucart2013mathematical} is given by
	\begin{align}\label{eq:optimal_bound_on_m}
	m\gtrsim s \ln\left(\frac{en}{s}\right).
	\end{align}
	In practice one generally does not have completely randomized measurements and consequently one is interested in reducing the degree of randomness in the measurements and allowing some kind of structure.\par 
	In this work, we address this problem by establishing a comprehensive framework for sparse recovery of group-structured measurement schemes where each measurement arises as a random orbit. This drastically reduces the degree of randomness in the measurement scheme and can hence be understood as a partial derandomization of the measurement process.\par 
	As a special case this setup contains partial random circulant matrices which arise in different applications such as system identification and active imaging \cite{haupt2010toeplitz,romberg2009compressive} and which have been studied in \cite{romberg2009compressive, rauhut2012restricted, krahmer2014suprema}.
	
\subsection{Recovery from group orbits: The setup}

	We allow the vector $x$, we want to recover, to be sparse in some known orthonormal basis $B=(b_1\mid\ldots\mid b_{n}) \in\C^{n \times n}$, i.e. \[x=B z\] and $z\in\C^n$ is $s$-sparse. This generalizes the problem described above; choosing $B=I_n$ yields the classical sparse recovery setup.\par
	We will study measurements that are generated by the inner product of the unknown vector $x$ and some element of a group orbit. More precisely, let $G$ be some finite group and $\pi \colon G\to\text{GL}(\C^n)$ a unitary (projective) representation (see Section \ref{subsec:intro_group_rep} for more information on group representations). Further, let $\xi\in \C^n$ be some vector, which we will call generating vector and which usually will be randomized in our results.	We assume that we can measure the signal $x$ via the inner product
	$\langle x,\pi(g)\xi\rangle$ for any $g\in G$. However, only the measurements at some sampling points $\omega_1,\ldots,\omega_m$ are known. Hence, our sampled measurements are given by
	\begin{equation*}
		y_g=\langle x,\pi(g)\xi\rangle
	\end{equation*}  
	for $g\in\Omega=\{\omega_1,\ldots,\omega_m\}$. We differentiate between two approaches of choosing the sampling sets:
	\begin{enumerate}[(i)]
		\item\label{item:fixed_sampling_sets} Depending on the representation $\pi$ one fixes a set of possible sampling sets $\widetilde{\Omega}\subseteq \mathcal{P}(G)$. Then, one chooses a sampling set 
		$\Omega=\{\omega_1,\ldots,\omega_m\}\in\widetilde{\Omega}$.
		\item\label{item_random_omega} The sampling points $\omega_1,\ldots,\omega_m$ are selected independently and uniformly at random from $G$.
	\end{enumerate}
	The reasoning of this distinction and its advantages will become more apparent in the next section when we discuss the main results of this work. Independent of how $\Omega=\{\omega_1,\ldots,\omega_m$\} is chosen, the measurement matrix of the described problem can be written as
	\begin{align}\label{eq:def_phi_matrix}
		\Phi_{\pi}=\frac{1}{\sqrt{m}}\, R_{\Omega}\big(\pi(g) \xi\big) ^{\ast}_{g\in G}B\in \C^{m\times n}
	\end{align}
	where $R_{\Omega}\colon \C^G\mapsto \C^m$ restricts a vector to its entries in $\Omega$, i.e. $(R_{\Omega}y)_l=y(\omega_l)$. Here, the factor $\frac{1}{\sqrt{m}}$ is needed for normalization reasons. When the choice of $\pi$ is apparent from the context, we will just write $\Phi$. In this work, we examine which groups and representations can be utilized to construct measurement matrices of the type of (\ref{eq:def_phi_matrix}) and that do $s$-sparse recovery. By that we mean that we are interested in results of the following form: For $\pi$, let $\Phi_{\pi}$ be the measurement matrix associated with $\pi$ as defined in (\ref{eq:def_phi_matrix}). If $m\gtrsim f_{\pi}(s,n)$ holds for a suitable function $f_{\pi}$, then $\Phi_{\pi}$ does $s$-sparse recovery with high probability. The function $f_{\pi}$ quantifies the impact of the selected representation on the number of measurements required. It is particularly interesting to compare our derived bounds with the optimal bound (\ref{eq:optimal_bound_on_m}) to understand the trade-off involved in reducing randomness within the measurement scheme.\par
	To compare the recovery properties of different representations, we say that two representations $\pi$ and $\rho$ have the same $s$-sparse recovery properties if there exist $\xi_{\pi}$ and $\xi_{\rho}$ as well as a unitary basis $B$ such that the associated matrices $\Phi_{\pi}$ and $\Phi_{\rho}$ do $s$-sparse recovery (with high probability). A natural follow-up question is whether unitarily equivalent representations (i.e. different realizations of one representation) have the same $s$-sparse recovery property. Note that for $\pi$ and $\rho$ unitarily equivalent there exists a unitary matrix $V\in\C^{n\times n}$ such that $\pi(g)=V^{\ast} \rho(g)V$ for all $g\in G$. Fix some unitary matrix $B\in\C^{n\times n}$. Then,
	\begin{align}\label{eq:relation_meas_mat_equiv_rep}
		\langle Bx,\pi(g)\xi\rangle=\langle VBx,\rho(g)V\xi\rangle.
	\end{align}
	Hence, \[\frac{1}{\sqrt{m}} R_{\Omega} \big(\rho(g)V\xi\big)_{g\in G}^{\ast} VB=\frac{1}{\sqrt{m}} R_{\Omega} \big(\pi(g)\xi\big)_{g\in G}^{\ast} B\]
	and both matrices are of the type (\ref{eq:def_phi_matrix}). But even if one of the matrices does $s$-sparse recovery, this does not imply that $\pi$ and $\rho$ have the same $s$-sparse recovery properties because the considered bases are different. In this work, we aim to give an answer to the question under which assumptions unitarily equivalent representations have the same $s$-sparse recovery properties.

\subsection{Overview of the main results}\label{sec:overview_results}

	Our first main result establishes the restricted isometry property for a random subgaussian generating vector $\xi$ and fixed sampling sets as in (\ref{item:fixed_sampling_sets}). Ideally, one would hope that any sufficiently large sampling set would suffice to establish the RIP, implying that the only relevant characteristic of a sampling set is its size. This is, for example, the case for measurements from partial random circulant matrices \cite{krahmer2014suprema}. However, we have observed that different sampling sets yield varying performance across certain representations. This motivated the choice of sampling points in (\ref{item:fixed_sampling_sets}): We restrict our sampling set $\Omega=\{\omega_1,\ldots,\omega_m\}$ to be an element of a specific subset of $\mathcal{P}(G)$, denoted as $\widetilde{\Omega}$. This set $\widetilde{\Omega}$ can be thought of as the set of possible sampling sets. Then, our result states that every sufficiently large sampling set $\Omega\in\widetilde{\Omega}$ establishes the RIP. Hence, the only relevant characteristic of a sampling set within $\widetilde{\Omega}$ is its size. This more refined approach of formulating a recovery result, in contrast to most literature, allows us to derive bounds on the number of measurements for representations where sparse recovery would otherwise not be feasible. We will elaborate on this point further after presenting our first result. Note that we further assume that $x$ is sparse in the standard basis, i.e., $B=I_n$.
	\begin{theorem}\label{mainresult1}
			Let $\pi$ be a unitary (projective) representation. Let $\xi$ be a random vector with independent, mean 0, variance 1, and $L$-subgaussian entries. Fix $\widetilde{\Omega}\subseteq \mathcal{P}(G)$ and let  $C_{\widetilde{\Omega},\pi}>0$ be a constant such that
			\begin{align}\label{main_annahme}
				\sup_{1\leq j\leq n} \Big\Vert  R_{\Omega}\big(\pi(g) y\big) ^{\ast}_{g\in G} e_j 	\Big\Vert_{2} \leq \sqrt{C_{\widetilde{\Omega},\pi}} \Vert y\Vert_2
			\end{align}
			holds for all $y\in\C^n$ and $\Omega\in \widetilde{\Omega}$. Then, for all sampling sets $\Omega\in\widetilde{\Omega}$ it holds: If, for $s\leq n$ and $\delta,\eta \in(0,1)$,
			\begin{align}\label{gl:absm}
				m=\vert \Omega\vert \geq c\delta^{-2} s \,C_{\widetilde{\Omega},\pi} \, \max\Big\{(\ln(s\, C_{\widetilde{\Omega},\pi}))^2 ((\ln n)(\ln 4n),\,\ln(\eta^{-1})\Big\},
			\end{align}
			then with probability at least $1-2\eta$, the restricted isometry constant of
			\begin{equation*}
				\Phi=\frac{1}{\sqrt{m}}\, R_{\Omega}\big(\pi(g) \xi\big) ^{\ast}_{g\in G}\in \C^{m\times n}
			\end{equation*}
			satisfies $\delta_s\leq \delta$. Here, $c>0$ only depends on $L$.
		\end{theorem}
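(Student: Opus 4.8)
The plan is to recognise $\delta_s$ as the supremum of a centred second-order chaos process in $\xi$ and to estimate it with the Krahmer--Mendelson--Rauhut bound for suprema of chaos processes \cite{krahmer2014suprema}. First I would compute the mean: writing $\langle x,\pi(g)\xi\rangle=\langle\pi(g)^{\ast}x,\xi\rangle$ and using unitarity of $\pi(g)$ together with the isotropy of $\xi$ (independent entries, mean $0$, variance $1$), one gets $\mathbb{E}\Vert\Phi x\Vert_2^2=\frac1m\sum_{g\in\Omega}\Vert\pi(g)^{\ast}x\Vert_2^2=\Vert x\Vert_2^2$ for every $x$. Hence, writing $D_{s,n}=\{x\in\C^n:\Vert x\Vert_0\le s,\ \Vert x\Vert_2=1\}$, the restricted isometry constant is exactly $\delta_s=\sup_{x\in D_{s,n}}\big|\Vert\Phi x\Vert_2^2-\mathbb{E}\Vert\Phi x\Vert_2^2\big|$. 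Since $(\xi,x)\mapsto\Phi x$ is bilinear, for each $x$ there is a matrix $V_x$ depending linearly on $x$ with $\Phi x=V_x\xi$; I would then set $\mathcal{A}=\{V_x:x\in D_{s,n}\}$, so that $\delta_s=\sup_{A\in\mathcal{A}}\big|\Vert A\xi\Vert_2^2-\mathbb{E}\Vert A\xi\Vert_2^2\big|$. For complex $\xi$ one reduces to the real subgaussian setting by separating real and imaginary parts, and the projective cocycle of $\pi$ only contributes unimodular phases and is harmless throughout.

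The chaos bound controls this supremum through three geometric parameters of $\mathcal{A}$: the radii $d_F(\mathcal{A})=\sup_A\Vert A\Vert_F$ and $d_{2\to2}(\mathcal{A})=\sup_A\Vert A\Vert_{2\to2}$, and Talagrand's functional $\gamma_2(\mathcal{A},\Vert\cdot\Vert_{2\to2})$. From $\mathbb{E}\Vert V_x\xi\Vert_2^2=\Vert V_x\Vert_F^2=\Vert x\Vert_2^2$ one reads off $d_F(\mathcal{A})=1$. The decisive step, and the only place where hypothesis (\ref{main_annahme}) enters, is the operator-norm radius. For a unit vector $\xi'$ I would expand $\Vert V_x\xi'\Vert_2^2=\frac1m\sum_{g\in\Omega}\big|\sum_i x_i\,\overline{(\pi(g)\xi')_i}\big|^2$, apply the triangle inequality and Cauchy--Schwarz in $g$ for each pair of coordinates, and then invoke (\ref{main_annahme}) in the form $\sum_{g\in\Omega}|(\pi(g)\xi')_i|^2\le C_{\widetilde\Omega,\pi}\Vert\xi'\Vert_2^2$. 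This yields the $\ell^1$--operator-norm estimate $\Vert V_x\Vert_{2\to2}\le\sqrt{C_{\widetilde\Omega,\pi}/m}\,\Vert x\Vert_1$, valid for all $x$; by linearity it also gives $\Vert V_x-V_{x'}\Vert_{2\to2}\le\sqrt{C_{\widetilde\Omega,\pi}/m}\,\Vert x-x'\Vert_1$. Restricting to $D_{s,n}$ and using $\Vert x\Vert_1\le\sqrt{s}\,\Vert x\Vert_2$ gives $d_{2\to2}(\mathcal{A})\le\sqrt{sC_{\widetilde\Omega,\pi}/m}$. This is the exact analogue of the bound $\sqrt{s/m}$ for partial random circulant matrices, with $C_{\widetilde\Omega,\pi}$ quantifying the loss incurred by the representation.

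The heart of the argument, and the step I expect to be the main obstacle, is the estimate for $\gamma_2(\mathcal{A},\Vert\cdot\Vert_{2\to2})$. Via Dudley's inequality this reduces to bounding the covering numbers $N(\mathcal{A},\Vert\cdot\Vert_{2\to2},u)$, and since $x\mapsto V_x$ is linear it suffices to cover $D_{s,n}$ in the seminorm $x\mapsto\Vert V_x\Vert_{2\to2}$. I would run the standard two-scale chaining: at fine scales the $\ell^1$--operator-norm bound from the previous paragraph feeds Maurey's empirical method to produce covering estimates whose logarithm scales like $C_{\widetilde\Omega,\pi}u^{-2}\ln n$, while at coarse scales a volumetric estimate over the $\binom{n}{s}$ possible supports contributes a factor $s\ln(en/s)$. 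Splitting Dudley's integral at the scale $d_{2\to2}(\mathcal{A})$ and carefully tracking the interplay of the two regimes yields $\gamma_2(\mathcal{A},\Vert\cdot\Vert_{2\to2})\lesssim\sqrt{sC_{\widetilde\Omega,\pi}/m}\,\ln(sC_{\widetilde\Omega,\pi})\sqrt{(\ln n)(\ln 4n)}$. Keeping the precise logarithmic powers and the dependence on $C_{\widetilde\Omega,\pi}$ honest is the delicate part.

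Finally I would assemble the pieces. The chaos bound gives $\mathbb{E}\delta_s\lesssim\gamma_2^2+\gamma_2 d_F+d_F d_{2\to2}$; inserting $d_F=1$ together with the estimates above, the dominant contribution is $\gamma_2^2\lesssim\frac{sC_{\widetilde\Omega,\pi}}{m}(\ln(sC_{\widetilde\Omega,\pi}))^2(\ln n)(\ln 4n)$, and requiring this to be at most a fixed multiple of $\delta^2$ produces the first term in the lower bound (\ref{gl:absm}) on $m$. The accompanying subgaussian tail bound of \cite{krahmer2014suprema}, whose deviation parameters are $d_{2\to2}(\gamma_2+d_F)$ and $d_{2\to2}^2$, then shows $\delta_s\le\delta$ with probability at least $1-2\eta$ once $m$ additionally dominates $\delta^{-2}sC_{\widetilde\Omega,\pi}\ln(\eta^{-1})$, which is the second term in the maximum. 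The constant $c$ absorbs the subgaussian parameter $L$ throughout.
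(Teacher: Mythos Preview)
Your approach is correct and matches the paper's proof: chaos-process identification $\delta_s=\sup_{A\in\mathcal A}\big|\Vert A\xi\Vert_2^2-\mathbb E\Vert A\xi\Vert_2^2\big|$, $d_F(\mathcal A)=1$ from unitarity, the $\ell^1$--operator-norm bound $\Vert V_x\Vert_{2\to2}\le\sqrt{C_{\widetilde\Omega,\pi}/m}\,\Vert x\Vert_1$ extracted from (\ref{main_annahme}), Dudley's integral split into a Maurey regime and a volumetric regime, and the Krahmer--Mendelson--Rauhut concentration inequality to finish. One descriptive slip worth flagging: you have the two covering regimes swapped --- Maurey's empirical method gives a $u^{-2}$ bound and is therefore used on the \emph{large}-scale range $u\in[1/\sqrt m,\,d_{2\to2}(\mathcal A)]$, while the volumetric estimate (with its integrable $\ln(1/u)$ behaviour) covers the \emph{small}-scale range $u\in[0,1/\sqrt m]$; the split point is $1/\sqrt m$, not $d_{2\to2}(\mathcal A)$.
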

		 
		Let us further discuss the set $\widetilde{\Omega}$ and the constant $C_{\widetilde{\Omega},\pi}$ which appear in Theorem \ref{mainresult1}. The constant depends on the set $\widetilde{\Omega}\subseteq\mathcal{P}(G)$ of possible sampling sets as well as the representation $\pi$. Its presence in the theorem arises from the need to bound the term
		\begin{align*}
			\sup_{1\leq j\leq n}\Big\Vert R_{\Omega}\big(\pi(g) y\big) ^{\ast}_{g\in G} e_j 	\Big\Vert_{2}
		\end{align*}
		in terms of $\Vert y\Vert_2$ within the proof of Theorem \ref{mainresult1}. The existence of a constant $C_{\widetilde{\Omega},\pi}>0$ such that
		\begin{align*}
			\sup_{1\leq j\leq n} \Big\Vert R_{\Omega}\big(\pi(g) y\big) ^{\ast}_{g\in G} e_j 	\Big\Vert_{2} \leq \sqrt{C_{\widetilde{\Omega},\pi}} \Vert y\Vert_2
		\end{align*}
		holds for all $\Omega\in\widetilde{\Omega}$ is obvious, since we are considering only finite dimensional spaces. However, what is particularly interesting is the size of $C_{\widetilde{\Omega},\pi}$. This constant essentially encapsulates how effectively the measurement matrix $\Phi$ of a given representation performs sparse recovery across any of the sampling sets $\Omega\in\widetilde{\Omega}$, since the number of needed measurements depends linearly on it.\par
		Ideally one would wish for representations such that the associated measurement matrices do $s$-sparse recovery for any sampling set $\Omega\subseteq G$ of sufficient size. Translating this wish into the context of the constant $C_{\widetilde{\Omega},\pi}$, it would imply two key points. Firstly, $\widetilde{\Omega}=\mathcal{P}(G)$, as this ensures that any sampling set that is sufficiently large (quantified by (\ref{gl:absm})) performs equally well or poorly. Secondly,  $C_{\widetilde{\Omega},\pi}$ should be close or equal to $1$ indicating that the bound in (\ref{gl:absm}) is near optimal. Comparing the bound in (\ref{gl:absm}) for $C_{\widetilde{\Omega},\pi}=1$ with the optimal bound in (\ref{eq:optimal_bound_on_m}) shows that we are off by some logarithmic factors. This appears to be the price one must pay when imposing structure on the measurement scheme and is consistent with established bounds for structured measurements in the literature \cite{krahmer2014suprema,rudelson2008sparse}. That $1$ is the smallest possible value for $C_{\widetilde{\Omega},\pi}$ is easy to see: For any non-empty sampling set $\Omega\in\widetilde{\Omega}$ and $1\leq j\leq n$ we can choose $g_0\in\Omega$ and $y=\pi(g_0)^{-1}e_j$. Then,
		\[
		\Big\Vert R_{\Omega}\big(\pi(g) y\big) ^{\ast}_{g\in G} e_j \Big\Vert_{2}=\sum_{g\in\Omega}\vert \langle e_j,\pi(g)y\rangle\vert^2=1+\sum_{g\in\Omega\setminus\{g_0\}}\vert\langle e_j,\pi(g)y\rangle\vert^2\geq 1.
		\]
		A natural follow-up question is: \textit{1) For what representations does $C_{\mathcal{P}(G),\pi}=1$ hold?} For most representations, the requirement $C_{\mathcal{P}(G),\pi}=1$ is too strict. For example, consider the trivial representation $\pi=I_n$. In this case, the constant $C_{\mathcal{P}(G),I_n}$ is given by $\vert G\vert$ since it holds
		\begin{align*}
			\Big\Vert R_{\Omega}\big(I_n y\big) ^{\ast}_{g\in G} e_j 	\Big\Vert_{2}^2=\sum_{g\in\Omega} \vert \langle e_j,y\rangle\vert^2=\vert \Omega\vert \vert y_j\vert^2
		\end{align*}
		for all $y\in\C^n$, all canonical vectors $e_j\in\C^n$ and all
 		$\Omega\in\mathcal{P}(G)$. Hence, the smallest possible choice for $C_{\mathcal{P}(G),I_n}$ is $\vert G\vert$. This makes the inequality on the number of measurements given in (\ref{gl:absm}) unsatisfiable. In Section \ref{subsect:what_to_expect}, we will show that this is no coincidence by proving that $s$-sparse recovery is simply not possible for the measurement matrix that is associated with the trivial representation. However, one can ask a weaker version of the first question: \textit{2) What representations admit a constant $C_{\widetilde{\Omega},\pi}$ that is close or equal to 1 for a large set of possible sampling sets $\widetilde{\Omega}\subseteq \mathcal{P}(G)$?} The set $\widetilde{\Omega}$ should be chosen to be as large as possible while still ensuring that $C_{\widetilde{\Omega},\pi}$ remains small. It is also important for $\widetilde{\Omega}$ to include sampling sets of various sizes so that Theorem \ref{mainresult1} is applicable, as it requires $\Omega\in\widetilde{\Omega}$ to be of sufficient size. These are the most important aspects to consider when choosing $\widetilde{\Omega}$. These two group-theoretic questions will be primarily discussed in Section \ref{sect:fix_zuf}. Only through this discussion can Theorem \ref{mainresult1} be fully understood.\par
		In Section \ref{subsec:constant_analysis_ind_of_omega}, we will show that there exist representations, such as the left regular representation, for which $C_{\mathcal{P}(G),\pi}=1$ holds. We will also observe that measurement matrices associated with representations that are, in a sense, more closely aligned with the left regular representation exhibit better recovery properties. These first two results will hold for $\widetilde{\Omega}=\mathcal{P}(G)$, meaning that in these cases all sufficiently large sampling sets, quantified by (\ref{gl:absm}), perform equally well. These results also give an answer to our first question.\par
		The second question will be primarily discussed in Section \ref{subsec:constant_analysis_specific_omega}. There, we will consider the case where we no longer require that $\widetilde{\Omega}=\mathcal{P}(G)$. This means that we a priori restrict our sampling sets $\Omega$ to be elements of a fixed subset $\widetilde{\Omega}\subseteq \mathcal{P}(G)$. This is motivated by the observation that for some representations there exist specific choices of sampling sets $\Omega$ which are highly redundant. As a result, $C_{\widetilde{\Omega},\pi}$ is of order $n$, which in turn gives a bad bound on the number of measurements in (\ref{gl:absm}). However, by excluding the highly redundant sampling sets, one can hope for improved bounds. We show that for a specific class of representations, one can a priori determine a subset $\widetilde{\Omega}\subseteq \mathcal{P}(G)$ such that $C_{\widetilde{\Omega},\pi}=1$ holds for every $\Omega\in \widetilde{\Omega}$.\par
		Furthermore, we show in Section \ref{subsec:constant_analysis_ind_of_omega} that the size of $C_{\widetilde{\Omega},\pi}$ depends on the realization of the representation. This implies that in the setting of Theorem \ref{mainresult1} unitarily equivalent representations do not have the same $s$-sparse recovery properties. Conversely, representation theory examines properties of representations that remain invariant under unitary basis transformations. This motivates the question of whether the measurement process can be adapted such that equivalent representations have the same $s$-sparse recovery properties. Our second main result answers this question. The key modification is that the sampling set $\Omega$ is randomized, i.e. we are in the setting of (\ref{item_random_omega}). This allows us to establish the RIP for a measurement scheme in which the vector $x$ can be $s$-sparse in any basis $B$.

	\begin{theorem}\label{mainresult2_low_key}
	Let $\pi$ be a unitary representation given in block-diagonal form with $m_{\pi}(\rho)=1$ for all irreducible subrepresentations $\rho\leq \pi$ (see Section \ref{subsec:intro_group_rep}). Let $\xi\in\C^n$ be a random vector with independent entries that each are uniformly distributed on the torus $\mathbb{T}$. Let the measurement matrix be
	\[
	\Phi=\frac{1}{\sqrt{m}} R_{\Omega}\left(\pi(g)\xi\right)_{g\in G}^{\ast} B\in\C^{m \times n}
	\]
	where $B\in\C^{n\times n}$ is unitary and $\Omega=(\omega_1,\ldots,\omega_m)$ is a sequence of independent random variables with $\omega_i \sim \mathcal{U}(G)$ such that $\Omega$ and $\xi$ are independent.
	If, for $\eta,\delta \in(0,1)$,
		\begin{align*}
			m\geq C \delta^{-2} s \ln(8\vert G\vert)\ln\left(\frac{2}{\eta}\right) \max\bigg\{ &\ln(4s)^2\ln(8n)
		\ln\left(s\delta^{-2}\ln(8\vert G\vert)\ln\left(\frac{2}{\eta}\right) \right),\ln\left(\frac{2}{\eta}\right)\bigg\},
		\end{align*}
		then with probability at least $1-\eta $ the restricted isometry constant $\delta_s$ of $\Phi$ satisfies \[\delta_s\leq 3\delta.
		\]
		Here, $C>0$ is an absolute constant.
	\end{theorem}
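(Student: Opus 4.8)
The plan is to exploit that, once the generating vector $\xi$ is fixed, the rows of $\sqrt m\,\Phi$ become independent and identically distributed. Indeed, the $l$-th row of $\sqrt m\,\Phi$ is the conjugate transpose of $\phi_l=B^{\ast}\pi(\omega_l)\xi$, and since the $\omega_l$ are i.i.d.\ uniform on $G$, the $\phi_l$ are i.i.d.\ conditionally on $\xi$. Writing out the entries gives $\Phi_{lj}=\tfrac1{\sqrt m}\psi_j(\omega_l)$ with $\psi_j(g):=\langle Be_j,\pi(g)\xi\rangle$, so that conditionally on $\xi$ the matrix $\Phi$ is precisely the random sampling matrix of the system $\{\psi_j\}_{j=1}^n$ on $G$, sampled at uniformly random points. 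The strategy is therefore to recognise this as a bounded orthonormal system, invoke an RIP estimate for random sampling in such systems (of Rudelson--Vershynin resp.\ Krahmer--Mendelson--Rauhut type), and then control the two inputs the estimate requires: orthonormality and the boundedness constant, the latter being random because it depends on $\xi$.

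First I would establish orthonormality (isotropy). Using the block-diagonal, multiplicity-free hypothesis $m_\pi(\rho)=1$, Schur's lemma shows that the orbit average $\tfrac1{|G|}\sum_{g\in G}\pi(g)\xi\xi^{\ast}\pi(g)^{\ast}$ is block-diagonal, with the block over an irreducible $\rho$ of dimension $d_\rho$ equal to $\tfrac{\Vert\xi^{(\rho)}\Vert_2^2}{d_\rho}I_{d_\rho}$, while the off-diagonal blocks between inequivalent irreducibles vanish by orthogonality of matrix coefficients. Since every entry of $\xi$ has modulus $1$, each block satisfies $\Vert\xi^{(\rho)}\Vert_2^2=d_\rho$, so the average equals $I_n$ exactly for every admissible realisation of $\xi$. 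Conjugating by the unitary $B$ then yields $\tfrac1{|G|}\sum_g\psi_j(g)\overline{\psi_k(g)}=(Be_k)^{\ast}I_n(Be_j)=\delta_{jk}$, i.e.\ $\{\psi_j\}$ is orthonormal with respect to the uniform measure on $G$.

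Next I would bound the boundedness constant $K(\xi)=\max_{g\in G}\max_{1\le j\le n}|\psi_j(g)|=\max_{g,j}|\langle\pi(g)^{\ast}Be_j,\xi\rangle|$. For fixed $g,j$ the vector $\pi(g)^{\ast}Be_j$ is a unit vector, so $\langle\pi(g)^{\ast}Be_j,\xi\rangle$ is a weighted sum of the independent, mean-zero, unit-modulus entries of $\xi$; a scalar Hoeffding (Steinhaus) concentration inequality then shows it is subgaussian with an absolute constant. A union bound over the $|G|\cdot n$ pairs $(g,j)$ gives $K(\xi)^2\lesssim\ln(|G|n)+\ln(1/\eta)$ with probability at least $1-\eta/2$. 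At this point I would use the elementary fact that for a multiplicity-free representation $n=\sum_{\rho\le\pi}d_\rho\le\sum_{\rho\le\pi}d_\rho^2\le|G|$, so that $\ln(|G|n)\lesssim\ln(8|G|)$; this is exactly what produces the factor $\ln(8|G|)$, rather than a term involving $\ln n$, in front of the measurement bound, while the deviation part $\ln(1/\eta)$ yields the leading factor $\ln(2/\eta)$.

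Finally I would combine the pieces by conditioning. On the event $K(\xi)^2\lesssim\ln(8|G|)\ln(2/\eta)$, the bounded-orthonormal-system RIP theorem applies to $\Phi$ with boundedness constant $K(\xi)$, and choosing $m$ as in the statement makes its hypothesis hold, so conditionally on such $\xi$ the restricted isometry constant satisfies $\delta_s\le 3\delta$ with probability at least $1-\eta/2$ over $\Omega$; the factor $3$ absorbs the absolute constants of the cited estimate, which is typically phrased as a bound on the expected restricted isometry constant together with a deviation inequality. Integrating out $\xi$ by Fubini and summing the two failure probabilities $\tfrac\eta2+\tfrac\eta2=\eta$ gives the claim. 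I expect the main obstacle to be the clean bookkeeping of the two independent sources of randomness: because the constant $K(\xi)$ entering the RIP estimate is itself random, the RIP theorem must be applied conditionally on $\xi$ and the probability bounds composed carefully, with the log-factors arranged so that the random $K(\xi)^2$ is dominated by the deterministic quantity $\ln(8|G|)\ln(2/\eta)$ that appears in the final measurement bound.
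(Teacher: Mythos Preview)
Your proposal is correct and follows essentially the same route as the paper: the paper derives Theorem~\ref{mainresult2_low_key} as the multiplicity-free special case of a more general result, whose proof consists of exactly your three steps---Schur orthogonality to show the columns of $(\pi(g)\xi)_{g\in G}^{\ast}B$ form an orthonormal system with respect to the uniform measure on $G$, a complex Hoeffding bound plus union bound over the $n\vert G\vert$ pairs $(j,g)$ to control the BOS constant $K(\xi)$, and then conditioning on $\xi$, invoking the BOS RIP theorem (Theorem~12.32 in Foucart--Rauhut), and integrating out via Fubini. The only cosmetic difference is in the bookkeeping of the two failure probabilities: you split as $\eta/2+\eta/2$, whereas the paper uses $1-\sqrt{1-\eta}$ for each source so that the product gives $1-\eta$; both are equivalent up to the absolute constant~$C$.
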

	
	In Section \ref{sect:random_xi_and_omega}, we will prove a more general version of the above theorem which also allows for multiplicities within the representation $\pi$. \par
	By randomizing the sampling set and allowing $x$ to be sparse in some basis $B$, Theorem \ref{mainresult2_low_key} implies that the measurement matrices of equivalent representations have the same $s$-sparse recovery properties. This is not the case when $\Omega$ is a fixed sampling set. We provide detailed arguments for this in Example \ref{bsp:F_faltung_F} and Remark \ref{rem:equiv_rep}.\par	
	Another consequence of this finding is that there exist representations for which Theorem \ref{mainresult2_low_key} is applicable with good bounds on the number of measurements, even though sparse recovery was generally not possible for the measurement matrices that are associated with these representations when only the generating vector was randomized. As an example, consider the group $G=\Z/n\Z$. The representation $\rho\colon G\to\text{GL}(\C^n)$ defined by
	\begin{align*}
	\rho(k)=\begin{pmatrix}
					 e^{\frac{2\pi ik1}{n}}\\
					& \ddots\\
					&&e^{\frac{2\pi ikn}{n}}\\
				\end{pmatrix}
	\end{align*}
	for all $k\in\Z/n\Z$, satisfies the requirements of Theorem \ref{mainresult2_low_key}. Consequently, with probability at least $1-n^{-1}$ roughly 
	\begin{align*}
	m\gtrsim s \ln(s)^2\ln(n)^3 \ln(s\ln(n))
	\end{align*}
	measurements are sufficient for recovery via $\ell_1$-minimization. When $\Omega$ is not randomized, $\Phi_{\rho}$ does not do $s$-sparse recovery, as discussed in Example \ref{bsp:F_faltung_F}. In general, however, none of the aforementioned theorems is stronger than the other; which of the two theorems provides better bounds on the number of measurements depends on the chosen representation. A notable advantage, however, of Theorem \ref{mainresult2_low_key} over Theorem \ref{mainresult1} is that the basis in which the vector $x$ is sparse, i.e the matrix $B$, can be chosen arbitrarily without affecting the recovery outcome.

\subsection{Notation}\label{subsec:notation}
	Let $C,c,C_1,c_1,C_2,c_2,\ldots>0$ denote absolute constants whose values may vary from line to line. $A\lesssim B$ means that there exists a universal constant $c>0$ such that $A\leq c B$. Let $s,k,m,n\in\N$. The torus, i.e. the complex numbers of absolute value 1, is denoted by $\mathbb{T}$. For $S\subseteq \{1,\ldots,n\}$ and $x\in\C^n$ we denote by $x_S$ the vector in $\C^n$ with 
			\begin{equation*}
				(x_S)_j:=\begin{cases}
					x_j,&j\in S,\\
					0,&j\notin S.
				\end{cases}
			\end{equation*}
			We define the conjugate of a matrix $A\in\C^{m\times n}$ by $\left(\overline{A}\right)_{kl}:=\overline{A}_{kl}$. The Fourier transform on $\C^n$ is given by \[\mathcal{F}\colon\C^n\to\C^n,\, (\mathcal{F}x)_l:=\sum_{j=1}^n x_j\, e^{\frac{2\pi i j l }{n}}\] with inverse \[\mathcal{F}^{-1}\colon\C^n\to\C^n,\, (\mathcal{F}^{-1}x)_j:=\frac{1}{n}\sum_{l=1}^n x_l\, e^{-\frac{2\pi i j l }{n}}.\]
			A random vector $\xi\in\C^n$ is called $L$-subgaussian for $L>0$ if for every $x\in\C^n$ with $\Vert X\Vert_2=1$, it holds
			\[
			\mathbb{E}\vert\langle\xi,x\rangle\vert^2=1\,\,\quad\text{and}\quad\,\,\mathbb{P}(\vert \langle \xi,x\rangle\vert\geq t)\leq 2\exp\left(-\frac{t^2}{2L^2}\right)\quad \text{for any } t>0.
			\]
			$G$ denotes a finite group. We will often identify the set of functions $f\colon G\to \C$ with the space $\C^{\vert G \vert}$. Hereby, we implicitly assume some fixed numbering of the group elements $G=(g_1,\ldots,g_{\vert G\vert})$ such that for $x\in\C^{\vert G\vert}$ we can write $x(g):=x_g:=x_j$ with $g=g_j$. For a vector $x\in\C^G$ let $\widetilde{x}(g):=\overline{x(g^{-1})}$. The convolution of two vectors $x,y\in\C^G$ is defined as
			\begin{equation*}
				(x\ast y)(g)=\sum_{h\in G} x(h)\,y(g^{-1}h)
			\end{equation*}
			for all $g\in G$.
\subsection{A brief introduction to group representation theory}\label{subsec:intro_group_rep}
		Let $G$ be a finite group and $V$ an $n$-dimensional $\C$-vector space. We understand a representation $\pi$ to be a group homomorphism between $G$ and $\text{GL}(V)$, the set of linear bijective maps $V\to V$. The dimension $n$ of the space $V$ is called degree of $\pi$ and denoted by $d_{\pi}$.\par
		Let $\pi\colon G\to \text{GL}(V)$ be a representation of $G$. A representation $\rho\colon G\to \text{GL}(W)$ is called a subrepresentation of $\pi$, denoted by $\rho\leq \pi$, if $W$ is a subspace of $V$, $\rho(g)w\in W$ for all $w\in W, g\in G$ and $\pi(g)\vert_{W}=\rho(g)$ for all $g\in G$. The representation $\pi$ is called irreducible if its only subrepresentations are $0$ and $\pi$. 
		Two representations $\pi\colon G\to \text{GL}(V_{\pi})$ and $\rho\colon G\to \text{GL}(V_{\rho})$ are equivalent if there exists a linear bijective map $T\colon V_{\pi}\to V_{\rho}$ such that $\pi(g)=T^{-1}\circ \rho(g) \circ T$ holds for all $g\in G$.\par
		Every $\C$-vector space can be equipped with an inner product. We say that $\pi$ is unitary in such inner product space if $\pi(g)$ is a unitary for every $g\in G$. The complete reducibility theorem \cite{terras1999fourier} states that a unitary $\pi$ can be decomposed into irreducible representations, i.e. $\pi$ is equivalent to a representation $m_{\pi}(\pi_1)\pi_1\oplus \ldots\oplus m_{\pi}(\pi_T)\pi_T$ which is defined by
		\begin{align}\label{eq:block_diag_form_pi}
			(m_{\pi}(\pi_1)\pi_1\oplus \ldots\oplus m_{\pi}(\pi_T)\pi_T)(g)=\begin{pmatrix}
				\pi_1(g) \\
				& \ddots\\
				&&\pi_1(g)\\
				&&& \ddots \\
				&&&& \pi_T(g)\\
				&&&&& \ddots \\
				&&&&&& \pi_T(g)\\
			\end{pmatrix}
		\end{align}
		for all $g\in G$ where all $\pi_{\tau}\colon G\to \text{GL}(d_{\pi_{\tau}},\C)$ are unitary, irreducible, not equivalent to each other and occur $m_{\pi}(\pi_j)$ times for all $1\leq \tau\leq T$. The quantity $m_{\pi}(\pi_j)$ is called multiplicity of $\pi_j$ in $\pi$. We say that $\pi$ is given in block-diagonal form if $\pi=m_{\pi}(\pi_1)\pi_1\oplus \ldots\oplus m_{\pi}(\pi_T)\pi_T$.\par 
		Let $\hat{G}$ denote a complete set of inequivalent irreducible unitary representations of $G$. It is known that $\vert \hat{G}\vert $ equals the number of conjugacy classes in $G$ \cite{terras1999fourier}. Define the Fourier transform on $G$ by
		\begin{equation*}
			\mathcal{F}\colon \C^G\to \bigoplus_{\pi\in\hat{G}}\text{End}(V_{\pi}),\, \mathcal{F}f(\pi):=\hat{f}(\pi):=\sum_{g\in G} f(g)\pi(g).
		\end{equation*}
		Plancherel's theorem states
		\begin{equation*}
			\langle f,h\rangle = \frac{1}{\vert G\vert}\sum_{\pi\in\hat{G}} d_{\pi}\,\text{tr}\left(\mathcal{F}f(\pi)\,(\mathcal{F}h)(\pi)^{\ast}\right)
		\end{equation*}
		for all $f,h\in\C^G$ where $\text{tr}$ denotes the trace of an endomorphism. Furthermore, the following inversion formula \cite{fulton2013representation} holds
		\begin{align*}
			f(g)=\frac{1}{\vert G\vert} \sum_{\pi\in\hat{G}} d_{\pi}\,\text{tr}\left((\mathcal{F}f)(\pi)\pi(g^{-1})\right)
		\end{align*}
		for all $f\in \C^G$. Since $\dim\left(\C^G\right)=\dim\left(\bigoplus_{\pi\in\hat{G}} \text{End}(V_{\pi})\right)$ holds, the Fourier transform is linear and Plancherel's theorem holds, the Fourier transform has an inverse which is given by
		\begin{align*}
			\mathcal{F}^{-1}\colon \bigoplus_{\pi\in\hat{G}} \text{End}(V_{\pi}) \to\C^G, \, 	A\mapsto \left(\frac{1}{\vert G\vert} \sum_{\pi\in\hat{G}} d_{\pi}\, \text{tr}\left(A(\pi)\,\pi\left(g^{-1}\right)\right)\right)_{g\in G}. 
		\end{align*}
		For a more comprehensive and detailed introduction to group representations, we refer the reader to \cite{terras1999fourier} and \cite{fulton2013representation}.\par
		More generally than before, we define a projective representation $\sigma$ to be a mapping from $G$ to $\text{GL}(V)$ such that for all $g,h\in G$ there exists a constant $\lambda(g,h)\in\T$ with
		\[
		\sigma(g)\sigma(h)=\lambda(g,h)\sigma(gh).
		\]
		Analogously to before, $\sigma$ is called a unitary projective representation if $\sigma(g)$ is unitary for every $g\in G$.\par
		In this paper, the $\C$-vector space $V$ will usually be a subspace of $\C^k$. Thus, we will frequently identify $\pi(g)$ with its transformation matrix in $\C^{d_{\pi}\times d_{\pi}}$ with respect to the standard basis in $\C^{d_{\pi}}$.

\subsection{Not all representations yield good measurement matrices}\label{subsect:what_to_expect}

Before we start our main analysis in sections \ref{sect:fix_zuf} and \ref{sect:random_xi_and_omega}, the following example might be helpful to understand what characteristics we are looking for in a representation $\pi$ in order to ensure that $\Phi_{\pi}$ does $s$-sparse recovery. 
\begin{example}
	We consider any finite group $G$ and its trivial representation $\pi(g)=I_n$ for all $g\in G$. Let $n\geq 3$. Fix $\xi\in\C^n$ and a unitary matrix $B\in \C^{n\times n}$. Further fix some finite group $G$, some subset $\Omega\subseteq G$ of size $m$ and define $\pi(g)=I_n$ for all $g\in G$.	Then, there exist $1$-sparse vectors $x_1,x_2\in\C^n$ with $x_1\neq x_2$ such that
	\begin{align*}
		\Phi x_1=\Phi x_2.
	\end{align*}
	where $\Phi$ is as in (\ref{eq:def_phi_matrix}). In particular, $\Phi$ does not do $s$-sparse recovery for any $s\in\N$.
	\begin{proof}
		The vector $B^{\ast}\xi$ has either at least two zero entries or at least two nonzero entries, given that $n\geq 3$. Hence, there exist $1$-sparse vectors $x_1,x_2\in\C^n$ with $x_1\neq x_2$ such that \[\langle Bx_1,\xi\rangle=\langle x_1,B^{\ast}\xi\rangle=\langle x_2,B^{\ast}\xi\rangle=\langle Bx_2,\xi\rangle.\] By definition, this implies $\Phi x_1=\Phi x_2$.
	\end{proof}
\end{example}

The example above shows that recovery fails when choosing the trivial representation and that randomizing the generating vector $\xi$ or the sampling set $\Omega$ will be of no use. Intuitively, this makes sense because we do not gain any new information from new measurements. Conversely, we would wish for representations that connect the entries of the signal $x$ within each measurement in such a way that each new measurement provides more information about the signal.

\section{The restricted isometry property for a randomized generating vector}\label{sect:fix_zuf}
	
	In this section, we consider measurement matrices $\Phi$ where the sampling set $\Omega\subseteq G$ is fixed and chosen as in (\ref{item:fixed_sampling_sets}) and the generating vector $\xi$ is randomized. Further, the vector $x$ is  sparse in the standard basis, i.e. $B=I_n$. Thus, for a fixed sampling set $\Omega$ the measurement matrix is given as
		\begin{align}\label{eq:measurement_matrix_omega_fix_xi_random}
		\Phi=\frac{1}{\sqrt{m}} R_{\Omega}\big(\pi(g)\xi\big)_{g\in G}^{\ast}.
		\end{align}
		Here and for the remainder of Section \ref{sect:fix_zuf}, $\pi\colon G\to \text{GL}(\C^n)$ is a unitary representation or unitary projective representation unless specified differently. For reasons of readability we will often just speak of representations.\par		
		Our main result for this setting is Theorem \ref{mainresult1} which establishes the restricted isometry property for the measurement matrix in (\ref{eq:measurement_matrix_omega_fix_xi_random}). The proof of that result follows the same outline as the proof of \cite[Theorem 4.1]{krahmer2014suprema} and is presented in our conference paper \cite{fuhr2025restricted} in more detail. However, for self-containment we will present the main ideas of the proof.\par
		It is important to note that our result contains \cite[Theorem 4.1]{krahmer2014suprema}, which our proof is based on, as a special case. This will become more apparent when discussing the left regular representation in Section \ref{subsec:constant_analysis_ind_of_omega}.
		
	\begin{proof}[Proof of Theorem \ref{mainresult1}]
	Let $\Omega\in\widetilde{\Omega}$ be fixed. The main ingredient of the proof is the following concentration inequality.
	
	\begin{theorem}\label{th:concentration_inequ}\cite[Theorem 3.1]{krahmer2014suprema}
		Let $\mathcal{A}$ be a set matrices, and let $\xi$ be a random vector whose entries $\xi_j$ are independent, mean 0, variance 1 and $L$-subgaussian random variables. Define $d_{2\to 2}(\mathcal{A})=\sup_{A\in\mathcal{A}}\Vert A\Vert_{2\to 2}$ and $d_{F}(\mathcal{A})=\sup_{A\in\mathcal{A}}\Vert A\Vert_{F}$. By $\gamma_2(\mathcal{A},\Vert\cdot\Vert_{2\to2})$ we denote Talagrand's functional (see \cite[Definition 2.1]{krahmer2014suprema}). Set
		\begin{align*}
			E_1&=\gamma_2(\mathcal{A},\Vert\cdot\Vert_{2\to 2})(\gamma_2(\mathcal{A},\Vert\cdot\Vert_{2\to 2})+d_F(\mathcal{A}))+d_F(\mathcal{A})d_{2\to 2}(\mathcal{A}),\\
			E_2&=d_{2\to 2}(\mathcal{A})(\gamma_2(\mathcal{A},\Vert\cdot\Vert_{2\to 2})+d_F(\mathcal{A})),\\
			E_3&=d^2_{2\to 2}(\mathcal{A}).
		\end{align*}
		Then, for $t>0$,
		\begin{align*}
			&\mathbb{P}\left(\sup_{A\in\mathcal{A}}\left\vert\Vert A\xi\Vert_2^2-\mathbb{E}\Vert A\xi\Vert_2^2\right\vert \geq c_1E_1+t \right)\leq 2\exp\left(-c_2\min\left\{\frac{t^2}{E_2^2},\frac{t}{E_3}\right\}\right).
		\end{align*}
		The constants $c_1,c_2>0$ depend only on $L$.
	\end{theorem}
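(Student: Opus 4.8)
The plan is to view the centred quantity $X_A:=\Vert A\xi\Vert_2^2-\mathbb{E}\Vert A\xi\Vert_2^2=\langle\xi,A^{\ast}A\,\xi\rangle-\mathbb{E}\langle\xi,A^{\ast}A\,\xi\rangle$ as a centred second-order chaos process indexed by $A\in\mathcal{A}$, and to control $\sup_{A\in\mathcal{A}}|X_A|$ in two stages: first bound its expectation, which produces the deterministic offset $c_1E_1$, and then establish concentration of the supremum about its mean, which produces the mixed subgaussian/subexponential deviation term governed by $E_2$ and $E_3$.

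The starting point for the increment estimates is the Hanson--Wright inequality for subgaussian vectors: for a fixed matrix $B$ one has $\mathbb{P}(|\langle\xi,B\xi\rangle-\mathbb{E}\langle\xi,B\xi\rangle|\geq u)\leq 2\exp(-c\min\{u^2/\Vert B\Vert_F^2,\,u/\Vert B\Vert_{2\to2}\})$. Applied to differences $A^{\ast}A-A'^{\ast}A'$, this shows that the increments $X_A-X_{A'}$ obey a \emph{two-scale} tail: a subgaussian regime measured by the Frobenius distance and a subexponential regime measured by the operator-norm distance. The first stage is then a generic chaining argument (Talagrand's theorem on suprema of chaos) carried out simultaneously with respect to $\Vert\cdot\Vert_{2\to2}$ and $\Vert\cdot\Vert_F$: building an admissible sequence of partitions of $\mathcal{A}$ adapted to the operator-norm metric and splitting each chaining increment into its two tail contributions yields $\mathbb{E}\sup_{A\in\mathcal{A}}|X_A|\lesssim\gamma_2(\mathcal{A},\Vert\cdot\Vert_{2\to2})\big(\gamma_2(\mathcal{A},\Vert\cdot\Vert_{2\to2})+d_F(\mathcal{A})\big)+d_F(\mathcal{A})\,d_{2\to2}(\mathcal{A})$, which is exactly the quantity $E_1$.

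For the second stage I would pass to uniform moment bounds. The same chaining, tracked at the level of $L^p$ norms, is expected to give $\big(\mathbb{E}\sup_{A\in\mathcal{A}}|X_A-\mathbb{E}X_A|^p\big)^{1/p}\lesssim E_1+\sqrt{p}\,E_2+p\,E_3$, where the $\sqrt{p}\,E_2$ contribution originates from the subgaussian (Frobenius) part of the increments together with the diameters, and the $p\,E_3$ contribution from the subexponential (operator-norm) part, with $E_3=d_{2\to2}^2(\mathcal{A})$ being the worst single-matrix subexponential scale. The standard equivalence between moment growth of the form $\sqrt{p}\,\sigma+p\,\nu$ and a tail of the form $\exp(-c\min\{u^2/\sigma^2,\,u/\nu\})$ then converts this into the claimed bound, with $\sigma\sim E_2$ and $\nu\sim E_3$.

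The main obstacle is the generic chaining estimate in the first stage: Talagrand's chaos bound is genuinely two-metric, and the delicate point is to organise the admissible partitions so that the operator-norm scale controls the subexponential increments while the Frobenius scale controls the subgaussian ones, then to account correctly for how the functional $\gamma_2$, which enters \emph{quadratically} in $E_1$, combines with the diameters $d_F$ and $d_{2\to2}$. A secondary difficulty is that $\xi$ is only assumed subgaussian rather than Gaussian; this is handled by noting that every increment bound invoked in the chaining (Hanson--Wright, and the accompanying decoupling of $\langle\xi,B\xi\rangle$ into a bilinear form $\langle\xi,B\xi'\rangle$ via an independent copy $\xi'$) holds verbatim for subgaussian entries, so no Gaussian-specific comparison is needed and the constants depend only on the subgaussian parameter $L$.
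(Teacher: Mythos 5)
Note first that the paper contains no proof of this statement: it is imported verbatim from \cite{krahmer2014suprema}, so your attempt must be measured against the original proof there. Your overall architecture does match it: the genuine proof also proceeds by first establishing uniform moment bounds of the form $E_1+\sqrt{p}\,E_2+p\,E_3$ for $\sup_{A}|X_A|$ and then converting moment growth into the mixed subgaussian/subexponential tail, and your remarks that Hanson--Wright and decoupling hold for $L$-subgaussian entries with constants depending only on $L$ are correct. Your Stage 2 is therefore fine.

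The genuine gap is in Stage 1, and it sits exactly at the point you flag as ``the main obstacle'' -- flagging it does not resolve it, and the mechanism you propose is known not to work. If you control increments by applying Hanson--Wright to $A^{\ast}A-B^{\ast}B$, you obtain a mixed tail with subgaussian scale $\Vert A^{\ast}A-B^{\ast}B\Vert_F\leq 2\,d_F(\mathcal{A})\Vert A-B\Vert_{2\to2}$ and subexponential scale $\Vert A^{\ast}A-B^{\ast}B\Vert_{2\to2}\leq 2\,d_{2\to2}(\mathcal{A})\Vert A-B\Vert_{2\to2}$. Feeding such two-tail increments into any standard generic-chaining theorem -- including Talagrand's chaos bounds, which is what you invoke -- inevitably yields an estimate of the form $\mathbb{E}\sup_A|X_A|\lesssim d_F(\mathcal{A})\,\gamma_2(\mathcal{A},\Vert\cdot\Vert_{2\to2})+d_{2\to2}(\mathcal{A})\,\gamma_1(\mathcal{A},\Vert\cdot\Vert_{2\to2})$, because the subexponential part of the chain is weighted by $2^n$ rather than $2^{n/2}$ and thus assembles into a $\gamma_1$ functional. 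No choice of admissible partitions within that scheme turns $d_{2\to2}\,\gamma_1$ into $\gamma_2^2$; yet $E_1$ contains $\gamma_2(\mathcal{A},\Vert\cdot\Vert_{2\to2})^2$, and replacing the $\gamma_1$ term of the older Talagrand-type bounds by $\gamma_2^2$ is precisely the main technical contribution of \cite{krahmer2014suprema} (and is essential for the RIP applications in the present paper, where $\gamma_1$ would be too large). The actual proof handles the increments differently: it exploits the factorization $\Vert A\xi\Vert_2^2-\Vert B\xi\Vert_2^2=\left(\Vert A\xi\Vert_2-\Vert B\xi\Vert_2\right)\left(\Vert A\xi\Vert_2+\Vert B\xi\Vert_2\right)$, chains the norm process $A\mapsto\Vert A\xi\Vert_2$, whose increments are subgaussian at scale $\Vert A-B\Vert_{2\to2}$ (via $\vert\,\Vert A\xi\Vert_2-\Vert B\xi\Vert_2\,\vert\leq\Vert(A-B)\xi\Vert_2$ and single-matrix concentration), bounds the companion factor through $\big\Vert\sup_A\Vert A\xi\Vert_2\big\Vert_{L^p}\lesssim\gamma_2(\mathcal{A},\Vert\cdot\Vert_{2\to2})+d_F(\mathcal{A})+\sqrt{p}\,d_{2\to2}(\mathcal{A})$, and adds the single-point Hanson--Wright fluctuation at an anchor $A_0$, which contributes $\sqrt{p}\,d_Fd_{2\to2}+p\,d_{2\to2}^2$. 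Multiplying out yields $E_1+\sqrt{p}\,E_2+p\,E_3$, with the quadratic term $\gamma_2(\gamma_2+d_F)$ and the cross scale $E_2=d_{2\to2}(\gamma_2+d_F)$ arising from the product structure -- neither of which your Hanson--Wright-on-differences chaining can produce. That product factorization of the increment is the missing idea.
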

	
	Let us fix some notation
	\[
	D_{s,n}=\{x\in\C^n \mid \Vert x\Vert_2\leq 1, \Vert x\Vert_0\leq s\} \quad\text{and}\quad A_x=\frac{1}{\sqrt{m}} R_{\Omega}\left(\pi(g)^{\ast}x\right)^{\ast}_{g\in G}\in \C^{m \times n}
	\]
	for every $x\in\C^n$. The benefit of the concentration inequality gets more apparent when noticing that we can rewrite the restricted isometry constant as follows
	\begin{align*}
		\delta_s&=\sup_{x\in D_{s,n}} \Big| \Vert \Phi x \Vert_2^2-\Vert x\Vert^2_2 \Big|=\sup_{x\in D_{s,n}} \Big| \Vert A_x \xi\Vert_2^2-\mathbb{E} \Vert A_x \xi\Vert^2_2\Big|,
	\end{align*}
	where we used that the entries of $\xi$ are independent and have mean 0 and variance 1 as well as the fact that $\pi$ is unitary.
	So, Theorem \ref{th:concentration_inequ} tells us that it will be sufficient to find suitable bounds for $d_{2\to 2}(\mathcal{A})$, $d_{F}(\mathcal{A})$ and $\gamma_{2}(\mathcal{A},\Vert\cdot\Vert_{2\to 2})$, where $\mathcal{A}=\{ A_x \mid x\in D_{s,n}\}$.\par
	We start with $d_{2\to 2}(\mathcal{A})$: Let $x\in\C^n$ with $\Vert x\Vert_0\leq s$. There exist $J\subset\{1,\ldots,n\}$ and $(a_j)_{j\in J}\subseteq \C$ such that $x=\sum_{j\in J} a_j e_j$ with $\vert J\vert \leq s$. Then, the triangle inequality, assumption (\ref{main_annahme}) and the Cauchy-Schwarz inequality give
	\begin{align}\label{eq:2norm_bound_under_main_assumption}
		\Big\Vert \frac{1}{\sqrt{m}} R_{\Omega}(\pi(g)y)^{\!\ast}_{\!g\in G} x\Big\Vert_2&\leq \sum_{j \in J} \vert a_j\vert\Big\Vert \frac{1}{\sqrt{m}} R_{\Omega}(\pi(g)y)^{\!\ast}_{\!g\in G}  e_j\Big\Vert_2\nonumber\\
		& \leq \sum_{j\in J} \vert a_j\vert \sqrt{\frac{C_{\widetilde{\Omega},\pi}}{m}}\Vert y\Vert_2 \\
		& \leq \sqrt{\vert J\vert} \Big( \sum_{j\in J} \vert a_j\vert^2\Big)^{\frac{1}{2}}  \sqrt{\frac{C_{\widetilde{\Omega},\pi}}{m}} \Vert y\Vert_2\nonumber\\
		&\leq \sqrt{\frac{s\, C_{\widetilde{\Omega},\pi}}{m}} \Vert y\Vert_2 \Vert x\Vert_2\nonumber
	\end{align} 
	for all $y\in \C^n$. Since $\vert\langle y,\pi(g)^{\ast}x\rangle \vert=\vert \langle x,\pi(g)y\rangle \vert$ holds for all $y\in\C^n$ and $g\in\Omega$, it follows that
	\begin{align}\label{eq:Ax_2norm_bound}
		\Vert A_x\Vert_{2\to2} =\!\sup_{\Vert y\vert_2=1}\!\Big\Vert \frac{1}{\sqrt{m}}\, R_{\Omega}\big(\pi(g)^{\ast} x\big) ^{\ast}_{g\in G} y 	\Big\Vert_{2}=\!\sup_{\Vert y\vert_2=1}\!\Big\Vert \frac{1}{\sqrt{m}}\, R_{\Omega}\big(\pi(g) y\big) ^{\ast}_{g\in G} x 	\Big\Vert_{2}\leq \sqrt{\!\frac{s\, C_{\widetilde{\Omega},\pi}}{m}\!} \Vert x\Vert_2. 
	\end{align}
	Hence,
	\begin{align}\label{eq:bound_for_d_22}
		d_{2\to 2}(\mathcal{A})\leq \sqrt{\frac{s\, C_{\widetilde{\Omega},\pi}}{m}}.
	\end{align}

	Next we consider the quantity $d_F(\mathcal{A})$: Let $x\in\C^n$ with $\Vert x\Vert_0\leq s$. The definition of the Frobenius norm, along with $\pi$ being unitary, implies
	\begin{align*}
		\Vert A_x \Vert_F^2&=\frac{1}{m} \sum_{g\in\Omega}\sum_{j=1}^n \vert (\pi(g)^{\ast}x)_j\vert^2=\frac{1}{m} \sum_{g\in \Omega} \Vert \pi(g)^{\ast} x\Vert_2^2 =\frac{1}{m} \sum_{g\in \Omega} \Vert x\Vert_2^2=\Vert x\Vert_2^2.
	\end{align*}
	Thus,
	\begin{align}\label{eq:bound_for_d_F}
	d_F(\mathcal{A})=1.
	\end{align}

	It remains to bound the term $\gamma_{2}(\mathcal{A},\Vert\cdot\Vert_{2\to 2})$: In \cite{dudley1967sizes}, Dudley was the first to prove so called Dudley integral bounds. We will use a version of these bounds that was explicitly shown in \cite{vershynin2018high}. It reads as
	\begin{align}\label{eq:dudley_in}
		\gamma_{2}(\mathcal{A},\Vert\cdot\Vert_{2\to 2})\lesssim \int\limits_{0}^{d_{2\to 2}(\mathcal{A})}\sqrt{\ln \mathcal{N}(\mathcal{A},\Vert\cdot\Vert_{2\to 2},t)}\,\text{d}t
	\end{align}
	where $\mathcal{N}(\mathcal{A},\Vert\cdot\Vert_{2\to 2},t)$ is the covering number of $\mathcal{A}$, i.e. the minimal number of open balls in $(\mathcal{A},\Vert\cdot\Vert_{2\to 2}$) of radius $t$ that is needed to cover the set $\mathcal{A}$. We will split the integral and prove two bounds: The first bound uses the empirical method of Maurey \cite{carl1985inequalities} and the second one a volumetric argument. This approach has also been used in several similar settings, see \cite{rudelson2008sparse} or \cite{rauhut2010compressive}.\par
	Let $t>0$. Define a set of matrices by
	\begin{align*}
		\mathcal{B}=\{A_{\pm\sqrt{2}e_1},\ldots,A_{\pm\sqrt{2}e_n},A_{\pm\sqrt{2}i e_1},\ldots,A_{\pm\sqrt{2}i e_n} \}.
	\end{align*}
	The definition of $D_{s,n}$ implies
	\begin{align*}
		D_{s,n}\subseteq\! \sqrt{s}\,\text{conv}(\pm\sqrt{2}e_1,\ldots,\pm\sqrt{2}e_n,\pm\sqrt{2}ie_1,\ldots,\pm\sqrt{2}ie_n),
	\end{align*}
	where $\text{conv}()$ denotes the convex hull of a set. This relation together with
	\begin{align*}
		A_{\sum_{j=1}^Na_j z^j}=\sum_{j=1}^N \overline{a_j}\, A_{z^j}
	\end{align*}
	for all $a_1,\ldots,a_N\in\C$ and $z^1,\ldots,z^N\in\C^n$, shows $\mathcal{A}\subset \sqrt{s}\,\text{conv}(\mathcal{B})$. Thus, we have an upper bound on the covering number
	\begin{align}\label{eq:cov_numb_conv_subset}
		\mathcal{N}(\mathcal{A},\Vert\cdot\Vert_{2\to2},t)\leq \mathcal{N}\left(\text{conv}(\mathcal{B}),\Vert\cdot\Vert_{2\to2},\frac{1}{\sqrt{s}}t\right).
	\end{align}
	Now, we want to use \cite[Lemma 4.2]{krahmer2014suprema} which is based on the empirical method of Maurey. Let $N\in\N$, $(A_1,\ldots,A_N)\in \mathcal{B}^L$ and  let $(e_j)_{j=1}^N$ be a random vector with independent Rademacher distributed entries. One requirement of \cite[Lemma 4.2]{krahmer2014suprema} is a bound of the following form
	\begin{align*}
		\mathbb{E}_{\epsilon}\Big\Vert \sum_{j=1}^N \epsilon_j A_j\Big\Vert_{2\to 2}
		&\,\,{\lesssim} \sqrt{\ln(n)} \max\Big\{\Big\Vert \sum_{j=1}^N A_{j}^{\ast} A_{j}\Big\Vert_{2\to 2},\Big\Vert \sum_{j=1}^N A_{j} A_{j}^{\ast}\Big\Vert_{2\to 2}\Big\}^{\frac{1}{2}}\,\,{\leq} \sqrt{\ln(n)} \Big(\sum_{j=1}^N \Vert A_{j}\Vert^2_{2\to 2}\Big)^{\frac{1}{2}},
	\end{align*}
	where we used the non-commutative Khintchine inequality due to Lust-Piquard \cite{lust1986inegalites}, \cite{rudelson1999random} for the first inequality. Our assumption in (\ref{main_annahme}) implied the inequality in (\ref{eq:Ax_2norm_bound}), which in turn leads to
	\begin{align*}
		\sqrt{\ln(n)} \bigg(\sum_{j=1}^N \Vert A_{j}\Vert^2_{2\to 2}\bigg)^{\frac{1}{2}}\leq \sqrt{\ln(n)} \,\frac{\sqrt{2\,C_{\widetilde{\Omega},\pi}}}{\sqrt{m}}\,\sqrt{N}.
	\end{align*}
	Now it follows from \cite[Lemma 4.2]{krahmer2014suprema} that 
	\begin{align}\label{eq:cov_numb_maurey_bound}
		\!\!\ln \mathcal{N}\Big(\!\text{conv}(B) ,\Vert \!\cdot\!\Vert_{2\to 2}, \!\frac{t}{\sqrt{s}}\Big)\!\lesssim  \! \frac{{s\, C_{\widetilde{\Omega},\pi}}}{m} \!\frac{1}{t^2}{\ln (n)}\! \ln(4n).
	\end{align}
	Putting (\ref{eq:cov_numb_conv_subset}) and (\ref{eq:cov_numb_maurey_bound}) together yields
	\begin{align}\label{eq:cov_numb_ineq_for_big_t}
		\ln \mathcal{N}(\mathcal{A},\Vert\cdot\Vert_{2\to2},t)\lesssim \frac{{s\, C_{\widetilde{\Omega},\pi}}}{m} \frac{1}{t^2}{\ln (n)} \ln(4n)
	\end{align}
	The second inequality we aim to prove is based on an volumetric arguement. Inspecting (\ref{eq:2norm_bound_under_main_assumption}) gives us
	\begin{align*}
		\Vert A_x-A_y\Vert_{2\to 2}=\Vert A_{x-y}\Vert_{2\to 2}\leq \sqrt{\frac{ C_{\widetilde{\Omega},\pi}}{m}}\Vert x-y\Vert_1.
	\end{align*}
	Hence, 
	\begin{align*}
		\mathcal{N}(\mathcal{A},\Vert\cdot\Vert_{2\to2},t) &\leq \mathcal{N}\left( D_{s,n},\sqrt{\frac{C_{\widetilde{\Omega},\pi}}{m}} \Vert\cdot\Vert_1,t \right).
	\end{align*}
	Following the arguments presented in \cite[Section 8.4]{rauhut2010compressive}, it can be concluded that
	\begin{align*}
		\mathcal{N}\left(\! D_{s,n},\sqrt{\frac{C_{\widetilde{\Omega},\pi}}{m}} \Vert\cdot\Vert_1,t \right)\leq \left(\!1+2\sqrt{\frac{s\, C_{\widetilde{\Omega},\pi}}{m}}\frac{1}{t}\right)^{2s} \!\left(\frac{en}{s}\right)^s.
		\end{align*}
	Thus, we have shown another bound on the covering number of $\mathcal{A}$,
	\begin{align}\label{eq:cov_numb_ineq_for_small_t}
		\ln\mathcal{N}(\mathcal{A},\Vert\cdot\Vert_{2\to2},t)\lesssim s\Bigg(\ln  \Bigg(1+2\sqrt{\frac{s\, C_{\widetilde{\Omega},\pi}}{m}}\frac{1}{t}\Bigg)
		& +\ln\left(\frac{en}{s}\right)\Bigg).
	\end{align}	
	
	Let's prove the bound for $\gamma_2(\mathcal{A},\Vert\cdot\Vert_{2\to2})$. Using  (\ref{eq:cov_numb_ineq_for_big_t}) gives
	\begin{align*}
		\int\limits_{\frac{1}{\sqrt{m}}}^{d_{2\to2}(\mathcal{A})} \!\!\!\!\sqrt{\ln \mathcal{N}(\mathcal{A},\Vert\cdot\Vert_{2\to2},t)} \,\,\text{d}t&\lesssim \int\limits_{\frac{1}{\sqrt{m}}}^{\sqrt{\frac{s\, C_{\widetilde{\Omega},\pi}}{m}}} \!\!\!\!\sqrt{C_{\widetilde{\Omega},\pi}} \sqrt{\frac{s}{m}} \sqrt{\ln (n )} \sqrt{\ln (4n )} \,\,\frac{1}{t} \,\,\text{d}t\\
		&=\frac{1}{2}\sqrt{C_{\widetilde{\Omega},\pi}} \sqrt{\frac{s}{m}} \sqrt{\ln (n )}\sqrt{\ln (4n )}\ln(s\, C_{\widetilde{\Omega},\pi}).
	\end{align*}
	For small values of $t$ we use the bound in (\ref{eq:cov_numb_ineq_for_small_t}) to obtain
	\begin{align*}
		\int\limits_{0}^{\frac{1}{\sqrt{m}}} \sqrt{\ln(\mathcal{N}(\mathcal{A}, \Vert \cdot \Vert_{2 \to 2}, t))} \,\,\text{d}t&\lesssim \sqrt{s} \int\limits_0^{\frac{1}{\sqrt{m}}}\sqrt{ \ln\Big(\frac{en}{s}\Big)+\ln\Big(1+2\frac{\sqrt{s\, C_{\widetilde{\Omega},\pi}}}{\sqrt{m}\,t}\Big)}  \,\,\text{d}t\\
		&\leq\sqrt{\frac{s}{m}} \bigg( \sqrt{\ln\Big(\frac{en}{s}\Big)}+\sqrt{\ln\Big(e\big(1+2\sqrt{s\, C_{\widetilde{\Omega},\pi}}\big)\Big)}\,\bigg),
	\end{align*}
	with the second inequality resulting from a similar calculation as presented in \cite[Section 8.4]{rauhut2010compressive}. Using Dudley's inequality (\ref{eq:dudley_in}) and the two inequalities above implies
	\begin{align}\label{eq:bound_for_gamma_2}
			\gamma_2(\mathcal{A},\Vert\cdot\Vert_{2\to2}) \lesssim \sqrt{C_{\widetilde{\Omega},\pi}} \sqrt{\frac{s}{m}} \sqrt{\ln (n )\ln (4n )}\ln(s\, C_{\widetilde{\Omega},\pi}).
	\end{align}

	Now, Theorem \ref{mainresult1} is an immediate consequence of Theorem \ref{th:concentration_inequ} and the bounds (\ref{eq:bound_for_d_22}), (\ref{eq:bound_for_d_F}) and (\ref{eq:bound_for_gamma_2}):
	Let $\delta\in (0,1)$. Our proven bounds together with inequality (\ref{gl:absm}) yield
	\begin{align*}
		E_1= \gamma_{2}(\mathcal{A},\Vert\cdot\Vert_{2\to 2})^2+\gamma_{2}(\mathcal{A},\Vert\cdot\Vert_{2\to 2})+d_{2\to 2}(\mathcal{A}) \leq\frac{\delta^2}{c}+\frac{\delta}{\sqrt{c}}+\frac{\delta}{\sqrt{c}}\leq \frac{3\delta}{\sqrt{c}},
	\end{align*}
	where we assumed that $c\geq 1$. We get $E_1\lesssim \frac{\delta}{2c_1}$ where $c_1$ is the absolute constant from Theorem \ref{th:concentration_inequ} for a sufficiently large choice of $c$. The concentration inequality of Theorem \ref{th:concentration_inequ} implies 
	\begin{align*}
		\mathbb{P}(\delta_s\geq \delta)&\leq \mathbb{P}\Big(\delta_s\geq c_1 E_1+\frac{\delta}{2}\Big)\leq 2 \exp\left(-c_2 \min\left\{\frac{0.25\delta^2}{E_2^2}, \frac{0.5\delta}{E_3}\right\}\right).
	\end{align*}	
	Again using the bounds in (\ref{eq:bound_for_d_22}), (\ref{eq:bound_for_d_F}) and (\ref{eq:bound_for_gamma_2}) as well as inequality (\ref{gl:absm}) and the definition of $E_2$ and $E_3$ of Theorem \ref{th:concentration_inequ} gives the bound
	\begin{align*}
		&2 \exp\left(-c_2 \min\left\{\frac{0.25\delta^2}{E_2^2}, \frac{0.5\delta}{E_3}\right\}\right)\leq 2 \exp\Bigg(-c_2 c \frac{0.25}{\big(c^{-\frac{1}{2}}+1\big)^2} \ln(\eta^{-1})\Bigg)\leq 2 \exp\big(-\ln(\eta^{-1})\big)=2\eta.
	\end{align*}
	for $c$ sufficiently large.
	\end{proof}

	\subsection{Representations for which all sampling sets perform equally well} \label{subsec:constant_analysis_ind_of_omega}

		As already mentioned when discussing the main results, ideally we would choose $\widetilde{\Omega}=\mathcal{P}(G)$ while still getting a small constant $C_{\mathcal{P}(G),\pi}$. This section deals with this case. Hence, for the remainder of this section we will analyze and discuss the choice $\widetilde{\Omega}=\mathcal{P}(G)$. \par
		We start with one of the most important representations: the left regular representation $L$. For any finite group $G$ it is defined by $L\colon G\to\text{GL}(\C^G), g\mapsto L(g)$ with
		\begin{align*}
			L(g)\colon \C^G\to \C^G, \, (L(g)(f))(h):=f(g^{-1}h).
		\end{align*}
	
		To ensure that measurement matrix $\Phi_L$ associated with $L$ does $s$-sparse recovery with high probability, we have to bound the term 
		\begin{align*}
		\sup_{h\in G}\Big\Vert R_{\Omega}\big(L(g) y\big) ^{\ast}_{g\in G} e_h	\Big\Vert_{2}
		\end{align*}
		for every $y\in \C^G$ and $\Omega\in\mathcal{P}(G)$ which was first introduced in Section \ref{sect:fix_zuf}.
		
		\begin{proposition}\label{prop:leftregrep_const}
			Let $L$ be the left regular representation and fix an arbitrary sampling set $\Omega\in\mathcal{P}(G)$. Then, it holds
			\[\Big\Vert R_{\Omega}\big(L(g) y\big) ^{\ast}_{g\in G} e_h 	\Big\Vert_{2}\leq \Vert y\Vert_2\]
			for all $y\in\C^G$, all canonical vectors $e_h\in \C^G$.
			\begin{proof}
				For $y\in \C^G$ and a canonical vector $e_h$ we have
				\begin{align*}
					\Big\Vert R_{\Omega} \big(L(g)y\big)^{\ast}_{g\in G}\, e_h\Big\Vert_2^2 &\leq \Big\Vert   \big(L(g)y\big)^{\ast}_{g\in G}\, e_h\Big\Vert_2^2= \sum_{g\in G} \vert \langle e_h,L(g)y\rangle\vert^2=\sum_{g\in G} \vert y(g^{-1} h)\vert^2=\Vert y\Vert_2^2.
				\end{align*}
			\end{proof}
		\end{proposition}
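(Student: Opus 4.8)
The plan is to discard the restriction operator $R_{\Omega}$ at the very start and then exploit the fact that the left regular representation is a permutation representation. Since $R_{\Omega}$ merely selects the coordinates indexed by $\Omega\subseteq G$ and deletes the others, it can only decrease the Euclidean norm; hence
\[
\Big\Vert R_{\Omega}\big(L(g) y\big)^{\ast}_{g\in G} e_h \Big\Vert_{2}\leq \Big\Vert \big(L(g) y\big)^{\ast}_{g\in G} e_h \Big\Vert_{2},
\]
and it suffices to bound the right-hand side, where $g$ now ranges over all of $G$. This reduction is exactly what yields a bound that is uniform over every $\Omega\in\mathcal{P}(G)$: the estimate for the full orbit automatically dominates the estimate for any sub-sampled version.

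Next I would expand the squared norm coordinatewise. The $g$-th entry of the column vector $\big(L(g)y\big)^{\ast}_{g\in G} e_h$ is $\langle e_h, L(g)y\rangle$, so
\[
\Big\Vert \big(L(g) y\big)^{\ast}_{g\in G} e_h \Big\Vert_{2}^2 = \sum_{g\in G} \big\vert \langle e_h, L(g)y\rangle \big\vert^2.
\]
The key observation is that $L$ acts by permuting the standard basis of $\C^G$: by definition $(L(g)y)(h)=y(g^{-1}h)$, so each summand collapses to a single squared coordinate of $y$, namely $\vert \langle e_h, L(g)y\rangle\vert^2 = \vert y(g^{-1}h)\vert^2$.

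Finally I would invoke the bijection $g\mapsto g^{-1}h$ of $G$ onto itself (for fixed $h$). As $g$ runs over the whole group, the argument $g^{-1}h$ visits every element of $G$ exactly once, so $\sum_{g\in G}\vert y(g^{-1}h)\vert^2 = \sum_{k\in G}\vert y(k)\vert^2 = \Vert y\Vert_2^2$. Taking square roots and chaining the three steps gives the claim. There is no genuine obstacle here: the entire argument rests on the permutation structure of the regular representation, which guarantees that summing $\vert\langle e_h, L(g)y\rangle\vert^2$ over the full group reconstructs $\Vert y\Vert_2^2$ precisely. The only point worth flagging is that this forces equality when $\Omega=G$, so in the language of the surrounding discussion the constant $C_{\mathcal{P}(G),L}=1$ is optimal and cannot be lowered.
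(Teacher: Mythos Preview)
Your proof is correct and follows exactly the same route as the paper's: drop $R_\Omega$ (which can only shrink the norm), expand $\sum_{g\in G}|\langle e_h,L(g)y\rangle|^2$, use $(L(g)y)(h)=y(g^{-1}h)$, and observe that $g\mapsto g^{-1}h$ is a bijection so the sum equals $\Vert y\Vert_2^2$. The only addition on your side is the closing remark on optimality of $C_{\mathcal{P}(G),L}=1$, which the paper also discusses, though separately from this proof.
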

	An immediate consequence of the statement of Proposition \ref{prop:leftregrep_const} is that we can choose
	\[
	C_{\mathcal{P}(G),L}=1
	\]
	for any finite group $G$. So in this case Theorem \ref{mainresult1} yields the good bound
	\begin{align*}
	m\gtrsim s\ln(s)^2\ln(n)^2
	\end{align*}
	on the number of measurements that is needed for the restricted isometry property to hold. This also answers the first question we asked when discussing our main results in Section \ref{sec:overview_results}. Most of the results in Sections \ref{subsec:constant_analysis_ind_of_omega} and \ref{subsec:constant_analysis_specific_omega} will be presented in a manner similar to Proposition \ref{prop:leftregrep_const}. The connection between these results and the constant $C_{\widetilde{\Omega},\pi}$ as well as the RIP of the measurement matrix associated with $\pi$ becomes apparent through a discussion similar to the one presented above.
		\begin{remark}
			The bound on the number of measurements we obtain when considering the left regular representation actually matches the best known bounds for partial random circulant matrices proven in \cite[Theorem 4.1]{krahmer2014suprema}. This should be no surprise since these matrices are a special case, obtained by choosing the group $\Z/n\Z$ along with the left regular representation of that group. Then,
		\begin{equation*}
			\big((L(j)\xi)_{j\in\Z/n\Z}^{\ast}\, x\big)_k=\langle x, L(k)\xi\rangle=\sum_{j=1}^n x(j) \overline{\xi(-k+j)}=\sum_{j=1}^n x(j) \tilde{\xi}(k-j)
		\end{equation*}
		for all $k\in\Z/n\Z$. This is the circular convolution that is considered in \cite{krahmer2014suprema}. Here, it is not important that we have $\tilde{\xi}$ instead of $\xi$ since its entries are only complex conjugated and shuffled. So with $\xi$ fulfilling the requirements of Theorem \ref{mainresult1} also $\tilde{\xi}$ fulfills the requirements.
		\end{remark}

		A natural follow up question is: Do representations that are in some sense close to the left regular representation also yield $C_{\mathcal{P}(G),\pi}=1$ approximately? Recall that the left regular representation can be block-diagonalized into $d_{\rho_1} \rho_1\oplus \dots \oplus d_{\rho_r}\rho_r$ by the Fourier transformation \cite{terras1999fourier}. Here we have $\hat{G}=\{\rho_1,\ldots,\rho_r\}$. Our next result states that the constant $C_{\mathcal{P}(G),\pi}$ gets smaller, and therefore better, for representations (in a specific basis) that have a block-diagonal form which is similar to the one of $L$. On the other hand, it shows that irreducible representations are in this case not well-suited for sparse recovery.\par
		To present the statement precisely
		, we first need to establish some notation. Fix a unitary representation $\pi$ which is given in block diagonal form. Then, any vector $z\in\C^n$ can be decomposed into
		\begin{align*}
			z=\begin{pmatrix}
				z^{1,1}\\
				\vdots\\
				z^{1,m_{\pi}(\pi_1)}\\
				\vdots\\
				z^{T,1}\\
				\vdots\\
				z^{T,m_{\pi}(\pi_T)}
			\end{pmatrix},
		\end{align*}
		where $z^{\tau,\kappa}\in\C^{d_{\pi_{\tau}}}$ for all $1\leq \tau\leq T$ and $1\leq \kappa\leq  d_{\pi_{\tau}}$. Further, define the bijective mapping
		\[\alpha_{\pi}\colon \{(\tau,\kappa,\iota)\in\N^3\mid 1\leq \tau\leq T,\, 1\leq \kappa\leq m_{\pi}(\pi_{\tau}),\, 1\leq \iota\leq d_{\pi_{\tau}}\}\to \{1,\ldots,n\}\]
		by
		\begin{align*}
			\alpha_{\pi}(\tau,\kappa,\iota):=\left(\sum_{t=1}^{\tau-1} d_{\pi_t} m_{\pi}(\pi_t)\right)+(\kappa-1)d_{\pi_{\tau}}+\iota.
		\end{align*}
	
		To identify a suitable realization for a representation that enables sparse recovery, we require the discrete Fourier transform matrix. In dimension $n$, this unitary matrix is defined by $\text{DFT}^n_{jk}:=\frac{1}{\sqrt{n}}e^{\frac{2\pi ijk}{n}}$ for all $j,k\in\{1,\ldots,n\}$.
		
		\begin{proposition}\label{prop:C_for_subrep_of_left_reg}
			Let $\pi$ be a unitary representation given in block diagonal form. Define a unitary matrix $U\in\C^{n\times n}$ by
			\begin{equation*}
				U=\textnormal{DFT}^n \cdot D
			\end{equation*}
			where $D=\textnormal{diag}(d_1,\ldots,d_n)\in\C^{n\times n}$ is a diagonal matrix with diagonal elements \[d_{\alpha_{\pi}(\tau,\kappa,\iota)}=\sqrt{d_{\pi_{\tau}}}\textnormal{DFT}^{d_{\pi_{\tau}}}_{\kappa \!\!\!\mod d_{\pi_{\tau}},\iota}\] for all $1\leq \tau\leq T$, $1\leq \kappa\leq m_{\pi}(\pi_{\tau})$ and $1\leq \iota \leq d_{\pi_{\tau}}$.
			Now choose an arbitrary sampling set $\Omega\in\mathcal{P}(G)$. Then, it holds
			\begin{align*}
				\Big\Vert R_{\Omega}\big(U\pi(g)U^{\ast} y\big) ^{\ast}_{g\in G} e_j 	\Big\Vert_{2} \leq \sqrt{\frac{\vert G\vert}{n }}\sqrt{\max_{1\leq \tau\leq T} \left\lceil \frac{m_{\pi}(\pi_{\tau})}{d_{\pi_{\tau}}} \right\rceil} \Vert y\Vert_2
			\end{align*}
			for all $y\in\C^n$ and all canonical vectors $e_j\in \C^n$.
			\begin{proof}	
				Let's begin by establishing some key properties of the matrix $U$. As the product of two unitary matrices, $U$ itself is unitary. Fix an index $1\leq j\leq n$ and an irreducible representation by $1\leq \tau\leq T$. For $1\leq \kappa_1,\kappa_2 \leq m_{\pi}(\pi_{\tau})$ we get
				\begin{align*}
					\left\langle \begin{pmatrix}
						U_{j,\alpha_{\pi}(\tau,\kappa_1,1)}\\
						\vdots\\
						U_{j,\alpha_{\pi}(\tau,\kappa_1,d_{\pi_{\tau}})}\\
					\end{pmatrix}, \begin{pmatrix}
					U_{j,\alpha_{\pi}(\tau,\kappa_2,1)}\\
					\vdots\\
					U_{j,\alpha_{\pi}(\tau,\kappa_2,d_{\pi_{\tau}})}\\
					\end{pmatrix}\right\rangle & =\frac{1}{n}\left\langle \begin{pmatrix}					
					e^{\frac{2\pi i j \alpha_{\pi}(\tau,\kappa_1,1)}{n}} e^{\frac{2\pi i \kappa_1 1}{d_{\pi_{\tau}}}}\\
					\vdots\\
					e^{\frac{2\pi i j \alpha_{\pi}(\tau,\kappa_1,d_{\pi_{\tau}})}{n}} e^{\frac{2\pi i \kappa_1 d_{\pi_{\tau}}}{d_{\pi_{\tau}}}}\\
					\end{pmatrix}, \begin{pmatrix}
					e^{\frac{2\pi i j \alpha_{\pi}(\tau,\kappa_2,1)}{n}} e^{\frac{2\pi i \kappa_2 1}{d_{\pi_{\tau}}}}\\
					\vdots\\
					e^{\frac{2\pi i j \alpha_{\pi}(\tau,\kappa_2,d_{\pi_{\tau}})}{n}} e^{\frac{2\pi i \kappa_2 d_{\pi_{\tau}}}{d_{\pi_{\tau}}}}\\
					\end{pmatrix}\right\rangle\\
					&=\frac{1}{n}e^{\frac{2\pi ij\left( \alpha_{\pi}(\tau,\kappa_1,1)-\alpha_{\pi}(\tau,\kappa_2,1)\right)}{n}} \sum_{\iota=1}^{d_{\pi_{\tau}}} e^{\frac{2\pi i j(\iota-1)}{n}} e^{\frac{2\pi i \kappa_1 \iota}{d_{\pi_{\tau}}}} e^{-\frac{2\pi i j(\iota-1)}{n}} e^{-\frac{2\pi i \kappa_2 \iota}{d_{\pi_{\tau}}}}\\
					&=\frac{1}{n} e^{\frac{2\pi ij\left( \alpha_{\pi}(\tau,\kappa_1,1)-\alpha_{\pi}(\tau,\kappa_2,1)\right)}{n}} \sum_{\iota=1}^{d_{\pi_{\tau}}} e^{\frac{2\pi i (\kappa_1-\kappa_2) \iota}{d_{\pi_{\tau}}}}
				\end{align*}
				where we used that $\alpha_{\pi}(\tau,\kappa,\iota)=\alpha_{\pi}(\tau,\kappa,1)+\iota-1$ holds. We use the known equality
				\begin{align*}
					\sum_{\iota=1}^{d_{\pi_{\tau}}} e^{\frac{2\pi i (\kappa_1-\kappa_2) \iota}{d_{\pi_{\tau}}}}=\begin{cases}
						d_{\pi_{\tau}},&\kappa_1=\kappa_2 \!\!\mod d_{\pi_{\tau}},\\
						0,& \text{otherwise},
					\end{cases}
				\end{align*}
				to obtain
				\begin{align}\label{eq:subrows_of_U_orthogonal}
					\left\langle \begin{pmatrix}
						U_{j,\alpha_{\pi}(\tau,\kappa_1,1)}\\
						\vdots\\
						U_{j,\alpha_{\pi}(\tau,\kappa_1,d_{\pi_{\tau}})}\\
					\end{pmatrix}, \begin{pmatrix}
						U_{j,\alpha_{\pi}(\tau,\kappa_2,1)}\\
						\vdots\\
						U_{j,\alpha_{\pi}(\tau,\kappa_2,d_{\pi_{\tau}})}\\
					\end{pmatrix}\right\rangle=\begin{cases}
					e^{\frac{2\pi ij\left( \alpha_{\pi}(\tau,\kappa_1,1)-\alpha_{\pi}(\tau,\kappa_2,1)\right)}{n}} \frac{d_{\pi_{\tau}}}{n},&\kappa_1=\kappa_2 \!\!\mod d_{\pi_{\tau}},\\
					0,& \text{otherwise}.
				\end{cases}
				\end{align}
				Now we return to our desired statement. Hence, fix some vector $y\in \C^n$ and index $1\leq j\leq n$. For $g\in G$ we can rewrite the inner product $\langle e_j, U \pi(g) U^{\ast}y\rangle$ in the following way
				\begin{align*}
					\langle e_j, U \pi(g) U^{\ast}y\rangle&=\langle e_j, \sum_{l=1}^n \big(\pi(g)U^{\ast}y\big)_l \, U_{-,l}\rangle=\sum_{l=1}^n \overline{ \big(\pi(g)U^{\ast}y\big)_l}\, U_{j,l}=\sum_{\tau=1}^T\! \sum_{\kappa=1}^{m_\pi(\pi_{\tau})} \!\sum_{\iota=1}^{d_{\pi_{\tau}}} \overline{\big(\pi_{\tau}(g)(U^{\ast}y)^{\tau,\kappa}\big)_{\iota}} \,  U_{j,\alpha_{\pi}(\tau,\kappa,\iota)},
				\end{align*}
				Thus,
				\begin{align}\label{eq:rewrite_as_trace}
					\sum_{\tau=1}^T \sum_{\kappa=1}^{m_\pi(\pi_{\tau})} \sum_{\iota=1}^{d_{\pi_{\tau}}} \overline{\big(\pi_{\tau}(g)(U^{\ast}y)^{\tau,\kappa}\big)_{\iota}} \,  U_{j,\alpha_{\pi}(\tau,\kappa,\iota)} &=\sum_{\tau=1}^T \sum_{\kappa=1}^{m_\pi(\pi_{\tau})} \sum_{\iota_1=1}^{d_{\pi_{\tau}}} \sum_{\iota_2=1}^{d_{\pi_{\tau}}} \overline{\pi_{\tau}(g)_{\iota_1,\iota_2}}\, \overline{\big((U^{\ast}y)^{\tau,\kappa}\big)_{\iota_2}} \,  U_{j,\alpha_{\pi}(\tau,\kappa,\iota_1)}\nonumber\\
					&=\frac{\vert G\vert}{\vert G\vert}\sum_{\tau=1}^T d_{\pi_{\tau}} \text{tr}\Bigg(\sum_{\kappa=1}^{m_\pi(\pi_{\tau})} \frac{1}{d_{\pi_{\tau}}}\underbrace{\begin{pmatrix}
						U_{j,\alpha_{\pi}(\tau,\kappa,1)}\\
						\vdots\\
						U_{j,\alpha_{\pi}(\tau,\kappa,d_{\pi_{\tau}})}\\
					\end{pmatrix} \big((U^{\ast} y)^{\tau,\kappa}\big)^{\ast}}_{=A^{\tau,\kappa}} \pi_{\tau}(g)^{\ast}\Bigg).
				\end{align}
				Define $B\in\bigoplus_{\rho\in\hat{G}} \text{End}(\C^{d_{\rho}})$ by
				\[
				B(\rho)z:=\begin{cases}
					\sum_{\kappa=1 }^{m_{\pi}(\rho)}\frac{1}{d_{\pi_{\tau}}} A^{\tau,\kappa}z, & \exists \tau\in\{1,\ldots,T\}:\pi_{\tau}=\rho,\\
					0,& \text{otherwise},\\
				\end{cases}
				\]
				for every $\rho\in\hat{G}$ and $z\in \C^{d_{\rho}}$. With this notation at hand we can rewrite the last term in (\ref{eq:rewrite_as_trace}) as
				\begin{align*}
					\frac{\vert G\vert}{\vert G\vert}\sum_{\rho\in\hat{G}} d_{\rho} \text{tr}\left( B(\rho)\cdot \rho(g^{-1})\right)
					=\vert G\vert \cdot \mathcal{F}^{-1}(B)(g)
				\end{align*}
				where $\text{tr}$ now means the trace of an endomorphism. Plancherel's Theorem gives
				\begin{align*}
					\left\Vert \big( \langle e_j , U\pi(g) U^{\ast}y\rangle \big)_{g\in G}\right\Vert_2^2&=\vert G \vert^2\Vert \mathcal{F}^{-1}(B)\Vert_2^2=\vert G\vert^2\frac{1}{\vert G\vert} \sum_{\rho\in\hat{G}} d_{\rho} \text{tr}\left((\mathcal{F}(\mathcal{F}^{-1}B))(\rho) (\mathcal{F}(\mathcal{F}^{-1}B))(\rho)^{\ast} \right)\\
					&=\vert G\vert \sum_{\rho\in\hat{G}} d_{\rho} \text{tr}\left(B(\rho) (B(\rho))^{\ast} \right).
				\end{align*}
				Again identifying $\text{End}(V_{\rho})$ with $\C^{d_{\rho}\times\rho}$ via the transformation matrices with respect to the standard basis gives $\text{tr}(B(\rho)B(\rho)^{\ast})=\Vert B(\rho)\Vert_F^2$ and therefore,
				\begin{align*}
					\vert G\vert \sum_{\rho\in\hat{G}} d_{\rho} \text{tr}\left(B(\rho) (B)(\rho)^{\ast} \right)&=\vert G\vert \sum_{\tau=1}^T d_{\pi_{\tau}}\left\Vert \sum_{\kappa=1}^{m_{\pi}(\pi_{\tau})} \frac{1}{d_{\pi_{\tau}}} A^{\tau,\kappa}\right\Vert_F^2=\vert G\vert \sum_{\tau=1}^T \frac{1}{d_{\pi_{\tau}}}\left\Vert \sum_{\kappa=1}^{m_{\pi}(\pi_{\tau})}  A^{\tau,\kappa}\right\Vert_F^2.
				\end{align*}
				It remains to bound the Frobenius norm in the sum above. We define index sets by
				\[
				J_1=\{1,\ldots,d_{\pi_{\tau}}\},\quad J_2=\{d_{\pi_{\tau}}+1,\ldots,2d_{\pi_{\tau}} \},\quad\ldots\quad,J_{\left\lceil \frac{m_{\pi}(\pi_{\tau})}{d_{\pi_{\tau}}} \right\rceil} =\left\{\left(\left\lceil \frac{m_{\pi}(\pi_{\tau})}{d_{\pi_{\tau}}} \right\rceil-1\right)  d_{\pi_{\tau}}+1,\ldots, m_{\pi}(\pi_{\tau})\right\}
				\]
				where $\lceil \cdot\rceil$ denotes the ceiling function. The triangle inequality together with the standard $1$-norm bound gives
				\begin{align*}
					\left\Vert \sum_{\kappa=1}^{m_{\pi}(\pi_{\tau})}  A^{\tau,\kappa}\right\Vert_F^2\leq \left(\sum_{l=1}^{\left\lceil \frac{m_{\pi}(\pi_{\tau})}{d_{\pi_{\tau}}} \right\rceil} \left\Vert \sum_{\kappa\in J_l} A^{\tau,\kappa}\right\Vert_F\right)^2 \leq \left\lceil \frac{m_{\pi}(\pi_{\tau})}{d_{\pi_{\tau}}} \right\rceil \sum_{l=1}^{\left\lceil \frac{m_{\pi}(\pi_{\tau})}{d_{\pi_{\tau}}} \right\rceil
					} \left\Vert \sum_{\kappa\in J_l} A^{\tau,\kappa}\right\Vert_F^2 .
				\end{align*}
				We know that for all $1\leq l\leq \left\lceil \frac{m_{\pi}(\pi_{\tau})}{d_{\pi_{\tau}}} \right\rceil$ and all indices $\kappa_1,\kappa_2\in J_l$ with $\kappa_1\neq \kappa_2$ it holds $\kappa_1\neq \kappa_2 \!
				\mod d_{\pi_{\tau}}$. Therefore, recalling the definition of $A^{\tau,\kappa}$ as well as property (\ref{eq:subrows_of_U_orthogonal}) of the matrix $U$ gives
				\begin{align*}
					\left\Vert \sum_{\kappa\in J_l} A^{\tau,\kappa}\right\Vert_F^2& =\sum_{\kappa_1,\kappa_2\in J_l}\langle \begin{pmatrix}
						U_{j,\alpha_{\pi}(\tau,\kappa_1,1)}\\
						\vdots\\
						U_{j,\alpha_{\pi}(\tau,\kappa_1,d_{\pi_{\tau}})}\\
					\end{pmatrix}, \begin{pmatrix}
					U_{j,\alpha_{\pi}(\tau,\kappa_2,1)}\\
					\vdots\\
					U_{j,\alpha_{\pi}(\tau,\kappa_2,d_{\pi_{\tau}})}\\
				\end{pmatrix}\rangle \langle (U^{\ast} y)^{\tau,\kappa_2}, (U^{\ast} y)^{\tau,\kappa_1}\rangle=\sum_{\kappa\in J_l} \frac{d_{\pi_{\tau}}}{n}\Vert (U^{\ast}y)^{\tau,\kappa}\Vert_2^2.
				\end{align*}
				Combining the two equations mentioned above results in
				\begin{align*}
					\left\Vert \sum_{\kappa=1}^{m_{\pi}(\pi_{\tau})}  A^{\tau,\kappa}\right\Vert_F^2\leq \left\lceil \frac{m_{\pi}(\pi_{\tau})}{d_{\pi_{\tau}}} \right\rceil \frac{d_{\pi_{\tau}}}{n} \sum_{l=1}^{\left\lceil \frac{m_{\pi}(\pi_{\tau})}{d_{\pi_{\tau}}} \right\rceil}  \sum_{\kappa\in J_l} \Vert (U^{\ast}y)^{\tau,\kappa}\Vert_2^2=\left\lceil \frac{m_{\pi}(\pi_{\tau})}{d_{\pi_{\tau}}} \right\rceil \frac{d_{\pi_{\tau}}}{n} \sum_{\kappa=1}^{m_{\pi}(\pi_{\tau})} \Vert (U^{\ast}y)^{\tau,\kappa}\Vert_2^2
				\end{align*}
				By combining all the preceding information, we conclude that
				\begin{align*}
					\left\Vert \big( \langle e_j , U\pi(g) U^{\ast}y\rangle \big)_{g\in G}\right\Vert_2^2&\leq \vert G\vert \sum_{\tau=1}^T \frac{1}{d_{\pi_{\tau}}}  \left\lceil \frac{m_{\pi}(\pi_{\tau})}{d_{\pi_{\tau}}} \right\rceil \frac{d_{\pi_{\tau}}}{n} \sum_{\kappa=1}^{m_{\pi}(\pi_{\tau})} \Vert (U^{\ast}y)^{\tau,\kappa}\Vert_2^2\\
					&=\frac{\vert G \vert}{n}  \max_{1\leq \tau\leq T} \left\lceil \frac{m_{\pi}(\pi_{\tau})}{d_{\pi_{\tau}}} \right\rceil \sum_{\tau=1}^T \sum_{\kappa=1}^{m_{\pi}(\pi_{\tau})} \Vert(U^{\ast} y)^{\tau,\kappa}\Vert_2^2\\
					&=\frac{\vert G \vert}{n} \max_{1\leq \tau\leq T} \left\lceil \frac{m_{\pi}(\pi_{\tau})}{d_{\pi_{\tau}}} \right\rceil  \Vert U^{\ast} y\Vert_2^2\\
					&=\frac{\vert G \vert}{n} \max_{1\leq \tau\leq T} \left\lceil \frac{m_{\pi}(\pi_{\tau})}{d_{\pi_{\tau}}} \right\rceil  \Vert y\Vert_2^2.
				\end{align*}
				This finishes the proof. 
			\end{proof}
		\end{proposition}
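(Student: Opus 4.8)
The plan is to reduce to the full orbit ($\Omega = G$) and then carry out the entire estimate on the Fourier side of the group $G$. Since $R_\Omega$ only deletes coordinates, $\|R_\Omega v\|_2 \le \|v\|_2$, so it is enough to bound the full orbit vector $\big\|(\langle e_j, U\pi(g)U^* y\rangle)_{g\in G}\big\|_2$. Fixing $j$ and $y$, I would expand this inner product using the rows of $U$ together with the block-diagonal form $\pi(g) = \bigoplus_{\tau,\kappa}\pi_\tau(g)$, so that the expression decomposes into contributions indexed by the irreducible blocks $\pi_\tau$ and their $m_\pi(\pi_\tau)$ copies.

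Next I would rewrite each block contribution as a trace. Expanding $\pi_\tau(g)$ entrywise and collecting terms produces, for each $\tau$, a trace of the form $\mathrm{tr}(A^{\tau,\kappa}\pi_\tau(g)^*)$, where $A^{\tau,\kappa}$ is the rank-one matrix whose left factor is the subrow $(U_{j,\alpha_\pi(\tau,\kappa,\iota)})_{\iota}$ of $U$ and whose right factor is the block $(U^*y)^{\tau,\kappa}$. Defining an operator $B$ on $\bigoplus_{\rho\in\hat{G}}\mathrm{End}(\C^{d_\rho})$ by $B(\pi_\tau) = \sum_\kappa \frac{1}{d_{\pi_\tau}}A^{\tau,\kappa}$ (and $0$ on irreducibles not occurring in $\pi$), the orbit vector becomes exactly $|G|$ times the inverse group Fourier transform $\mathcal{F}^{-1}(B)$. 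Plancherel's theorem then yields $\big\|(\langle e_j, U\pi(g)U^* y\rangle)_{g\in G}\big\|_2^2 = |G|\sum_\tau d_{\pi_\tau}\|B(\pi_\tau)\|_F^2 = |G|\sum_\tau \frac{1}{d_{\pi_\tau}}\big\|\sum_\kappa A^{\tau,\kappa}\big\|_F^2$, reducing the whole statement to a Frobenius-norm estimate for each $\tau$.

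The main obstacle -- and the reason the particular diagonal matrix $D$ is chosen -- is controlling $\big\|\sum_{\kappa=1}^{m_\pi(\pi_\tau)} A^{\tau,\kappa}\big\|_F^2$, since the rank-one summands are not mutually orthogonal. I would first verify the key property of $U$: the subrows $(U_{j,\alpha_\pi(\tau,\kappa,\iota)})_\iota$ are orthogonal whenever $\kappa_1 \not\equiv \kappa_2 \bmod d_{\pi_\tau}$ and have squared norm $d_{\pi_\tau}/n$ otherwise. This is precisely where the factor $\sqrt{d_{\pi_\tau}}\,\textnormal{DFT}^{d_{\pi_\tau}}$ built into $D$ pays off, and checking it is a direct geometric-sum computation. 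With this orthogonality in hand, I would partition the multiplicity indices $\{1,\dots,m_\pi(\pi_\tau)\}$ into $\lceil m_\pi(\pi_\tau)/d_{\pi_\tau}\rceil$ consecutive blocks $J_l$, each of length at most $d_{\pi_\tau}$, so that indices within one block are pairwise distinct modulo $d_{\pi_\tau}$; then the cross terms inside each block vanish and $\big\|\sum_{\kappa\in J_l}A^{\tau,\kappa}\big\|_F^2 = \frac{d_{\pi_\tau}}{n}\sum_{\kappa\in J_l}\|(U^*y)^{\tau,\kappa}\|_2^2$ collapses to a clean diagonal sum. Combining the blocks via the triangle inequality and the standard $\ell^1$-to-$\ell^2$ bound introduces exactly the factor $\lceil m_\pi(\pi_\tau)/d_{\pi_\tau}\rceil$. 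Summing over $\tau$, bounding each ceiling by its maximum, and using $\sum_{\tau,\kappa}\|(U^*y)^{\tau,\kappa}\|_2^2 = \|U^*y\|_2^2 = \|y\|_2^2$ (unitarity of $U$) gives the claimed bound after taking a square root.
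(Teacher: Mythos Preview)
Your proposal is correct and follows essentially the same approach as the paper's proof: reduce to $\Omega=G$, rewrite $\langle e_j, U\pi(g)U^*y\rangle$ as a trace involving rank-one matrices $A^{\tau,\kappa}$, recognize the orbit vector as $|G|\,\mathcal{F}^{-1}(B)$ and apply Plancherel, then control $\big\|\sum_\kappa A^{\tau,\kappa}\big\|_F^2$ by partitioning the multiplicity indices into blocks of length $d_{\pi_\tau}$ and exploiting the mod-$d_{\pi_\tau}$ orthogonality of the subrows of $U$. The only cosmetic difference is that the paper establishes the subrow orthogonality property first and then carries out the trace/Plancherel computation, whereas you describe these in the opposite order.
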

		Inspecting the above proof shows that one gets the following sharp bound for irreducible representations.
		\begin{corollary}
			Let $\pi\colon G\to \text{GL}(\C^{d_{\pi}})$ be an unitary irreducible representation and choose an arbitrary sampling set $\Omega\in\mathcal{P}(G)$. Then, it holds
			\begin{align*}
				\Big\Vert R_{G}\big(\pi(g) y\big) ^{\ast}_{g\in G} e_j 	\Big\Vert_{2}  = \sqrt{\frac{\vert G\vert}{d_{\pi} }} \Vert y\Vert_2\quad \text{and}\quad \Big\Vert R_{\Omega}\big(\pi(g) y\big) ^{\ast}_{g\in G} e_j 	\Big\Vert_{2}  \leq \sqrt{\frac{\vert G\vert}{d_{\pi} }} \Vert y\Vert_2
			\end{align*}
			for all $y\in\C^n$ and all canonical vectors $e_j\in \C^n$.
		\end{corollary}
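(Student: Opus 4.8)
The plan is to reduce the whole statement to the first Schur orthogonality relation for the matrix coefficients of the irreducible representation $\pi$. First I would unfold the norm. Since $R_{G}$ keeps all coordinates, the quantity in question is the $\ell^2$-norm over $G$ of the vector $\big(\langle e_j,\pi(g)y\rangle\big)_{g\in G}$, so
\[
\Big\Vert R_{G}\big(\pi(g)y\big)^{\ast}_{g\in G}e_j\Big\Vert_2^2=\sum_{g\in G}\big\vert\langle e_j,\pi(g)y\rangle\big\vert^2=\sum_{g\in G}\Big\vert\sum_{k=1}^{d_{\pi}}\pi(g)_{jk}\,y_k\Big\vert^2 .
\]
Expanding the modulus squared and interchanging the (finite) sums turns the right-hand side into the double sum $\sum_{k,l}y_k\overline{y_l}\sum_{g\in G}\pi(g)_{jk}\overline{\pi(g)_{jl}}$.

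The key step is then the orthogonality relation for an irreducible unitary representation, namely $\sum_{g\in G}\pi(g)_{ab}\overline{\pi(g)_{cd}}=\frac{\vert G\vert}{d_{\pi}}\delta_{ac}\delta_{bd}$. Applied with $a=c=j$, $b=k$, $d=l$, the inner sum collapses to $\frac{\vert G\vert}{d_{\pi}}\delta_{kl}$, so only the diagonal terms $k=l$ survive and we are left with $\frac{\vert G\vert}{d_{\pi}}\sum_k\vert y_k\vert^2=\frac{\vert G\vert}{d_{\pi}}\Vert y\Vert_2^2$. Taking square roots yields the claimed equality for $\Omega=G$. I would point out that this is exactly the specialization of the Plancherel computation in the proof of Proposition \ref{prop:C_for_subrep_of_left_reg} to the case $T=1$, $m_{\pi}(\pi_1)=1$: there the sum over $\kappa$ has a single term, the auxiliary change of basis $U$ plays no role, and the bound $\frac{\vert G\vert}{n}\lceil m_{\pi}(\pi_\tau)/d_{\pi_\tau}\rceil$ degenerates to $\frac{\vert G\vert}{d_{\pi}}$ (with $n=d_{\pi}$), so the corollary is precisely what ``inspecting the above proof'' delivers.

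For the inequality with an arbitrary $\Omega\in\mathcal{P}(G)$ no further work is needed, since passing from $G$ to $\Omega$ only discards nonnegative summands:
\[
\Big\Vert R_{\Omega}\big(\pi(g)y\big)^{\ast}_{g\in G}e_j\Big\Vert_2^2=\sum_{g\in\Omega}\big\vert\langle e_j,\pi(g)y\rangle\big\vert^2\le\sum_{g\in G}\big\vert\langle e_j,\pi(g)y\rangle\big\vert^2=\frac{\vert G\vert}{d_{\pi}}\Vert y\Vert_2^2,
\]
which is the same monotonicity argument already used in Proposition \ref{prop:leftregrep_const}.

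I do not expect a serious obstacle here; the content is essentially a direct consequence of Schur orthogonality, and the only care required is bookkeeping. One must fix the inner-product convention $\langle a,b\rangle=\sum_k\overline{a_k}b_k$ (so that $\overline{\pi(g)_{jl}}=\pi(g^{-1})_{lj}$ by unitarity) and invoke the orthogonality relation with the indices in the correct positions. If one prefers to stay strictly within the tools stated in the paper, the orthogonality relation itself follows from Plancherel's theorem applied to the matrix-coefficient functions $g\mapsto\pi(g)_{jk}$, whose group Fourier transforms are supported on the single component of $\hat{G}$ equivalent to $\pi$. The genuinely important point I would stress is that the corollary asserts an \emph{equality} for $\Omega=G$: it shows the constant $\frac{\vert G\vert}{d_{\pi}}$ cannot be improved, whence $C_{\mathcal{P}(G),\pi}=\frac{\vert G\vert}{d_{\pi}}$ is sharp and irreducible representations with $d_{\pi}\ll\vert G\vert$ are genuinely ill-suited for recovery via Theorem \ref{mainresult1}.
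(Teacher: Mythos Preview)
Your proof is correct and essentially matches the paper's approach: the paper only says ``inspecting the above proof'' (i.e., the Plancherel computation in Proposition~\ref{prop:C_for_subrep_of_left_reg}), and for $T=1$, $m_\pi(\pi_1)=1$ that computation collapses precisely to the Schur orthogonality relation you invoke, with the $U$ conjugation dropping out and the Frobenius-norm step becoming an equality rather than an inequality. Your direct use of Schur orthogonality is the cleanest way to see the equality, and the monotonicity argument for general $\Omega$ is exactly what the paper intends.
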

	
		Since the aforementioned bound is sharp and it holds $\sum_{\pi\in\hat{G}}d_{\pi}^2=\vert G\vert$ \cite{terras1999fourier}, it follows that $\frac{\vert G\vert}{d_{\pi} }$ tends to be quite large. Consequently, in this context, choosing irreducible representations for the measurement process is generally not effective for sparse recovery.\par
		That we can't expect the same $s$-sparse recovery properties for the measurement matrices of unitarily equivalent representations, is intuitive, since sparsity is a property that is not invariant under basis transformation. This means that the choice of the realization of a representation is important. In the following we want to prove our intuition. Thereby, we also show the necessity of a suitable basis transformation $U$ in Proposition \ref{prop:C_for_subrep_of_left_reg}. \par
		First, we will need an auxiliary statement that is inspired by so called delta trains \cite{foucart2013mathematical}.
		
		\begin{lemma}\label{lem:delta_trains}
			Let $s\vert n$. Then, the vector $v\in\C^n$ defined by
			\begin{align*}
				v_j=\begin{cases}
					1, & j\equiv 1 \textnormal{ mod } \frac{n}{s},\\
					0, &\text{otherwise},
				\end{cases}
			\end{align*}
			satisfies $\Vert v\Vert_0=s$ and $\Vert \mathcal{F}v\Vert_0=\frac{n}{s}$. Furthermore, it holds \[\textnormal{supp}(\mathcal{F}v)=\{l\in\{1,\ldots,n\} \colon l\equiv 0 \textnormal{ mod } s\}.\]
			\begin{proof}
				It is obvious that $\Vert v\Vert_0=s$ holds. The Fourier transform of the vector $v$ is given by
				\begin{align*}
					(\mathcal{F}v)_l&=\sum_{j=1}^n v_j\, e^{\frac{2\pi i j l 	}{n}}=\sum_{j=1}^s  e^{\frac{2\pi i ((j-1)\frac{n}{s}+1) l }{n}}=e^{\frac{2\pi i l }{n}}\sum_{j=1}^s  e^{\frac{2\pi i (j-1)l }{s}}=\begin{cases}
						e^{\frac{2\pi i l }{n}} s, & l\equiv 0 \text{ mod } s,\\
						0, & \text{otherwise}.
					\end{cases}
				\end{align*}
				This proves the claim.	
			\end{proof}
		\end{lemma}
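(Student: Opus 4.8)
The plan is to verify the three claims by a direct computation of the discrete Fourier transform, exploiting that the support of $v$ is an arithmetic progression of step $n/s$, which turns $\mathcal{F}v$ into a geometric sum of roots of unity.

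First I would count the nonzero entries of $v$. The condition $j\equiv 1\bmod\frac{n}{s}$ with $j\in\{1,\dots,n\}$ is satisfied exactly by the indices $j=1+k\frac{n}{s}$ for $k=0,1,\dots,s-1$; here I use the hypothesis $s\vert n$, so that $\frac{n}{s}\in\N$ and the largest such index $1+(s-1)\frac{n}{s}\le n$ indeed lies in the range. This gives precisely $s$ admissible indices, whence $\Vert v\Vert_0=s$.

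Next I would substitute this explicit description of $\mathrm{supp}(v)$ into the definition of $\mathcal{F}$, writing $(\mathcal{F}v)_l=\sum_{k=0}^{s-1} e^{2\pi i(1+k\frac{n}{s})l/n}$. Factoring out the $k$-independent phase $e^{2\pi i l/n}$ leaves $\sum_{k=0}^{s-1} e^{2\pi i k l/s}$, a sum of $s$-th roots of unity. By the standard geometric-sum identity this equals $s$ when $s\vert l$ and vanishes otherwise. Hence $(\mathcal{F}v)_l$ is nonzero exactly when $l\equiv 0\bmod s$, which is the claimed support. Counting the multiples of $s$ in $\{1,\dots,n\}$, namely $s,2s,\dots,n$, yields $\frac{n}{s}$ of them, so $\Vert\mathcal{F}v\Vert_0=\frac{n}{s}$.

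I do not expect a genuine obstacle here, since the entire argument reduces to one geometric sum. The only points requiring a little care are the index bookkeeping (ensuring the progression runs over $k=0,\dots,s-1$ consistently with the convention that indices run from $1$ to $n$) and invoking the root-of-unity identity in the correct direction, so that the case $l\equiv 0\bmod s$ is the one producing the nonzero value $e^{2\pi i l/n}s$.
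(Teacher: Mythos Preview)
Your proposal is correct and follows essentially the same approach as the paper's proof: both compute $(\mathcal{F}v)_l$ directly by parametrising the support of $v$ as an arithmetic progression of step $n/s$, factoring out the phase $e^{2\pi i l/n}$, and evaluating the remaining geometric sum of $s$-th roots of unity. The only cosmetic difference is the choice of summation index ($k=0,\dots,s-1$ versus $j=1,\dots,s$) and that you spell out the counting for $\Vert v\Vert_0$ and $\Vert\mathcal{F}v\Vert_0$ explicitly, whereas the paper leaves these implicit.
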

		
		We now return to our initial question.
		
		\begin{example}\label{bsp:F_faltung_F}
			Let $G=\Z/n\Z$ and consider the left regular representation $L$. Theorem \ref{mainresult1} together with Proposition \ref{prop:leftregrep_const} tells us that the measurement matrix $\Phi_L$ associated with $L$ does $s$-sparse recovery with high probability for any sampling set $\Omega\in\mathcal{P}(G)$. Now, define a unitarily equivalent representation by
			\begin{equation*}
				\rho(k):=\left(\frac{\mathcal{F}}{\sqrt{n}}\right)\circ L(k) \circ \left(\frac{\mathcal{F}}{\sqrt{n}}\right)^{-1}=\mathcal{F}\circ L(k) \circ \mathcal{F}^{-1}
			\end{equation*}
			for all $k\in \Z/n\Z$. Hence, 
			\begin{align*}
				\left(\mathcal{F}\circ L(k) \circ \mathcal{F}^{-1} \right)(x)_l&=\frac{1}{n} \sum_{j_1=1}^n \sum_{j_2=1}^n x_{j_2} \,e^{-\frac{2\pi i j_2 (j_1-k) }{n}}\, 	e^{\frac{2\pi i j_1 l }{n}}=\frac{1}{n} \sum_{j_2=1}^n x_{j_2} \,e^{\frac{2\pi i j_2 k}{n}} \sum_{j_1=1}^n \, 	e^{-\frac{2\pi i j_1 (j_2-l) }{n}}= x_{l} \,e^{\frac{2\pi i l k}{n}}.
			\end{align*}
			Let $s,n\geq 2$ with $s\vert n$. Then, there exists a sampling set $\Omega\subseteq G$ with $\vert \Omega\vert=n-\frac{n}{s}$ such that for all $\xi\in\C^n$ there exist $s$-sparse vectors $x_1,x_2\in\C^n$ with $x_1\neq x_2$ such that 
			\[\Phi_{\rho} x_1=\Phi_{\rho} x_2\] where $\Phi_{\rho}=\frac{1}{\sqrt{m}}\, R_{\Omega}\big(\rho(k) \xi\big) ^{\ast}_{k\in G}$. In particular, for all $\xi\in\C^n$, $\Phi_{\rho}$ does not do $s$-sparse recovery.
			\begin{proof}
				Define the sampling set $\Omega=\{k\in\{1,\ldots,n\} \colon k\not\equiv 0 \text{ mod } s\}$. It is enough to find a $s$-sparse vector $x\in\C^n$ such that $\Phi x=0$. We consider three different cases. If $\xi=0$ holds, then our statement is trivial. So now assume that $1\leq\Vert \xi\Vert_0\leq n-1$. Hence, there exists an index $j$ with $\xi_j=0$. Choose $x=e_j$. Then, it holds \[ \langle x,\rho(k)\xi\rangle=\sum_{l=1}^n x_l \overline{\xi_l}\, e^{-\frac{2\pi i l k}{n}}=0\] for all $k\in\Omega$. Thus, $\Phi x=0$. It remains to consider $\Vert \xi\Vert_0=n$. In this case, we choose $x\in \C^n$ by 
				\begin{align*}
					x_j=\begin{cases}
						\overline{\frac{1}{\xi_j}},& j\equiv 1 \text{ mod } \frac{n}{s},\\
						0, & \text{otherwise.}
					\end{cases}
				\end{align*}
				Since $\Vert \xi\Vert_0=n$, the vector is well defined and $s$-sparse. Set
				$v\in\C^n$ as $v_j=\overline{x_j} \xi_j$. Then, Lemma \ref{lem:delta_trains} shows that \[\text{supp}(\mathcal{F}v)=\{k\in\{1,\ldots,n\} \colon k\equiv 0 \text{ mod } s\}=\{1,\ldots,n\}\setminus \Omega.\]
				Thus, 
				\begin{align*}
					\langle x,\rho(k)\xi\rangle= \sum_{j=1}^n x_j \overline{\xi_j}\, e^{-\frac{2\pi i j k}{n}}=\overline{\sum_{j=1}^n \overline{x_j} \xi_j\, e^{\frac{2\pi i j k}{n}}}= \overline{(\mathcal{F}v)_k}=0
				\end{align*}
				for all $k\in\Omega$. Again, we get that $\Phi x=0$.
			\end{proof}
			This example shows that even $m=\vert \Omega\vert= n-\frac{n}{s}\geq \frac{n}{2}$ measurements are not enough in order to ensure that $\Phi_{\rho}$ does $s$-sparse recovery regardless of the choice of the generating vector $\xi$.
		\end{example}
		
		An implication of the above result is that sparse recovery from Fourier measurements is not possible for an arbitrary sampling set $\Omega\in\mathcal{P}(G)$ although there is an equivalent representation that allows $s$-sparse recovery with high probability for every sampling set $\Omega$. It also shows that if one considers any fixed sampling set $\Omega$ results of the form as in Theorem \ref{mainresult1} have to depend on the realisation of $\pi$. This is an important observation. It leads to the question of how we can change our measurement process such that the measurement matrices associated with equivalent representations have the same $s$-sparse recovery properties. We will answer this in section \ref{sect:random_xi_and_omega}. The key change is that we randomize the sampling set.
		
	\subsection{A priori restricted sampling sets}\label{subsec:constant_analysis_specific_omega}

		To motivate the contents of this section, we start with an example. Let $p\in\N$ be prime. We call the set $G_{\text{aff}}=\Z_p\times Z_p^{\ast}$ together with the operation
		\begin{align*}
			(k,l)(k^{\prime},l^{\prime}):=(k+lk^{\prime} \text{ mod } p, ll^{\prime}\text{ mod } p)
		\end{align*}
		the affine group. For the affine group we define a unitary representation $\rho\colon G_{\text{aff}}\to\text{GL}(\C^{p-1})$ by
		\begin{align}\label{eq:aff_group_rep}
			(\rho(k,l)y)(j)=e^{\frac{2\pi ijk}{p}}y(jl)
		\end{align}
		for all $(k,l)\in G_{\text{aff}}$ and $y\in\C^{p-1}$. We get the following bound.
		
		\begin{proposition}\label{prop:affine_group}
			Let $\rho$ be the representation defined in (\ref{eq:aff_group_rep}) and $\Omega\in \mathcal{P}(G)$ an arbitrary sampling set. Then, it holds
			\begin{align*}
				\Big\Vert R_{\Omega}\big(\rho(k,l) y\big) ^{\ast}_{(k,l)\in G_{\textnormal{aff}}} e_j 	\Big\Vert_{2} \leq \sqrt{\vert\Omega_1\vert} \Vert y\Vert_2
			\end{align*}
			for all $y\in\C^{p-1}$ and all canonical vectors $e_j\in \C^{p-1}$ with $\Omega_1:=\{k\in\Z_p\mid \exists l\in\Z_p^{\ast}\colon (k,l)\in \Omega\}$.
				\begin{proof}
				For $y\in \C^{p-1}$ and a canonical vector $e_j$ we have
				\begin{align*}
					\Big\Vert  R_{\Omega} \big(\rho(k,l)y\big)^{\ast}_{(k,l)\in G_{\textnormal{aff}}}\, e_j\Big\Vert_2^2\!=\!\!\sum_{(k,l)\in\Omega}\Big\vert\langle e_j,e^{\frac{2\pi i k\cdot}{p}}y(\cdot l)\rangle\Big\vert^2\leq\!\! \sum_{k\in\Omega_1}\sum_{l=1}^{p-1}\vert\langle e_j,y(\cdot l)\rangle\vert^2=\!\sum_{k\in\Omega_1}\sum_{l=1}^{p-1}\vert y(jl^{-1})\vert^2=\vert\Omega_1\vert \Vert y\Vert_2^2.
				\end{align*}
			\end{proof}
		\end{proposition}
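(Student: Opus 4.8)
The plan is to reduce the squared left-hand side to a sum of squared inner products over the sampling set, evaluate each inner product explicitly from the defining formula for $\rho$, and then exploit that the resulting modulus is independent of the first group coordinate $k$. First I would use the definitions of $R_\Omega$ and of the Euclidean norm to write $\big\Vert R_\Omega(\rho(k,l)y)^{\ast}_{(k,l)\in G_{\textnormal{aff}}}e_j\big\Vert_2^2=\sum_{(k,l)\in\Omega}\big\vert\langle e_j,\rho(k,l)y\rangle\big\vert^2$. Since $\langle e_j,\rho(k,l)y\rangle$ is the conjugate of the $j$-th entry of $\rho(k,l)y$, the relation $(\rho(k,l)y)(j)=e^{2\pi i jk/p}y(jl)$ gives $\big\vert\langle e_j,\rho(k,l)y\rangle\big\vert=\big\vert e^{2\pi i jk/p}\big\vert\,\vert y(jl)\vert=\vert y(jl)\vert$.

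The crucial observation is that this value no longer depends on $k$: the character $e^{2\pi i jk/p}$ is unimodular, so it contributes nothing to the modulus, and each summand is simply $\vert y(jl)\vert^2$. I would then group the sum over $\Omega$ according to its first coordinate and overcount each fibre by the full index set of second coordinates, obtaining $\sum_{(k,l)\in\Omega}\vert y(jl)\vert^2\le\sum_{k\in\Omega_1}\sum_{l=1}^{p-1}\vert y(jl)\vert^2$, where $\Omega_1$ is the projection of $\Omega$ onto the first coordinate.

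Finally I would evaluate the inner sum using that $j\in\Z_p^{\ast}$ is a unit: multiplication by $j$ permutes $\Z_p^{\ast}$, so as $l$ runs through $\{1,\dots,p-1\}$ the product $jl\bmod p$ runs through the same set, whence $\sum_{l=1}^{p-1}\vert y(jl)\vert^2=\Vert y\Vert_2^2$. Summing over the $\vert\Omega_1\vert$ admissible values of $k$ produces the bound $\vert\Omega_1\vert\,\Vert y\Vert_2^2$, and taking square roots finishes the argument. I expect the only point needing care to be the passage from the fibred sum over $\Omega$ to the product $\vert\Omega_1\vert\cdot\Vert y\Vert_2^2$: for each fixed $k\in\Omega_1$ the genuine inner sum runs only over those $l$ with $(k,l)\in\Omega$, a subset of $\Z_p^{\ast}$, and one bounds it by the full sum, which then collapses to $\Vert y\Vert_2^2$ precisely because $j$ is invertible modulo $p$. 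Everything else is a direct unwinding of the definitions, so I anticipate no substantive obstacle beyond this bookkeeping.
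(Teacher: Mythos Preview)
Your proposal is correct and follows essentially the same route as the paper: expand the squared norm as $\sum_{(k,l)\in\Omega}\vert\langle e_j,\rho(k,l)y\rangle\vert^2$, observe that the unimodular factor $e^{2\pi ijk/p}$ disappears in modulus so the summand depends only on $l$, enlarge the sum to $\sum_{k\in\Omega_1}\sum_{l=1}^{p-1}$, and use that multiplication by the unit $j$ permutes $\Z_p^{\ast}$ to collapse the inner sum to $\Vert y\Vert_2^2$. The only cosmetic difference is that the paper records the inner summand as $\vert y(jl^{-1})\vert^2$ whereas you obtain $\vert y(jl)\vert^2$; both index substitutions are bijections on $\Z_p^{\ast}$, so the conclusion is identical.
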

		
		It is obvious that the size of $\Omega_1$ equals $p$ for some choices of sampling sets. Hence, the above Proposition gives $C_{\mathcal{P}(G),\rho}=p$ which implies a non-desirable bound on the number of measurements
		\[
		m\gtrsim sp\ln(p)^4.
		\] in Theorem \ref{mainresult1}. However, since the measurement matrix and especially the representation are known, we can argue that this knowledge can and should be used. By a priori restricting the set of possible sampling sets $\widetilde{\Omega}$ (which equaled $\mathcal{P}(G)$ in our discussion so far) to be a subset of e.g. $\{(1,l)\in G_{\text{aff}}\,\vert\, l\in\Z_p^{\ast}\}$, we get $\vert \Omega_1\vert=1$ in Proposition \ref{prop:affine_group} and $C_{\widetilde{\Omega},\rho}=1$. Then, according to Theorem \ref{mainresult1}, 
		sampling sets $\Omega\in\widetilde{\Omega}$ with at least 
		\[
		m\gtrsim s \ln(p)^2\ln(s)^2
		\]	
		elements provide $s$-sparse recovery with high probability.\par 	
		In this section, we generalize this idea of a priori restricting the sampling sets $\Omega$ to be elements of $\widetilde{\Omega}\subseteq \mathcal{P}(G)$, such that $C_{\widetilde{\Omega},\pi}=1$ holds. Hence, giving an answer to the second question we asked ourselves when discussing the main results in Section \ref{sec:overview_results}. This strategy works best for representations that are induced by some normal subgroup $H$ of $G$. In Appendix \ref{subsec:appendix_induced_rep}, we provide the necessary background for this section about induced representations.
		\begin{definition}
			Let $H$ be subgroup of $G$ and let $\sigma\colon H \to \text{GL}(W)$ be a representation of $H$. Define a vector space
			\begin{align*}
				V=\{f\colon G\to W\,\vert\, f(hg)=\sigma(h)f(g)\,\,\,\,\forall h\in H, g\in G\}.
			\end{align*}
			Further, define the group homomorphism $\textnormal{Ind}_{H}^G\sigma\colon G\to \text{GL}(V)$ by
			\begin{align*}
				(\textnormal{Ind}_H^G\sigma (g_1)f)(g_2)=f(g_2g_1)
			\end{align*}
			for all $g_1,g_2\in G$. $\textnormal{Ind}_{H}^G\sigma$ is called the induced representation from $H$ up to $G$. 
		\end{definition} 
		
		In the following we will only be interested in $W=\C^k$ and $\sigma$ being unitary. According to Appendix \ref{subsec:appendix_induced_rep} there exists a unitary map $T\colon \left(\C^k\right)^{H\setminus G}\to V$ such that
		\begin{align*}
			\big(T^{-1}(\textnormal{Ind}_H^G\sigma)(g) Tf\big)(\upsilon)=\sigma\left( \gamma(\upsilon)g \,\gamma(\upsilon g)^{-1} \right)f(\upsilon g)
		\end{align*}
		for all $g\in G$, $f\in\left(\C^k\right)^{H\setminus G}$ and $\upsilon\in H\setminus G$. Here, we denote the right cosets by $H\setminus G =\{Hg\mid g\in G\}$. Identify \[W^{H\setminus G}=\left(\C^k\right)^{H\setminus G}\cong \C^{\{1,\ldots,k\}\times H\setminus G}\cong\C^{k\cdot\vert H\setminus G\vert}\] via the standard identification map
		\begin{align*}
			\varphi\colon \C^{k\cdot\vert H\setminus G\vert}\to \left(\C^k\right)^{H\setminus G}.
		\end{align*}
		Now, we are ready to define the representation $\pi_{\sigma}\colon G\to \text{GL}\left(\C^{k\cdot\vert H\setminus G\vert}\right)$ by
		\begin{align}\label{eq:def_pi_sigma} 
			\pi_{\sigma}(g)= \varphi^{-1} T^{-1}(\textnormal{Ind}_H^G\sigma)(g) T \varphi.
		\end{align}
		It is obvious that $\pi_{\sigma}$ is equivalent to $\textnormal{Ind}_H^G \sigma $. Further, $\pi_{\sigma}$ is unitary since $\textnormal{Ind}_H^G \sigma$, $T$ and $\varphi$ are unitary. $\pi_{\sigma}$ is the representation we will prove a recovery result for. To do this, want to restrict the sampling set a priori. Recall that the cosets of $G$ define an equivalence relation on $G$. 
		\begin{lemma}
			Let $H$ be subgroup of $G$. Then, 
			\begin{align*}
				g_1\sim g_2\quad\Leftrightarrow\quad g_1 g_2^{-1}\in H
			\end{align*}
			is a well-defined equivalence relation on $G$.
		\end{lemma}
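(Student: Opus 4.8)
The plan is to verify directly that the relation satisfies the three defining properties of an equivalence relation---reflexivity, symmetry and transitivity---each of which will be read off from exactly one of the three subgroup axioms for $H$ (that $H$ contains the neutral element $e$, is closed under inversion, and is closed under the group operation). Since the relation is stated explicitly via the membership condition $g_1 g_2^{-1}\in H$, there is no genuine ``well-definedness'' subtlety to address; the entire content of the lemma is the check of these three axioms, and the phrasing is simply that this binary relation is an equivalence relation.

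Concretely, I would argue as follows. For reflexivity, note that $g g^{-1}=e\in H$ for every $g\in G$, since $H$ is a subgroup and hence contains the neutral element; this yields $g\sim g$. For symmetry, suppose $g_1\sim g_2$, so $g_1 g_2^{-1}\in H$; applying closure of $H$ under inversion gives $(g_1 g_2^{-1})^{-1}=g_2 g_1^{-1}\in H$, which is exactly $g_2\sim g_1$. For transitivity, assume $g_1\sim g_2$ and $g_2\sim g_3$, so that both $g_1 g_2^{-1}$ and $g_2 g_3^{-1}$ lie in $H$; multiplying these and using closure of $H$ under the group operation gives $(g_1 g_2^{-1})(g_2 g_3^{-1})=g_1 g_3^{-1}\in H$, i.e.\ $g_1\sim g_3$. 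Each of the three steps thus reduces to one elementary algebraic identity combined with one subgroup property.

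As for the expected main obstacle: there is essentially none---this is a routine verification with no hard step. The only things worth flagging are bookkeeping points rather than difficulties: one should keep the order of multiplication consistent (so that the cancellation $g_1 g_2^{-1}\cdot g_2 g_3^{-1}=g_1 g_3^{-1}$ is performed correctly, which matters since $G$ need not be abelian), and one should observe that the condition $g_1 g_2^{-1}\in H$ precisely encodes membership of $g_1$ and $g_2$ in a common right coset $Hg$, in agreement with the notation $H\setminus G=\{Hg\mid g\in G\}$ introduced above. The equivalence classes of $\sim$ are therefore exactly these right cosets, which is the reason this lemma is placed where it is.
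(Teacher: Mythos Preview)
Your proposal is correct and is exactly the standard verification one would expect. The paper in fact states this lemma without proof (it is introduced with ``Recall that the cosets of $G$ define an equivalence relation on $G$''), so there is nothing to compare against; your argument supplies the routine details the paper omits.
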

		
		 We want our sampling sets to contain at most one element of each coset. Thus, we set \[\widetilde{\Omega}=\{\Omega\subseteq G\,\vert\, \forall g_1,g_2\in \Omega : g_1 \not\sim g_2\}.\] With that notation the main result of this section reads as follows.
		
		\begin{proposition}
			Let $H$ be a normal subgroup of $G$ and let $\sigma\colon H\to \text{GL}(\C^k)$ be a unitary  representation of $H$. Further, let $\Omega\in\widetilde{\Omega}=\{\Omega\subseteq G\,\vert\, \forall g_1,g_2\in \Omega : g_1 \not\sim g_2\}$. Then, it holds
			\begin{align*}
				 \Big\Vert R_{\Omega}\big(\pi_{\sigma}(g) y\big) ^{\ast}_{(k,l)\in G_{\textnormal{aff}}} e_j 	\Big\Vert_{2}\leq  \Vert y\Vert_2
			\end{align*}
			for all $y\in \C^{k\cdot\vert H\setminus G\vert}$ and $j\in \C^{k\cdot\vert H\setminus G\vert}$, where $\pi_{\sigma}$ was defined in (\ref{eq:def_pi_sigma}).
			\begin{proof}
				Let $y\in \C^{k\cdot\vert H\setminus G\vert}$ and $j\in \C^{k\cdot\vert H\setminus G\vert}$. It holds
				\begin{align*}
					\sum_{g\in\Omega} \vert \langle e_{j},\pi(g)y\rangle\vert^2&=\sum_{g\in\Omega} \left\vert\langle \varphi(e_j), \sigma\left( \gamma(\cdot)g \,\gamma(\cdot g)^{-1} \right)\varphi(y)(\cdot g)\rangle_{\left(\C^k\right)^{H\setminus G}} \right\vert^2\\
					&=\sum_{g\in\Omega}\left\vert\sum_{\upsilon\in H\setminus G}\langle \varphi(e_j)(\upsilon),\sigma\left( \gamma(\upsilon)g \,\gamma(\upsilon g)^{-1} \right)\varphi(y)(\upsilon g)\rangle\right\vert^2\\
					&\leq \sum_{g\in\Omega}\left(\sum_{\upsilon\in H\setminus G}\Vert \varphi(e_j)(\upsilon)\Vert_2\left\Vert \sigma\left( \gamma(\upsilon)g \,\gamma(\upsilon g)^{-1} \right)\varphi(y)(\upsilon g)\right\Vert_2 \right)^2\\
					&=\sum_{g\in\Omega}\left(\sum_{\upsilon\in H\setminus G}\Vert \varphi(e_j)(\upsilon)\Vert_2\left\Vert \varphi(y)(\upsilon g)\right\Vert_2 \right)^2
				\end{align*}
				since $\sigma$ is unitary. Since we assumed $\varphi$ to be the natural identification map between $\left(\C^k\right)^{H\setminus G}$ and $\C^{k\cdot\vert H\setminus G\vert}$, we know that there exists exactly one $\nu\in H\setminus G$ such that $\Vert \varphi(e_j)(\upsilon)\Vert_2=\delta_{\upsilon,\nu}$ holds. Thus,
				\begin{align*}
					\sum_{g\in\Omega}\left(\sum_{\upsilon\in H\setminus G}\Vert \varphi(e_j)(\upsilon)\Vert_2\left\Vert \varphi(y)(\upsilon g)\right\Vert_2 \right)^2=\sum_{g\in\Omega}\Vert \varphi(y)(\nu g)\Vert_2^2.
				\end{align*}
				Since $H$ is normal, it holds
				\begin{align*}
					Hgg_1=Hgg_2 \quad\Leftrightarrow\quad H g g_1 g_2^{-1} g^{-1}=H \quad\Leftrightarrow\quad gg_1 g_2^{-1} g^{-1}\in H \quad\Leftrightarrow\quad g_1 g_2^{-1}\in H
				\end{align*}
				for all $g,g_1,g_2\in G$. Hence, $g_1, g_2\in \Omega$ with $g_1\neq g_2$ implies $\upsilon g_1\neq \upsilon g_2$. Thus,
				\begin{align*}
					\sum_{g\in\Omega} \left\Vert \varphi(y)(\nu g) \right\Vert_2^2\leq \sum_{\upsilon\in H\setminus G} \Vert \varphi(y)(\upsilon)\Vert_2^2=\Vert y\Vert_2^2.
				\end{align*}
				Then, the statement follows immediately.
			\end{proof}
		\end{proposition}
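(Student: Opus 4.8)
The plan is to compute the squared norm
\[
\big\Vert R_\Omega\big(\pi_\sigma(g)y\big)^{\ast}_{g\in G}\, e_j\big\Vert_2^2 = \sum_{g\in\Omega}\vert\langle e_j,\pi_\sigma(g)y\rangle\vert^2
\]
directly, transporting everything through the unitary identification $\varphi$ into the coset-indexed function space $\left(\C^k\right)^{H\setminus G}$, where the induced representation acts by the explicit formula recalled just before the statement. Since $\varphi$ and $T$ are unitary and $\pi_\sigma(g)=\varphi^{-1}T^{-1}(\textnormal{Ind}_H^G\sigma)(g)T\varphi$, I would first rewrite each summand as
\[
\langle e_j, \pi_\sigma(g)y\rangle = \Big\langle \varphi(e_j),\, \sigma\big(\gamma(\cdot)g\,\gamma(\cdot g)^{-1}\big)\varphi(y)(\cdot g)\Big\rangle_{\left(\C^k\right)^{H\setminus G}} = \sum_{\upsilon\in H\setminus G} \big\langle \varphi(e_j)(\upsilon),\, \sigma\big(\gamma(\upsilon)g\,\gamma(\upsilon g)^{-1}\big)\varphi(y)(\upsilon g)\big\rangle.
\]

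Next I would apply the triangle inequality to the inner coset sum and invoke unitarity of $\sigma$, so that each factor $\big\Vert \sigma\big(\gamma(\upsilon)g\,\gamma(\upsilon g)^{-1}\big)\varphi(y)(\upsilon g)\big\Vert_2$ equals $\Vert\varphi(y)(\upsilon g)\Vert_2$. The decisive simplification is that $e_j$ is a canonical vector and $\varphi$ is the natural block identification $\C^{k\cdot\vert H\setminus G\vert}\cong\left(\C^k\right)^{H\setminus G}$, so $\varphi(e_j)$ is concentrated on a single coset $\nu\in H\setminus G$, i.e. $\Vert\varphi(e_j)(\upsilon)\Vert_2=\delta_{\upsilon,\nu}$. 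This collapses the inner sum and reduces the whole estimate to bounding $\sum_{g\in\Omega}\Vert\varphi(y)(\nu g)\Vert_2^2$.

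The crux is then to show that right-translation $g\mapsto \nu g$ sends the cosets represented in $\Omega$ to pairwise distinct cosets, which is exactly where normality of $H$ and the restriction $\Omega\in\widetilde{\Omega}$ are used. Writing $\nu=H\gamma(\nu)$, normality gives
\[
H\gamma(\nu)g_1 = H\gamma(\nu)g_2 \;\Longleftrightarrow\; \gamma(\nu)g_1 g_2^{-1}\gamma(\nu)^{-1}\in H \;\Longleftrightarrow\; g_1 g_2^{-1}\in H \;\Longleftrightarrow\; g_1\sim g_2,
\]
so that distinct $g_1,g_2\in\Omega$, which by definition of $\widetilde{\Omega}$ satisfy $g_1\not\sim g_2$, yield $\nu g_1\neq \nu g_2$. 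Hence the cosets $\{\nu g : g\in\Omega\}$ are pairwise distinct and
\[
\sum_{g\in\Omega}\Vert\varphi(y)(\nu g)\Vert_2^2 \leq \sum_{\upsilon\in H\setminus G}\Vert\varphi(y)(\upsilon)\Vert_2^2 = \Vert\varphi(y)\Vert_2^2 = \Vert y\Vert_2^2,
\]
using unitarity of $\varphi$ in the final step. I expect this injectivity argument to be the main obstacle: without normality, right multiplication is not even well defined on the right cosets in $H\setminus G$, so the normality hypothesis is precisely what turns the ``at most one element per coset'' condition defining $\widetilde{\Omega}$ into disjointness of the translated cosets $\nu g$, and thereby yields the optimal constant $C_{\widetilde{\Omega},\pi_\sigma}=1$.
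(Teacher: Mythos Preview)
Your proposal is correct and follows essentially the same approach as the paper's proof: transport through $\varphi$ to the coset model, apply Cauchy--Schwarz and unitarity of $\sigma$, collapse via the single-coset support of $\varphi(e_j)$, and then use normality of $H$ together with the defining condition on $\widetilde{\Omega}$ to ensure the translates $\nu g$ for $g\in\Omega$ are pairwise distinct cosets. The paper carries out exactly these steps in the same order.
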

		We end this section by commenting on the possible number of choices for $\Omega$ and its maximum size.
		\begin{remark}
			The size of $\Omega\in\widetilde{\Omega}$ is bounded by the number of equivalence classes. Since $g_1\sim g_2$ is defined as $g_1 g_2^{-1}\in H$, we get $\vert [g]_{\sim}\vert=\vert H\vert$ for every $g\in G$. Thus, the number of equivalence classes is $\frac{\vert G\vert}{\vert H\vert}=\vert H\setminus G\vert$.\par
			With the previous considerations, we get that $\Omega\in \widetilde{\Omega}$ implies $\vert \Omega\vert \in \{0,1,\ldots,\vert H\setminus G\vert\}$. Fix an index $l\in\{1,\ldots,\vert H\setminus G\vert\}$\}. Then, there are
			\begin{align*}
				\frac{1}{l!}\prod_{j=0}^{l-1}(\vert G\vert- j\vert H\vert)=\frac{\vert H\vert^l}{l!}\prod_{j=0}^{l-1}\left(\frac{\vert G\vert}{\vert H\vert}- j\right)=\vert H\vert^l \binom{\frac{\vert G\vert}{\vert H\vert}}{l}
			\end{align*}
			choices for $\Omega\in \widetilde{\Omega}$ with $\vert \Omega\vert=l$. Therefore,
			\begin{align*}
				\vert \widetilde{\Omega}\vert=1+\sum_{l=1}^{\vert H\setminus G\vert} \vert H\vert^l \binom{\frac{\vert G\vert}{\vert H\vert}}{l}=\sum_{l=0}^{\vert H\setminus G\vert} \vert H\vert^l \binom{\vert H\setminus G\vert}{l}=(1+\vert H\vert)^{\vert H\setminus G\vert}
			\end{align*}
			where we used the binomial theorem.
		\end{remark}

\section{The restricted isometry property for a randomized generating vector and a randomized sampling set}\label{sect:random_xi_and_omega}

In this section we want to consider randomness in the generating vector $\xi$ as well as the sampling set $\Omega$. This is motivated by the following: Representation theory mostly studies properties of representations that are invariant under unitary basis transformation. Therefore, we would wish for a setting such that sparse recovery is invariant under unitary intertwining operators. However, example \ref{bsp:F_faltung_F} and the subsequent discussion showed that this can not be expected for a fixed sampling set $\Omega$. By randomizing $\Omega$, we hope to overcome this problem.
\subsection{Bounded orthonormal systems (BOS)}
We start by introducing some needed background about so called bounded orthonormal systems.

\begin{definition}[\cite{foucart2013mathematical}]
	Let $\mathcal{D}\subseteq \R^d$ be endowed with a probability measure $\mu$. Further, let $\{\varphi_1,\ldots,\varphi_n\}\subseteq \C^{\mathcal{D}}$ be an orthonormal system, i.e.
	\begin{equation}\label{eq:ortho_systems}
		\int\limits_{\mathcal{D}} \varphi_j(t)\overline{\varphi_{k}(t)}\,\text{d}\mu(t)=\delta_{j,k}
	\end{equation}
	for all $j,k\in\{1,\ldots,n\}$. We call $\{\varphi_1,\ldots,\varphi_n\}$ an bounded orthonormal system (BOS) with constant $K$ if it satisfies
	\begin{equation}\label{eq:BOS_cond}
		\max_{1\leq j\leq n}\sup_{t\in\mathcal{D}}\vert \varphi_j(t)\vert\leq K.
	\end{equation}
\end{definition}

The use of such systems for sparse recovery is well-established. To be more accurate, there exist results \cite{rudelson2008sparse,rauhut2010compressive} that establish e.g. the restricted isometry property for the measurement matrix
\[
\Phi=\big( \varphi_k(t_l) \big)_{1\leq l\leq m,1\leq k\leq n}
\]
where $t_1,\ldots, t_m$ are selected independently at random according to $\mu$.\par
Since we are working in a discrete setting the most important type of BOS arises from systems of orthogonal and normalized vectors.

\begin{example}\label{ex:discrete_OS}
	Let $U\in\C^{n\times N}$ be a matrix with orthonormal rows and let $B\in\C^{n\times n}$ be unitary. Now choose $\mathcal{D}=\{1,\ldots,N\}$ and the measure $\mu(A)=\frac{\vert A\vert}{N}$ for all $A\subseteq\{1,\ldots,N\}$. We claim that the set of columns of $\sqrt{N}U^{\ast}B$ denoted by
	\[\{\sqrt{N}v_1,\ldots, \sqrt{N}v_n\}\]
	satisfies (\ref{eq:ortho_systems}). To show this, we first note that  
	\begin{equation*}
		\langle v_j, v_k\rangle=\langle U^{\ast}Be_j, U^{\ast}Be_k\rangle=\langle B^{\ast}\underbrace{U U^{\ast}}_{=I_n}Be_j,e_k\rangle=\langle e_j,e_k\rangle=\delta_{j,k}
	\end{equation*}
	for all $j,k\in\{1,\ldots,n\}$. Thus,
	\begin{align*}
		\int\limits_{\{1,\ldots,N\}} \!\!\!\!\!\!\sqrt{N} v_j(t)\overline{\sqrt{N} v_{k}(t)}\,\text{d}\mu(t)=\sum_{t=1}^N N v_j(t)\overline{v_k(t)}\cdot \mu{\{t\}}=\sum_{t=1}^N  v_j(t)\overline{v_k(t)}=\langle v_j,v_k\rangle=\delta_{j,k}
	\end{align*}
	for all $j,k\in\{1,\ldots,n\}$. The boundness condition (\ref{eq:BOS_cond}) then reads as
	\begin{align*}
		\sqrt{N} \cdot\max_{1\leq j\leq n,1\leq t\leq N} \vert \langle U e_t,Be_j\rangle\vert=	\max_{1\leq j\leq n,1\leq t\leq N} \vert \sqrt{N} e_t^{\ast} U^{\ast} Be_j\vert=\max_{1\leq j\leq n,1\leq t\leq N} \vert \sqrt{N} v_j(t)\vert\leq K.
	\end{align*}
\end{example}

\subsection{The columns of the measurement matrix are likely to be a BOS}\label{subsec:likely_BOS}

We are interested in the case where the matrix $U$ from example \ref{ex:discrete_OS} is given as
\[U=\frac{1}{\sqrt{\vert G\vert}}\big(\pi(g)\xi\big)_{g\in G} \in\C^{n\times \vert G\vert}.\]
Here, we assume that $\pi\leq L$ is a unitary subrepresentation of the left regular representation that is given
in block-diagonal form 
as in (\ref{eq:block_diag_form_pi}). Further, the random vector $\xi\in\C^n$ is constructed as follows (for notation see proof of Proposition \ref{prop:C_for_subrep_of_left_reg}): Let $(\epsilon_{\tau,\iota})_{1\leq \tau\leq T,\, 1\leq \iota\leq d_{\pi_{\tau}}}$ be a sequence of independent random variables that are uniformly distributed on the torus. We choose
\begin{align*}
	&\xi^{\tau,\kappa}=\sqrt{d_{\pi_{\tau}}}\,  \epsilon_{\tau,\kappa}\cdot e_{\kappa}\in \C^{d_{\pi_{\tau}}}\,\,\,\,\,\,\,\,\forall 1\leq \tau\leq T,\, 1\leq \kappa < m_{\pi}(\pi_{\tau}),\\
	& \xi^{\tau,\kappa}=\sqrt{\frac{d_{\pi_{\tau}}}{d_{\pi_{\tau}}-\kappa+1}}\,\sum_{\iota=\kappa}^{d_{\pi_{\tau}}} \epsilon_{\tau,\iota}\cdot e_{\iota}\in \C^{d_{\pi_{\tau}}}\,\,\,\,\,\,\,\,\forall 1\leq \tau\leq T,\,\kappa = m_{\pi}(\pi_{\tau}).
\end{align*}
This gives the measurement matrix
\begin{align}\label{eq:phi_for_mainresult2}
\Phi=\frac{1}{\sqrt{m}} R_{\Omega} \big(\pi(g)\xi\big)_{g\in G}^{\ast} B
\end{align}
where $B\in\C^{n\times n}$ is unitary and $\Omega=(\omega_1,\ldots,\omega_m)$ is a sequence of independent random variables with $\omega_i \sim \mathcal{U}(G)$, i.e. according to the normalized counting measure on $G$, such that $\Omega$ and $\xi$ are independent.\par
The main idea in order to establish our recovery result is to first prove that with high probability the columns of $\big(\pi(g)\xi\big)^{\ast}_{g\in G} B$ form a BOS and then to use a known recovery result for BOS. A similar approach was employed in \cite{romberg2009compressive}.

\begin{proposition}\label{prop:columns_ONB}
	The columns of $(\pi(g)\xi)_{g\in G}^{\ast}B$ are an orthonormal system with respect to the normalized counting measure.
	\begin{proof}
		With regard to Example \ref{ex:discrete_OS} it is enough to show that the rows
		\begin{equation*}
			\left\{\frac{1}{\sqrt{\vert G\vert}} \Big((\pi(g)\xi)_{g\in G}\Big)_{1,-},\ldots, \frac{1}{\sqrt{\vert G\vert}} \Big((\pi(g)\xi)_{g\in G}\Big)_{n,-}\right\}
		\end{equation*}
		have norm 1 and are orthogonal. Therefore, consider $j,k\in\{1,\ldots,n\}$ and calculate
		\begin{align*}
			\left\langle \left(\frac{1}{\sqrt{\vert G\vert}} \Big((\pi(g)\xi)_{g\in G}\Big)_{k,-}\right)^T, \left(\frac{1}{\sqrt{\vert G\vert}} \Big((\pi(g)\xi)_{g\in G}\Big)_{j,-}\right)^T\right\rangle&=\frac{1}{\vert G\vert}\sum_{g\in G} \left(\pi(g)\xi\right)_k\overline{\left(\pi(g)\xi\right)_j}\\
			&=\frac{1}{\vert G\vert}\sum_{g\in G}\sum_{l_1,l_2=1}^n \pi(g)_{kl_1}\xi_{l_1} \overline{\pi(g)_{jl_2}\xi_{l_2}}\\
			&=\frac{1}{\vert G\vert}\sum_{l_1,l_2=1}^n \xi_{l_1} \overline{\xi_{l_2}} \,\sum_{g\in G} \pi(g)_{kl_1} \overline{\pi(g)_{jl_2}}.
		\end{align*}
		We define $\tau(j)=\alpha_{\pi}^{-1}(j)_1$, $\kappa(j)=\alpha_{\pi}^{-1}(j)_2$ and $\iota(j)=\alpha_{\pi}^{-1}(j)_3$, and similarly for $k$. Then, Schur's orthogonality relations give
		\begin{align*}
			\frac{1}{\vert G\vert}\sum_{l_1,l_2=1}^n \xi_{l_1} \overline{\xi_{l_2}} \,\sum_{g\in G} \pi(g)_{kl_1} \overline{\pi(g)_{jl_2}} &= \frac{1}{\vert G\vert}\sum_{\iota_1=1}^{d_{\pi_{\tau(k)}}}\sum_{\iota_2=1}^{d_{\pi_{\tau(j)}}} \xi_{\iota_1}^{\tau(k),\kappa(k)} \overline{\xi_{\iota_2}^{\tau(j),\kappa(j)}} \sum_{g\in G} \pi_{\tau(k),\kappa(k)}(g)_{\iota(k),\iota_1} \overline{\pi_{\tau(j),\kappa(j)}(g)_{\iota(j),\iota_2}}\\
			&=\frac{1}{\vert G\vert}\sum_{\iota_1=1}^{d_{\pi_{\tau(k)}}}\sum_{\iota_2=1}^{d_{\pi_{\tau(j)}}} \xi_{\iota_1}^{\tau(k),\kappa(k)} \overline{\xi_{\iota_2}^{\tau(j),\kappa(j)}}\, \delta_{\tau(j),\tau(k)}\frac{\vert G\vert}{d_{\pi_{\tau(j)}}}\delta_{\iota(k), \iota(j)} \delta_{\iota_1, \iota_2}\\
			&=\frac{1}{d_{\pi_{\tau(j)}}}\sum_{\iota=1}^{d_{\pi_{\tau(j)}}} \xi_{\iota}^{\tau(j),\kappa(k)} \overline{\xi_{\iota}^{\tau(j),\kappa(j)}}\, \delta_{\tau(j),\tau(k)}\delta_{\iota(k), \iota(j)} 
		\end{align*}
		We will now distinguish between three cases. First, assume that $j$ and $k$ correspond to the same block within the block diagonal form of $\pi$. This is equivalent to $\tau(j)=\tau(k)$ and $\kappa(j)=\kappa(k)$ where the function $\alpha_{\pi}$ was defined in the proof of Proposition \ref{prop:C_for_subrep_of_left_reg}. Since we constructed $\xi$ such that $\Vert\xi^{\tau(j),\kappa(j)}\Vert_2^2=d_{\pi_{\tau(j)}}$ and we have that $\delta_{\iota(k), \iota(j)}=\delta_{j,k}$ for $j$ and $k$ corresponding to the same block, we get
		\begin{align*}
			\frac{1}{d_{\pi_{\tau(j)}}}\sum_{\iota=1}^{d_{\pi_{\tau(j)}}} \xi_{\iota}^{\tau(j),\kappa(k)} \overline{\xi_{\iota}^{\tau(j),\kappa(j)}}\, \delta_{\tau(j),\tau(k)}\delta_{\iota(k), \iota(j)}=\left\Vert\xi^{\tau(j),\kappa(j)}\right\Vert_2^2 \,\cdot\frac{1}{d_{\pi_{\tau(j)}}}\delta_{\iota(k), \iota(j)}=\delta_{j,k}.
		\end{align*}
		Now consider the case where $j$ and $k$ correspond to different blocks of $\pi$ and where the two blocks don't consist of the same representation. That is equivalent to $\tau(j)\neq \tau(k)$. Hence,
		\begin{align*}
			\frac{1}{d_{\pi_{\tau(j)}}}\sum_{\iota=1}^{d_{\pi_{\tau(j)}}} \xi_{\iota}^{\tau(j),\kappa(k)} \overline{\xi_{\iota}^{\tau(j),\kappa(j)}}\, \delta_{\tau(j),\tau(k)}\delta_{\iota(k), \iota(j)}=0.
		\end{align*}
		It remains to consider the case where $j$ and $k$ correspond to different blocks of $\pi$ but these blocks  consist of the same representation. This means that $\tau(j)=\tau(k)$ and $\kappa(j)\neq \kappa(k)$. The construction of $\xi$ gives $\langle \xi^{\tau(j),\kappa(k)}, \xi^{\tau(j),\kappa(j)}\rangle =0$ and hence,
		\begin{align*}
			\frac{1}{d_{\pi_{\tau(j)}}}\sum_{\iota=1}^{d_{\pi_{\tau(j)}}} \xi_{\iota}^{\tau(j),\kappa(k)} \overline{\xi_{\iota}^{\tau(j),\kappa(j)}}\, \delta_{\tau(j),\tau(k)}\delta_{\iota(k), \iota(j)}
			&=\frac{1}{d_{\pi_{\tau(j)}}}\langle \xi^{\tau(j),\kappa(k)}, \xi^{\tau(j),\kappa(j)}\rangle  \delta_{\iota(k), \iota(j)}=0,
		\end{align*}	
		where we used the construction of $\xi$. This concludes the proof.
	\end{proof}
\end{proposition}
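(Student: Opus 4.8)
The plan is to reduce the claim to a statement about the \emph{rows} of the matrix $U=\frac{1}{\sqrt{|G|}}(\pi(g)\xi)_{g\in G}\in\C^{n\times|G|}$ and then to combine Schur's orthogonality relations with the precise construction of $\xi$. By Example \ref{ex:discrete_OS} (applied with $N=|G|$), the columns of $\sqrt{N}U^{\ast}B$ form an orthonormal system as soon as $U$ has orthonormal rows, since there the column inner products reduce to $\langle B^{\ast}UU^{\ast}Be_j,e_k\rangle$ and hence to $\delta_{j,k}$ once $UU^{\ast}=I_n$. Thus it suffices to verify $UU^{\ast}=I_n$, equivalently that the rows $\big((\pi(g)\xi)_j\big)_{g\in G}$ are pairwise orthogonal and each of squared norm $|G|$.

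First I would write out the normalized inner product of rows $k$ and $j$ as
\begin{align*}
\frac{1}{|G|}\sum_{g\in G}(\pi(g)\xi)_k\overline{(\pi(g)\xi)_j}=\frac{1}{|G|}\sum_{l_1,l_2=1}^n\xi_{l_1}\overline{\xi_{l_2}}\sum_{g\in G}\pi(g)_{kl_1}\overline{\pi(g)_{jl_2}},
\end{align*}
expanding $\pi(g)\xi$ in coordinates and interchanging the finite sums. The inner sum over $g$ is exactly what Schur's orthogonality relations control: writing each index through the bijection $\alpha_\pi$ as a triple $(\tau,\kappa,\iota)$, the block-diagonal form of $\pi$ forces $\pi(g)_{kl_1}=\pi_{\tau(k)}(g)_{\iota(k),\iota_1}$ on the block of $k$ and $0$ off it, and likewise for $j$. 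Schur's relations then contribute the factor $\delta_{\tau(j),\tau(k)}\frac{|G|}{d_{\pi_{\tau(j)}}}\delta_{\iota(k),\iota(j)}\delta_{\iota_1,\iota_2}$, which collapses the double sum to $\frac{1}{d_{\pi_{\tau(j)}}}\delta_{\tau(j),\tau(k)}\delta_{\iota(k),\iota(j)}\langle\xi^{\tau(j),\kappa(k)},\xi^{\tau(j),\kappa(j)}\rangle$.

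It then remains to evaluate this block inner product using the construction of $\xi$, which I would organize into three cases. When $j$ and $k$ lie in the same block ($\tau(j)=\tau(k)$ and $\kappa(j)=\kappa(k)$), the factor $\delta_{\iota(k),\iota(j)}$ becomes $\delta_{j,k}$ and the inner product equals $\|\xi^{\tau(j),\kappa(j)}\|_2^2=d_{\pi_{\tau(j)}}$ by design, yielding exactly $\delta_{j,k}$. When $\tau(j)\neq\tau(k)$ the leading delta vanishes, and when $\tau(j)=\tau(k)$ but $\kappa(j)\neq\kappa(k)$ the two copies' generating blocks are orthogonal, $\langle\xi^{\tau(j),\kappa(k)},\xi^{\tau(j),\kappa(j)}\rangle=0$; in both remaining cases $j\neq k$, so the target value $0$ is reproduced.

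The main obstacle is the multiplicity case: one must ensure that the generating blocks $\xi^{\tau,\kappa}$ attached to distinct copies of the \emph{same} irreducible are orthogonal while each still has squared norm $d_{\pi_\tau}$. This is precisely what the two-part definition of $\xi$ secures---the copies $\kappa<m_\pi(\pi_\tau)$ are placed on the distinct standard directions $\sqrt{d_{\pi_\tau}}\,\epsilon_{\tau,\kappa}e_\kappa$, while the final copy spreads over the remaining coordinates $e_{m_\pi(\pi_\tau)},\ldots,e_{d_{\pi_\tau}}$ with the normalizing factor chosen to restore norm $d_{\pi_\tau}$. Because $\pi\leq L$, multiplicities never exceed the degree, so these disjoint supports always exist; checking the norm and orthogonality here is the only genuinely non-formal step, the rest being bookkeeping through $\alpha_\pi$.
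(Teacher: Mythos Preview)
Your proposal is correct and follows essentially the same approach as the paper: reduce to orthonormality of the rows of $U$ via Example~\ref{ex:discrete_OS}, apply Schur's orthogonality relations to collapse the sum to a block inner product $\frac{1}{d_{\pi_{\tau(j)}}}\delta_{\tau(j),\tau(k)}\delta_{\iota(k),\iota(j)}\langle\xi^{\tau(j),\kappa(k)},\xi^{\tau(j),\kappa(j)}\rangle$, and then treat the same three cases using the designed norms and orthogonality of the $\xi^{\tau,\kappa}$. Your added remark that $\pi\leq L$ guarantees $m_\pi(\pi_\tau)\leq d_{\pi_\tau}$, so the disjoint-support construction of $\xi$ is always possible, is a useful clarification that the paper leaves implicit.
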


The next step is to show that the set of columns is a bounded system with high probability.

\begin{proposition}\label{prop:BOS_condition_with_high_prob}
	Let $\delta\in(0,1)$. Then with probability at least $1-\delta$, it holds
	\begin{align*}
		\max_{1\leq j\leq n,\,h\in G} \left\vert \langle \big((\pi(g)\xi)_{g\in G}\big)e_h, Be_j \rangle\right\vert < \sqrt{2d_{\max}(\pi)\ln\left(\frac{2n\vert G\vert}{\delta}\right)}
	\end{align*}
	with
	\begin{align*}
	d_{\max}(\pi):=\begin{cases}
		\max_{\rho\leq \pi:\,m_{\pi}(\rho)>1} d_{\rho},& \exists \rho\leq \pi:\,m_{\pi}(\rho)>1,\\
		1,& \text{otherwise.}
	\end{cases}
	\end{align*}
	\begin{proof}
		First, we notice that the mapping $\beta_{\pi}\colon\{(\tau,\iota)\mid \tau\in\{1,\ldots,T\},\, \iota\in \{1,\ldots, d_{{\pi}_{\tau}}\}\}\to\text{supp}( \xi)$ defined by
		\begin{align*}
		(\tau,\iota)\mapsto \sum_{t=1}^{\tau-1} d_{\pi_t}m_{\pi}(\pi_t)+\min\{m_{\pi}(\pi_{\tau})-1,\iota-1\}d_{\pi_{\tau}}+\iota
		\end{align*}
		is well defined and bijective.
		Thus, we can rewrite the inner product as sum of independent random variables
		\begin{align*}
			\langle\big(\pi(g)\xi)_{g\in G}\big)e_h, Be_j \rangle&=\langle\pi(h)\xi, Be_j \rangle=\sum_{l=1}^n \sum_{k=1}^n \pi(h)_{kl}\,\xi_l \,\overline{(Be_j)_k}=\sum_{l=1}^n \overline{(\pi(h)^{\ast}Be_j)_l} \,\xi_l\\
			&=\sum_{\tau=1}^T \sum_{\iota=1}^{d_{\pi_{\tau}}} \overline{(\pi(h)^{\ast}Be_j)_{\beta_{\pi}(\tau,\iota)}} \,\xi_{\beta_{\pi}(\tau,\iota)}=\sum_{\tau=1}^T \sum_{\iota=1}^{d_{\pi_{\tau}}} A_{\tau,\iota}\epsilon_{\tau,\iota}.
		\end{align*}
		where we used the coefficient vector $A\in\C^{\sum_{\tau=1}^{T}d_{\pi_{\tau}}}$ which is defined by
		\begin{align*}
		A_{\tau,\iota}:=\begin{cases}
		\sqrt{d_{\pi_{\tau}}}\cdot\overline{(\pi(h)^{\ast}Be_j)_{\beta_{\pi}(\tau,\iota)}}, &\iota\in\{1,\ldots,m_{\pi}(\pi_{\tau})-1\},\\
		 \sqrt{\frac{d_{\pi_{\tau}}}{d_{\pi_{\tau}}-m_{\pi}(\pi_{\tau})+1}}\cdot \overline{(\pi(h)^{\ast}Be_j)_{\beta_{\pi}(\tau,\iota)}}, &\iota\in\{m_{\pi}(\pi_{\tau}),\ldots,d_{\pi_{\tau}}\}.\\
		\end{cases}
		\end{align*}
		In order to use a complex version of Hoeffding's inequality we have the $2$-norm of $A$. So,
		\begin{align*}
			\left\Vert \left(A_{\tau,\iota}\right)_{\tau=1,\ldots,T,\,\iota=1,\ldots,d_{\pi_{\tau}}}\right\Vert_2^2\leq d_{\max}(\pi)\Vert (\pi(h)^{\ast}Be_j\Vert_2^2=d_{\max}(\pi)\Vert Be_j\Vert_2^2=d_{\max}(\pi)
		\end{align*}
		where we used that $\pi(h)$ and $B$ are unitary as well as the definition of $d_{\max}(\pi)$. The complex version of Hoeffding's inequality \cite[Corollary 8.10]{foucart2013mathematical} yields
		\begin{align*}
			\mathbb{P}\left( \bigg\vert \langle\big(\pi(g)\xi)_{g\in G}\big)e_h, Be_j \rangle \bigg\vert \geq u\right)\leq 2e^{-\frac{1}{2}\frac{u^2}{d_{\max}(\pi)}}.
		\end{align*}
		for all $u>0$. Taking the union bound over all choices of $h\in G$ and $j\in\{1,\ldots,n\}$ gives
		\begin{align*}
			\mathbb{P}\left( \max_{1\leq j\leq n,\,h\in G}  \bigg\vert \langle\big(\pi(g)\xi)_{g\in G}\big)e_h, Be_j \rangle \bigg\vert \geq u\right) \leq 2n \vert G\vert e^{-\frac{1}{2}\frac{u^2}{d_{\max}(\pi)}}.
		\end{align*}
		Choosing $u=\sqrt{2 d_{\max}(\pi)\ln\left(\frac{2n\vert G\vert}{\delta}\right)}$ finishes the proof.
	\end{proof}
\end{proposition}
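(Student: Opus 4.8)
The plan is to reduce the maximum over the $n\vert G\vert$ pairs $(j,h)$ to a single tail estimate via a union bound, where each individual inner product $\langle \pi(h)\xi, Be_j\rangle$ is controlled by a concentration inequality. The essential structural observation is that $\xi$ is a fixed linear image of the independent torus-distributed variables $(\epsilon_{\tau,\iota})_{\tau,\iota}$; consequently, for each fixed $(j,h)$ the inner product is a linear combination $\sum_{\tau,\iota} A_{\tau,\iota}\,\epsilon_{\tau,\iota}$ of these independent variables, to which the complex Hoeffding inequality \cite[Corollary 8.10]{foucart2013mathematical} applies directly.

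First I would expand $\langle \pi(h)\xi, Be_j\rangle = \sum_{l} \overline{(\pi(h)^\ast Be_j)_l}\, \xi_l$ and substitute the explicit construction of $\xi$. Because only $\sum_\tau d_{\pi_\tau}$ entries of $\xi$ are nonzero, it is convenient to introduce a bijection $\beta_\pi$ parametrizing $\mathrm{supp}(\xi)$ by the pairs $(\tau,\iota)$ and to absorb the scaling factors of $\xi$ into a coefficient vector $A$. Hoeffding then yields a subgaussian tail of the form $\mathbb{P}(\vert \langle \pi(h)\xi, Be_j\rangle\vert \geq u) \leq 2\exp\bigl(-u^2/(2\Vert A\Vert_2^2)\bigr)$, after which a union bound over all $(j,h)$ together with the choice $u = \sqrt{2\, d_{\max}(\pi)\ln(2n\vert G\vert/\delta)}$ produces the claim.

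The crux of the argument, and the step I expect to be most delicate, is the uniform bound $\Vert A\Vert_2^2 \leq d_{\max}(\pi)$ holding independently of $(j,h)$. This is exactly where the specific weights in the definition of $\xi$ pay off: for a block of multiplicity one the only weight is $\sqrt{d_{\pi_\tau}/(d_{\pi_\tau}-1+1)} = 1$, contributing with factor $1$, whereas for a block of multiplicity larger than one the two weights $\sqrt{d_{\pi_\tau}}$ and $\sqrt{d_{\pi_\tau}/(d_{\pi_\tau}-m_\pi(\pi_\tau)+1)}$ are both at most $\sqrt{d_{\pi_\tau}}$. The second weight being well defined and bounded by $\sqrt{d_{\pi_\tau}}$ relies on $m_\pi(\pi_\tau)\leq d_{\pi_\tau}$, which holds because $\pi\leq L$ and the multiplicity of any irreducible in the left regular representation equals its degree. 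Collecting these per-block bounds, each squared coefficient is at most $d_{\max}(\pi)$ times $\vert(\pi(h)^\ast Be_j)_{\beta_\pi(\tau,\iota)}\vert^2$; summing, using that $\beta_\pi$ selects a subset of the coordinates, and invoking unitarity of $\pi(h)$ and $B$ then gives $\Vert A\Vert_2^2 \leq d_{\max}(\pi)\Vert Be_j\Vert_2^2 = d_{\max}(\pi)$, completing the estimate.
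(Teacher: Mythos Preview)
Your proposal is correct and follows essentially the same route as the paper's proof: rewrite $\langle \pi(h)\xi, Be_j\rangle$ as a linear combination $\sum_{\tau,\iota} A_{\tau,\iota}\epsilon_{\tau,\iota}$ of the independent torus variables via the bijection $\beta_\pi$, bound $\Vert A\Vert_2^2\leq d_{\max}(\pi)$ using unitarity, apply the complex Hoeffding inequality, and finish with a union bound over the $n\vert G\vert$ pairs $(j,h)$. In fact your write-up of the key estimate $\Vert A\Vert_2^2\leq d_{\max}(\pi)$ is more explicit than the paper's, correctly singling out that multiplicity-one blocks contribute with weight $1$, that higher-multiplicity blocks contribute with weight at most $\sqrt{d_{\pi_\tau}}$, and that the well-definedness of the second weight hinges on $m_\pi(\pi_\tau)\leq d_{\pi_\tau}$, which is guaranteed by $\pi\leq L$.
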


\subsection{RIP for measurement matrices with randomized sampling set and generating vector}\label{sec:second_main_result}
	
	We are now ready to state and prove the main result of section 3. It establishes the restricted isometry property for $\Phi$ as it was defined in (\ref{eq:phi_for_mainresult2}). The primary advantage of Theorem \ref{mainresult2} over Theorem \ref{mainresult1} is that the basis $B$, in which the vector $x$ is sparse, can be arbitrary. This indirectly addresses the issue we had in Section \ref{sect:fix_zuf}, where different realizations of a representation exhibited varying recovery properties.
	
	\begin{theorem}\label{mainresult2}
	Let the measurement matrix
	\[
	\Phi=\frac{1}{\sqrt{m}} R_{\Omega}\left(\pi(g)\xi\right)_{g\in G}^{\ast} B\in\C^{m \times n}
	\]
	be as in Section \ref{subsec:likely_BOS}. If, for $\eta,\delta_1,\delta_2\in(0,1)$,
	\begin{align}
		\frac{m}{\ln(9m)}&\geq C_1\delta_1^{-2} d_{\max}(\pi) s \ln\left(\frac{2 n \vert 	G\vert}{1-\sqrt{1-\eta}}\right) \ln(4s)^2\ln(8n),\label{eq:number_measurements_mainresult2_1}\\
		m&\geq C_2 \delta_2^{-2} d_{\max}(\pi)s \ln\left(\frac{2n \vert G\vert}{1-\sqrt{1-\eta}}\right)  \ln\left(\frac{1}{1-\sqrt{1-\eta}}\right)\label{eq:number_measurements_mainresult2_2},
	\end{align}
	then with probability at least $1-\eta $ the restricted isometry constant $\delta_s$ of $\Phi$ satisfies \[\delta_s\leq \delta_1+\delta_1^2+\delta_2.
	\]
	Here, $C_1,C_2>0$ are absolute constants.
	\begin{proof}
		First, let's establish some notation. Whenever we have a random variable $Z=f(X,Y)$ which is defined by two independent random variables $X$ and $Y$ and some measurable function $f$, we write $Z^{X=x_0}$ for the random variable $f(x_0,Y)$. We will use \cite[Theorem 12.32]{foucart2013mathematical}. With a suitable choice of $C_1$ and $C_2$ it follows that for every realization $\xi_0$ of the random variable $\xi$ with
		\begin{align}\label{eq:BOS_condition_for_fixed_xi_uniform}
			\max_{1\leq j\leq n,\,h\in G} \bigg\vert \langle \big(\left(\pi(g)\xi_0\right)_{g\in G} \big)e_h,B e_j\rangle\bigg\vert < \sqrt{2\,d_{\max}(\pi)}\sqrt{\ln\left(\frac{2n\vert G\vert}{1-\sqrt{1-\eta}}\right)}
		\end{align}
		\cite[Theorem 12.32]{foucart2013mathematical} yields
		\begin{align}\label{eq:theo12.32_statement}
			\mathbb{P}\left( \delta_s^{\xi=\xi_0}\leq \delta_1+\delta_1^2+\delta_2 \right)\geq 1-\left(1-\sqrt{1-\eta}\right)=\sqrt{1-\eta}.
		\end{align}
		Here, we have used the fact that the columns of $(\pi(g)\xi)^{\ast}_{g\in G} B$ are an orthonormal system as proven in Proposition \ref{prop:columns_ONB} as well as the condition on the number of measurements in (\ref{eq:number_measurements_mainresult2_1}) and (\ref{eq:number_measurements_mainresult2_2}).
		Now, the idea is to integrate over all possible realizations $\xi_0$ of $\xi$ that satisfy (\ref{eq:BOS_condition_for_fixed_xi_uniform}). Let $f_{\Omega}$ be the probability density function of $\Omega$ with respect to the measure $\nu=\left(\bigotimes_{j=1}^m \mu\right)$ where $\mu$ is the normalized counting measure on $G$, i.e.
		\begin{align*}
			\frac{1}{\vert G\vert^m}=\mathbb{P}(\Omega=\Omega_0)=\intpar{\{\Omega_0\}}{}{f_{\Omega}}{\nu} 
		\end{align*}
		for every $\Omega_0\in G^m$.
		Further, let $f_{\xi}$ be the probability density functions of $\xi$. Thus,
		\begin{align*}
		&\intpar{\{\xi_0\in\C^n \,\mid\, \xi_0 \text{ satisfies } (\ref{eq:BOS_condition_for_fixed_xi_uniform}) \}}{}{\mathbb{P}\Big( \delta_s^{\xi=y} \leq \delta_1+\delta_1^2+\delta_2\Big) f_{\xi}(y)}{y}\\
			&=\intpar{\{\xi_0\in\C^n \,\mid\, \xi_0 \text{ satisfies } (\ref{eq:BOS_condition_for_fixed_xi_uniform}) \}}{}{\intpar{\big\{\Omega_0\in G^m\,\big\vert\, \delta_s^{\Omega=\Omega_0,\,\xi=y} \leq \delta_1+\delta_1^2+\delta_2\big\}}{}{ f_{\Omega}(t) f_{\xi}(y)}{\nu(t)}}{y}\\
			&=\intpar{\{\xi_0\in\C^n \,\mid\, \xi_0 \text{ satisfies } (\ref{eq:BOS_condition_for_fixed_xi_uniform}) \}}{}{\intpar{\big\{\Omega_0\in G^m\,\big\vert\, \delta_s^{\Omega=\Omega_0,\,\xi=y} \leq \delta_1+\delta_1^2+\delta_2\big\}}{}{ f_{\Omega,\xi}(t,y) }{\nu(t)}}{y}\\
			&=\intpar{\big\{(\Omega_0,\xi_0)\in G^m \times \C^n\,\big\vert\, \xi_0 \text{ satisfies } (\ref{eq:BOS_condition_for_fixed_xi_uniform}) \text{ and } \delta_s^{\Omega=\Omega_0,\,\xi=\xi_0} \leq \delta_1+\delta_1^2+\delta_2\big\}}{}{ f_{\Omega,\xi}(t,y)}{(\nu \otimes \leb^{n})(t,y)}
		\end{align*}
		where we used the independence of the random variables $\Omega$ and $\xi$ as well as Fubini's theorem. The following inequality is immediate
		\begin{align*}
			&\intpar{\big\{(\Omega_0,\xi_0)\in G^m \times \C^n\,\big\vert\, \xi_0 \text{ satisfies } (\ref{eq:BOS_condition_for_fixed_xi_uniform}) \text{ and } \delta_s^{\Omega=\Omega_0,\,\xi=\xi_0} \leq \delta_1+\delta_1^2+\delta_2\big\}}{}{ f_{\Omega,\xi}(t,y)}{(\nu \otimes \leb^{n})(t,y)}\\
			&\leq\intpar{\big\{(\Omega_0,\xi_0)\in G^m \times \C^n\,\big\vert\, \delta_s^{\Omega=\Omega_0,\,\xi=\xi_0} \leq \delta_1+\delta_1^2+\delta_2\big\}}{}{ f_{\Omega,\xi}(t,y)}{(\nu \otimes \leb^{n})(t,y)}.
		\end{align*}
		Now using the above as well as (\ref{eq:theo12.32_statement}) and Proposition \ref{prop:BOS_condition_with_high_prob} finishes the proof since we have
		\begin{align*}
			&\mathbb{P}\left(\delta_s\leq \delta_1+\delta_1^2+\delta_2 \right)\\
			&\geq \intpar{\{\xi_0\in\C^n \,\mid\, \xi_0 \text{ satisfies } (\ref{eq:BOS_condition_for_fixed_xi_uniform}) \}}{}{\mathbb{P}\Big( \delta_s^{\xi=y} \leq \delta_1+\delta_1^2+\delta_2\Big) f_{\xi}(y)}{y}\\
			&\geq \intpar{\{\xi_0\in\C^n \,\mid\, \xi_0 \text{ satisfies } (\ref{eq:BOS_condition_for_fixed_xi_uniform}) \}}{}{\sqrt{1-\eta}\, f_{\xi}(y)}{y}\\
			&=\sqrt{1-\eta}\,\,\mathbb{P}\left(\max_{1\leq j\leq n,\,h\in G} \bigg\vert \langle \big(\left(\pi(g)\xi\right)_{g\in G} \big)e_h,B e_j\rangle\bigg\vert < \sqrt{2\,d_{\max}(\pi)}\sqrt{\ln\left(\frac{2n\vert G\vert}{1-\sqrt{1-\eta}}\right)}\right)\\
			&\geq 1-\eta.
		\end{align*}
	\end{proof}
	\end{theorem}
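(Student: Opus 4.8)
The plan is to decouple the two independent sources of randomness---the generating vector $\xi$ and the sampling set $\Omega$---by a conditioning argument, and to invoke the standard restricted isometry estimate for randomly subsampled bounded orthonormal systems once a favourable realization of $\xi$ has been fixed.

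First I would note that, by Proposition \ref{prop:columns_ONB}, the columns of $(\pi(g)\xi)^{\ast}_{g\in G}B$ form an orthonormal system with respect to the normalized counting measure on $G$, and that this holds for \emph{every} realization of $\xi$. Hence, whenever a fixed realization $\xi_0$ additionally satisfies the boundedness estimate (\ref{eq:BOS_condition_for_fixed_xi_uniform}), this system is a bounded orthonormal system with constant $K=\sqrt{2\,d_{\max}(\pi)\ln\!\big(2n\vert G\vert/(1-\sqrt{1-\eta})\big)}$. For such a fixed $\xi_0$ only the sampling set $\Omega$ remains random, and $\Phi$ is exactly the randomly subsampled BOS matrix to which \cite[Theorem 12.32]{foucart2013mathematical} applies. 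Choosing the absolute constants $C_1,C_2$ so that the hypotheses (\ref{eq:number_measurements_mainresult2_1}) and (\ref{eq:number_measurements_mainresult2_2}) meet the requirements of that theorem with this $K$ and with failure probability $1-\sqrt{1-\eta}$, I would obtain, conditionally on $\xi=\xi_0$, that the restricted isometry constant of $\Phi$ is at most $\delta_1+\delta_1^2+\delta_2$ with probability (over $\Omega$ alone) at least $\sqrt{1-\eta}$.

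Next I would integrate this conditional estimate over the favourable realizations of $\xi$. Using the independence of $\xi$ and $\Omega$, Fubini's theorem lets me write the joint probability that $\xi$ satisfies (\ref{eq:BOS_condition_for_fixed_xi_uniform}) \emph{and} $\delta_s\leq\delta_1+\delta_1^2+\delta_2$ as the integral over the favourable set of $\xi$-realizations of the conditional $\Omega$-probability. Since the conditional bound $\sqrt{1-\eta}$ holds uniformly for all $\xi_0$ in that set, it can be pulled out of the $\xi$-integral. Proposition \ref{prop:BOS_condition_with_high_prob}, applied with $\delta=1-\sqrt{1-\eta}$, shows that $\xi$ lies in the favourable set with probability at least $\sqrt{1-\eta}$. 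Discarding the boundedness constraint inside the inner integral (which only enlarges the event), the product $\sqrt{1-\eta}\cdot\sqrt{1-\eta}=1-\eta$ yields $\mathbb{P}(\delta_s\leq\delta_1+\delta_1^2+\delta_2)\geq 1-\eta$, as claimed.

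The main obstacle is the careful bookkeeping of this two-stage randomization. One must invoke \cite[Theorem 12.32]{foucart2013mathematical} only for a \emph{fixed} $\xi_0$, so that the BOS constant $K$ is a genuine constant rather than a random variable; the boundedness event (\ref{eq:BOS_condition_for_fixed_xi_uniform}) is precisely what replaces the random $K$ by its worst-case value over the favourable set, and this is what permits extracting the uniform factor $\sqrt{1-\eta}$ from the $\xi$-integral. The other delicate point is the calibration of the failure budget: splitting $1-\eta$ as $\sqrt{1-\eta}\cdot\sqrt{1-\eta}$ assigns each of the two failure modes---violation of boundedness and violation of the RIP---its own budget $1-\sqrt{1-\eta}$, and it is exactly this choice that makes both the measurement conditions and the bound in Proposition \ref{prop:BOS_condition_with_high_prob} align to produce the final probability $1-\eta$.
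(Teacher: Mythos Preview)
Your proposal is correct and follows essentially the same approach as the paper: condition on a favourable realization $\xi_0$ satisfying (\ref{eq:BOS_condition_for_fixed_xi_uniform}), invoke Proposition~\ref{prop:columns_ONB} and \cite[Theorem~12.32]{foucart2013mathematical} to obtain the conditional RIP with probability at least $\sqrt{1-\eta}$, then integrate over the favourable $\xi$-set using independence, Fubini, and Proposition~\ref{prop:BOS_condition_with_high_prob} with $\delta=1-\sqrt{1-\eta}$ to arrive at $\sqrt{1-\eta}\cdot\sqrt{1-\eta}=1-\eta$. The paper carries out the integration more explicitly via joint densities, but the structure and all key ingredients are identical.
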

	
	It is important to note that the above Theorem provides especially good bounds on the number of measurements for representations that have either no multiplicities (since then $d_{\max}=1$) or have only small-dimensional subrepresentations. We want to give an example for the latter case: Consider the dihedral group $D_n$. Then, all irreducible representations have dimension 1 or 2 \cite[Section 5.3]{serre1977linear}. Thus, for any $\pi\leq L$ it holds $d_{\max}\leq 2$.\par
	We finish this section, with a reformulation of Theorem \ref{mainresult2}. Since usually one likes to have only one condition on the number of measurements, we slightly weakening the statement of Theorem \ref{mainresult2} to obtain the following corollary. A similar type of reformulation can be found in \cite{foucart2013mathematical}. However, we provide a proof for self-containment in Appendix \ref{subsec:appendix_proof_of_corollary}.
	
	\begin{corollary}\label{cor:one_condition_on_m_result}
		Let the measurement matrix
		\[
		\Phi=\frac{1}{\sqrt{m}} R_{\Omega}\left(\pi(g)\xi\right)_{g\in G}^{\ast} B\in\C^{n \times n}
		\]
		be as in Section \ref{subsec:likely_BOS}. If, for $\eta,\delta \in(0,1)$,
		\begin{align}\label{eq:one_inequ_for_m_uniform_recovery}
			m\geq C \delta^{-2} s \,d_{\max}(\pi) \ln(8\vert G\vert)\ln\left(\frac{2}{\eta}\right) \max\bigg\{ &\ln(4s)^2\ln(8n)\nonumber\\
			&\cdot\ln\left(\delta^{-2}s\,d_{\max}(\pi)\ln(8\vert G\vert)\ln\left(\frac{2}{\eta}\right) \right),\ln\left(\frac{2}{\eta}\right)\bigg\},
		\end{align}
		then with probability at least $1-\eta $ the restricted isometry constant $\delta_s$ of $\Phi$ satisfies \[\delta_s\leq 3\delta.
		\]
		Here, $C>0$ is an absolute constant.
	\end{corollary}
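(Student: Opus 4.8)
The plan is to derive Corollary \ref{cor:one_condition_on_m_result} directly from Theorem \ref{mainresult2} by specializing $\delta_1=\delta_2=\delta$ and then showing that the single hypothesis (\ref{eq:one_inequ_for_m_uniform_recovery}) is strong enough to force both measurement conditions (\ref{eq:number_measurements_mainresult2_1}) and (\ref{eq:number_measurements_mainresult2_2}). With this choice the conclusion of Theorem \ref{mainresult2} reads $\delta_s\leq \delta+\delta^2+\delta\leq 3\delta$, where we used $\delta^2\leq\delta$ for $\delta\in(0,1)$, and the success probability is unchanged at $1-\eta$. Thus everything reduces to the purely deterministic bookkeeping that (\ref{eq:one_inequ_for_m_uniform_recovery}) implies (\ref{eq:number_measurements_mainresult2_1}) and (\ref{eq:number_measurements_mainresult2_2}).

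First I would clean up the $\eta$-dependent factors. From $1+\sqrt{1-\eta}\leq 2$ we get $1-\sqrt{1-\eta}=\eta/(1+\sqrt{1-\eta})\geq \eta/2$, hence both $\ln\big(1/(1-\sqrt{1-\eta})\big)\leq\ln(2/\eta)$ and $\ln\big(2n|G|/(1-\sqrt{1-\eta})\big)\leq\ln(4n|G|/\eta)$. Next I would exploit that, by the standing assumption of Section \ref{subsec:likely_BOS}, $\pi\leq L$ is a subrepresentation of the regular representation, so its degree satisfies $n\leq|G|$. Consequently $\ln(4n|G|/\eta)\leq\ln(4|G|^2/\eta)\leq 2\ln(8|G|)+\ln(2/\eta)$, and since $\ln(8|G|)\geq\ln 8$ and $\ln(2/\eta)\geq\ln 2$ this is in turn bounded by $c_0\,\ln(8|G|)\ln(2/\eta)$ for an absolute constant $c_0$. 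In this way the common factor $\ln\big(2n|G|/(1-\sqrt{1-\eta})\big)$ appearing in Theorem \ref{mainresult2} gets absorbed into the prefactor $\ln(8|G|)\ln(2/\eta)$ of the corollary.

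With these reductions, condition (\ref{eq:number_measurements_mainresult2_2}) follows whenever $m\gtrsim \delta^{-2}d_{\max}(\pi)s\,\ln(8|G|)\ln(2/\eta)^2$, which is exactly what the second branch of the maximum in (\ref{eq:one_inequ_for_m_uniform_recovery}) guarantees. For condition (\ref{eq:number_measurements_mainresult2_1}), the same absorption shows it suffices to have $m/\ln(9m)\gtrsim K$, where $K:=\delta^{-2}d_{\max}(\pi)s\,\ln(8|G|)\ln(2/\eta)\ln(4s)^2\ln(8n)$. Converting this into a clean lower bound on $m$ alone is the technical heart of the argument: I would invoke the standard lemma (as in \cite{foucart2013mathematical}) stating that $m/\ln(9m)\geq K$ holds once $m\geq c\,K\ln(M)$ for a suitable auxiliary quantity $M$. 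Here I would take $M=\delta^{-2}s\,d_{\max}(\pi)\ln(8|G|)\ln(2/\eta)$, which is precisely the argument of the inner logarithm in the first branch of (\ref{eq:one_inequ_for_m_uniform_recovery}), so that the first branch reads $m\geq C\,K\ln(M)$ with $K=M\ln(4s)^2\ln(8n)$.

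The main obstacle is precisely this last conversion. One must verify that, when $m$ meets the first branch of (\ref{eq:one_inequ_for_m_uniform_recovery}), the quantity $\ln(9m)$ is controlled by a constant multiple of $\ln(M)$. Writing the first branch as $m\geq c\,M\,\ln(4s)^2\ln(8n)\ln(M)$, one has $\ln(9m)=\ln M+\ln\!\big(\ln(4s)^2\ln(8n)\big)+\ln\ln(M)+\ln(9c)$; every term after the leading $\ln M$ is of order $\ln\ln$ relative to it and can be dominated by $\ln M$ after enlarging the absolute constant $C$. The delicacy lies in handling the small-parameter regimes, where $\ln M$ is only bounded below by an absolute constant, uniformly, which is exactly what the freedom to choose $C$ large provides. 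This routine but careful estimate is deferred to Appendix \ref{subsec:appendix_proof_of_corollary}.
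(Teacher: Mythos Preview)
Your proposal is correct and follows essentially the same route as the paper's own proof in Appendix~\ref{subsec:appendix_proof_of_corollary}: set $\delta_1=\delta_2=\delta$, use $1-\sqrt{1-\eta}\geq\eta/2$ and $n\leq|G|$ to absorb $\ln\!\big(2n|G|/(1-\sqrt{1-\eta})\big)$ into $\ln(8|G|)\ln(2/\eta)$, read off (\ref{eq:number_measurements_mainresult2_2}) from the second branch of the maximum, and for (\ref{eq:number_measurements_mainresult2_1}) bound $\ln(9m)$ term-by-term by a constant multiple of $\ln M$ with $M=\delta^{-2}s\,d_{\max}(\pi)\ln(8|G|)\ln(2/\eta)$. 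The only oddity is your final sentence deferring the estimate to Appendix~\ref{subsec:appendix_proof_of_corollary}, which is precisely the place this proof lives; you should simply carry out the bound rather than defer it.
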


	We note that Theorem \ref{mainresult2_low_key} is an immediate consequence of the above corollary.\par	
	Let's revisit our thoughts from the introduction of Section \ref{sect:random_xi_and_omega}. There, we noted that unitarily equivalent representations can have different $s$-sparse recovery properties, if a sampling set is fixed beforehand. By randomizing $\Omega$ in Theorem \ref{mainresult2}, we have proven a recovery result where the unitary matrix $B$, describing a basis in which $x$ has to be sparse, can be arbitrary. Thus, the question arises whether this fixes our problem. It indeed will.
	\begin{remark}\label{rem:equiv_rep}
	Let $\rho\leq L$ be a unitary representation. Then, there exists a unitary matrix $V\in\C^{n\times n}$ such that $\rho(g)=V^{\ast} \pi(g)V$ for all $g\in G$ with $\pi$ having block-diagonal form as given in (\ref{eq:block_diag_form_pi}). Fix a unitary matrix $B\in\C^{n\times n}$. Theorem \ref{mainresult2} implies that
	\[
	\Phi_{\pi}=\frac{1}{\sqrt{m}} R_{\Omega} \big(\pi(g)\xi\big)_{g\in G}^{\ast}B
	\]
	does $s$-sparse recovery with high probability where $\xi$ is as in Section \ref{subsec:likely_BOS}. Equation (\ref{eq:relation_meas_mat_equiv_rep}) implies for $\xi_{\rho}=V^{\ast}\xi$ that
	\begin{align*}
	 	\Phi_{\rho}=\frac{1}{\sqrt{m}} R_{\Omega} \big(\rho(g)\xi_{\rho}\big)_{g\in G}^{\ast} B =\frac{1}{\sqrt{m}} R_{\Omega} \big(\pi(g)\xi\big)_{g\in G}^{\ast}VB.
	\end{align*}
	Again, Theorem \ref{mainresult2} implies that $\Phi_{\rho}$ does $s$-sparse recovery with high probability. Hence, $\pi$ and $\rho$ have the same recovery properties. It follows that unitarily equivalent representations have the same $s$-sparse recovery properties when $\Omega$ is randomized. This also answers the question why most of our assumptions on $\pi$ in Section \ref{subsec:likely_BOS} were not restrictive.
	\end{remark}

\appendix
	\section{}
	\subsection{Induced representations}\label{subsec:appendix_induced_rep}
	We present the necessary background about induced representations.
	\begin{definition}
		Let $H$ be subgroup of $G$ and let $\sigma\colon H \to \text{GL}(W)$ be a representation of $H$. Define a vector space
		\begin{align*}
			V=\{f\colon G\to W\,\vert\, f(hg)=\sigma(h)f(g)\,\,\,\,\forall h\in H, g\in G\}.
		\end{align*}
		Further, define the group homomorphism $\textnormal{Ind}_{H}^G\sigma\colon G\to \text{GL}(V)$ by
		\begin{align*}
			(\textnormal{Ind}_H^G\sigma (g_1)f)(g_2)=f(g_2g_1)
		\end{align*}
		for all $g_1,g_2\in G$. $\textnormal{Ind}_{H}^G\sigma$ is called the induced representation from $H$ up to $G$. 
	\end{definition}

	It is easy to check that $\textnormal{Ind}_H^G$ is indeed a representation.	Now, our goal is to give a specific  equivalent representation to $\textnormal{Ind}_H^G\sigma$. The construction follows the arguments presented in \cite[Chapter 2]{kaniuth2013induced}.\par
	Fix a cross-section $\gamma\colon H\setminus G\to G$, i.e. a mapping such that $\gamma(\upsilon)\in \upsilon$ for every $\upsilon\in H\setminus G$, and define $T\colon W^{H\setminus G}\to V$ by
	\begin{align*}
		(Tf)(h\gamma(\upsilon))=\sigma(h)f(\upsilon).
	\end{align*}
	This is well-defined: For $g\in G$ there is a unique $\upsilon\in H\setminus G$ such that $g\in \upsilon$. So there is a unique $h\in H$ such that $g=h\gamma(\upsilon)$. So, $g$ can be uniquely written as $h\gamma(\upsilon)$.	It remains to check that $Tf\in V$ for every $f\in W^{H\setminus G}$. Let $f\in W^{H\setminus G}$, $h\in H$ and $g=\tilde{h}\gamma(\upsilon)\in G$ as before. Then,
	\begin{align*}
		(Tf)(hg)=(Tf)\!\left(h\tilde{h}\gamma(\upsilon)\!\right)=\sigma\!\left(h\tilde{h} \right)\!f(\upsilon)=\sigma(h)\sigma\!\left(\tilde{h} \right)\!f(\upsilon)=\sigma(h)(Tf)\!\left(\tilde{h}\gamma(\upsilon)\!\right)=\sigma(h)(Tf)\left(g\right).
	\end{align*}
	It is easy to check that $T$ is linear. Now, let $f_1,f_2\in W^{H\setminus G}$ with $Tf_1=Tf_2$. This implies that
	\begin{align*}
		f_1(\upsilon)=\sigma(e)f_1(\upsilon)=(Tf_1)(e\gamma(\upsilon))=(Tf_2)(e\gamma(\upsilon))=\sigma(e)f_2(\upsilon)=f_2(\upsilon)
	\end{align*}
	for all $\upsilon\in H\setminus G$. Thus, $f_1=f_2$ and $T$ is injective. Lastly, we show that $T$ is surjective. Let $F\in V$. Then, define $f\colon H\setminus G\to W,\, \upsilon\mapsto F(\gamma(\upsilon))$. It holds
	\begin{align*}
		(Tf)(h\gamma(\upsilon))=\sigma(h)f(\upsilon)=\sigma(h)F(\gamma(\upsilon))=F(h\gamma(\upsilon))
	\end{align*}
	for all $h\in H$ and $\upsilon\in H\setminus G$.\par
	Since $T$ is bijective, it has an inverse $T^{-1}\colon V\to W^{H\setminus G}$ and it is easy to check that $T^{-1}$ is given by $T^{-1}F=F\circ \gamma$. Now, we want to determine $T^{-1}(\textnormal{Ind}_H^G\sigma)(g) T$ for every $g\in G$. Let $g\in G$, $f\in W^{H\setminus G}$ and $\upsilon \in H\setminus G$. First note that
	\begin{align*}
		\gamma(\upsilon)g=\left( \gamma(\upsilon)g \,\gamma(\upsilon g)^{-1} \right)\gamma(\upsilon g)
	\end{align*}
	with $\gamma(\upsilon)g \,\gamma(\upsilon g)^{-1}\in H$. Thus, we get
	\begin{align*}
		\big(T^{-1}(\textnormal{Ind}_H^G\sigma)(g) Tf\big)(\upsilon)&=\big((\textnormal{Ind}_H^G\sigma)(g) Tf\big)(\gamma(\upsilon))\\
		&=(Tf)(\gamma(\upsilon)g)\\
		&=(Tf)\left(\left( \gamma(\upsilon)g \,\gamma(\upsilon g)^{-1} \right)\gamma(\upsilon g)\right)\\
		&=\sigma\left( \gamma(\upsilon)g \,\gamma(\upsilon g)^{-1} \right)f(\upsilon g).
	\end{align*}
	Now assume that $W$ is an inner product space and $\sigma$ is unitary. Then, $V$ becomes an inner product space via
	\begin{align*}
		\langle f_1,f_2\rangle_V:=\sum_{Hg\in H\setminus G}\langle f_1(g),f_2(g)\rangle_W.
	\end{align*}
	Let us first check that the above sum is well-defined. Let $f_1,f_2\in V$ and let $\gamma_1,\gamma_2\colon H\setminus G\to G$ be two cross-sections. For every $\upsilon\in H\setminus G$ there exists a unique $h_{\upsilon}\in H$ such that $h_{\upsilon}\gamma_2(\upsilon)=\gamma_1(\upsilon)$. Then,
	\begin{align*}
		\sum_{\upsilon\in H\setminus G} \langle f_1(\gamma_1(\upsilon)), f_2(\gamma_1(\upsilon))\rangle_W &=\sum_{\upsilon\in H\setminus G} \langle f_1(h_{\upsilon}\gamma_2(\upsilon)), f_2(h_{\upsilon}\gamma_2(\upsilon))\rangle_W\\
		&=\sum_{\upsilon\in H\setminus G} \langle \sigma(h_{\upsilon}) f_1(\gamma_2(\upsilon)), \sigma(h_{\upsilon}) f_2(\gamma_2(\upsilon))\rangle_W\\
		&=\sum_{\upsilon\in H\setminus G} \langle  f_1(\gamma_2(\upsilon)),  f_2(\gamma_2(\upsilon))\rangle_W
	\end{align*}
	since $f_1,f_2\in V$ and $\sigma$ is unitary. Thus, $\langle\cdot,\cdot\rangle_V$ is well-defined. The inner product properties are obvious.	The representation $\textnormal{Ind}_H^G \sigma$ is unitary with respect to this inner product since
	\begin{align*}
			\langle \textnormal{Ind}_H^G\sigma(g)f_1, \textnormal{Ind}_H^G\sigma(g)f_2\rangle_V&=\sum_{\upsilon\in H\setminus G} \langle \textnormal{Ind}_H^G\sigma(g) f_1(\gamma(\upsilon)),\textnormal{Ind}_H^G\sigma(g) f_2(\gamma(\upsilon))\rangle_W\\
			&=\sum_{\upsilon\in H\setminus G} \langle f_1(\gamma(\upsilon )g)),f_2(\gamma(\upsilon )g))\rangle_W\\
			&=\sum_{\upsilon\in H\setminus G} \langle f_1(\gamma(\upsilon ))),f_2(\gamma(\upsilon )))\rangle_W\\
			&=\langle f_1,f_2\rangle_V
	\end{align*}
	for all $f_1,f_2\in V$ where we used that $H\setminus G\to H\setminus G,\,\upsilon\to\gamma(\upsilon)g$ is bijective.	Further, T becomes an isometry: For all $f_1,f_2\in W^{H\setminus G}$ we get
	\begin{align*}
		\langle Tf_1,Tf_2\rangle_V=\sum_{\upsilon\in H\setminus G} \langle Tf_1(\gamma(\upsilon)),Tf_2(\gamma(\upsilon))\rangle_W=\sum_{\upsilon\in H\setminus G} \langle f_1(\upsilon)),f_2(\upsilon))\rangle_W=:\langle f_1,f_2\rangle_{W^{H\setminus G}},
	\end{align*}
	since the scalar product $\langle\cdot,\cdot\rangle_V$ is independent of the chosen summation.
	
	\subsection{Proof of Corollary \ref{cor:one_condition_on_m_result}}\label{subsec:appendix_proof_of_corollary}
		The following inequality will be helpful to prove the statement
		\begin{align}\label{eq:helpful_ineq_1}
			\ln\left( \!\frac{2n\vert G\vert}{1-\sqrt{1-\eta}} \!\right)\!\leq\! 2\ln(2\vert G\vert)+\ln\left(\!\frac{1}{1-\sqrt{1-\eta}}\!\right)\!\leq\! 2\ln(2\vert G\vert)+ \ln\left(\!\frac{2}{\eta}\!\right)\!\leq\! 2\ln(8\vert G\vert) \ln\left(\!\frac{2}{\eta}\!\right).
		\end{align}
		We start by proving that (\ref{eq:one_inequ_for_m_uniform_recovery}) already implies (\ref{eq:number_measurements_mainresult2_1}) for some suitable universal constant $C>0$.  The function $x\mapsto \frac{x}{\ln(9x)}$ is increasing on $[1,\infty)$. Thus, the assumption on the number of measurements implies
		\begin{align*}
			\frac{m}{\ln(9m)}&\geq \frac{C \delta^{-2} s d_{\max}(\pi) \ln(8\vert G\vert)\ln\left(\frac{2}{\eta}\right)  \ln(4s)^2\ln(8n) \ln\left(\delta^{-2}sd_{\max}(\pi)\ln(8\vert G\vert)\ln\left(\frac{2}{\eta}\right) \right)}{\ln\left( 9C \delta^{-2} s d_{\max}(\pi) \ln(8\vert G\vert)\ln\left(\frac{2}{\eta}\right)  \ln(4s)^2\ln(8n) \ln\left(\delta^{-2} s d_{\max}(\pi)\ln(8\vert G\vert)\ln\left(\frac{2}{\eta}\right) \right)\right)}\\
			&=C \delta^{-2} s d_{\max}(\pi) \ln(8\vert G\vert)\ln\left(\frac{2}{\eta}\right)  \ln(4s)^2\ln(8n)\\
			&\cdot \frac{\ln\left(\delta^{-2}s d_{\max}(\pi)\ln(8\vert G\vert)\ln\left(\frac{2}{\eta}\right) \right)}{\ln\left( 9C \delta^{-2} s d_{\max}(\pi) \ln(8\vert G\vert)\ln\left(\frac{2}{\eta}\right)  \ln(4s)^2\ln(8n) \ln\left(\delta^{-2}s d_{\max}(\pi)\ln(8\vert G\vert)\ln\left(\frac{2}{\eta}\right) \right)\right)}
		\end{align*}
		Now, using the following two inequalities
		\begin{align*}
			&\ln\left(9C\right)\leq \ln\left(9C\right)3 \ln\left(\delta^{-2}s d_{\max}(\pi)\ln(8\vert G\vert)\ln\left(\frac{2}{\eta}\right) \right) \quad\text{and}\\
			& \ln\left(\ln(4s)^2\right)=2\ln(\ln(4s))\stackrel{s\geq 1}{\leq} 2\ln(s\ln(8)\ln(2))\leq 2\ln\left(\delta^{-2}s d_{\max}(\pi)\ln(8\vert G\vert)\ln\left(\frac{2}{\eta}\right) \right)
		\end{align*}
		as well as the inequality (\ref{eq:helpful_ineq_1}) gives
		\begin{align*}
			\ln\left(\frac{m}{9m}\right)&\geq C \delta^{-2} s d_{\max}(\pi) \ln(8\vert G\vert)\ln\left(\frac{2}{\eta}\right)  \ln(4s)^2\ln(8n)\frac{1}{3\ln(9C)+1+2+3+1}\\
			&\geq \frac{C}{6\ln(9C)+14} \delta^{-2} s d_{\max}(\pi)  \ln(4s)^2\ln(8n) \ln\left( \frac{2n\vert G\vert}{1-\sqrt{1-\eta}} \right).
		\end{align*}
		So choosing $C>0$ large enough, i.e. such that
		\begin{align*}
			\frac{C}{6\ln\left(9C\right)+14}\geq C_1,
		\end{align*}
		establishes inequality (\ref{eq:number_measurements_mainresult2_1}).\\
		Now consider the second required inequality stated in (\ref{eq:number_measurements_mainresult2_1}). The inequality (\ref{eq:one_inequ_for_m_uniform_recovery}) implies
		\begin{align*}
			m&\geq C \delta^{-2} s d_{\max}(\pi) \ln(8\vert G\vert)\ln\left(\frac{2}{\eta}\right)\ln\left(\frac{2}{\eta}\right)\stackrel{(\ref{eq:helpful_ineq_1})}{\geq} C \delta^{-2} s d_{\max}(\pi)\frac{1}{2}\ln\left( \frac{2n\vert G\vert}{1-\sqrt{1-\eta}} \right) \ln\left(\frac{2}{\eta}\right)\\
			&\geq \frac{C}{2}  \delta^{-2} s d_{\max}(\pi)\ln\left( \frac{2n\vert G\vert}{1-\sqrt{1-\eta}} \right) \ln\left(\frac{1}{1-\sqrt{1-\eta}}\right)
		\end{align*}
		Choosing $C>0$ such that $\frac{C}{2}\geq C_2$ establishes (\ref{eq:number_measurements_mainresult2_1}). The statement follows now from Theorem \ref{mainresult2}.

	\section*{Acknowledgements}
	We thank H. Rauhut (LMU Munich) for fruitful discussion and advice. The authors acknowledge funding by the Deutsche Forschungsgemeinschaft (DFG, German Research Foundation) - Project number 442047500 through the Collaborative Research Center "Sparsity and Singular Structures" (SFB 1481).

	\bibliography{Sparse_Recovery_from_Group_Orbits_paper_arxiv.bib}
	\bibliographystyle{plain}
	\vspace{0.5cm}

\end{document}